\LetLtxMacro{\ORIGselectlanguage}{\selectlanguage}
\DeclareRobustCommand{\selectlanguage}[1]{%
  \@ifundefined{alias@\string#1}
    {\ORIGselectlanguage{#1}}
    {\begingroup\edef\x{\endgroup
       \noexpand\ORIGselectlanguage{\@nameuse{alias@#1}}}\x}%
}
\newcommand{\definelanguagealias}[2]{%
  \@namedef{alias@#1}{#2}%
}
\theoremstyle{definition}
\newtheorem{definition}{Definition}[section]
\newtheorem{theorem}{Theorem}[section]
\newtheorem{lemma}{Lemma}[section]
\newtheorem{corollary}{Corollary}[section]
\newtheorem{proposition}{Proposition}[section]
\newcommand{\be}{\begin{equation}}
\newcommand{\ee}{\end{equation}}
\newcommand{\bea}{\begin{eqnarray}}
\newcommand{\eea}{\end{eqnarray}}
\newcommand{\rAngle}{\rangle \! \rangle }
\newcommand{\lAngle}{\langle \! \langle }
\renewcommand{\bar}[1]{\overline{ #1}}
\newcommand{\ri}{\mathrm{i}}
\newcommand{\rd}{\partial}
\newcommand{\bbrakket}[2]{\mbox{$ \langle\!\langle #1 | #2 \rangle\!\rangle $}}
\newcommand{\kket}[1]{\mbox{$| #1 \rangle\!\rangle$}}
\newcommand{\bbra}[1]{\mbox{$\langle\!\langle #1 |$}}
\newcommand{\bbA}{\mathbb{A}}
\newcommand{\bbC}{\mathbb{C}}
\newcommand{\bbD}{\mathbb{D}}
\newcommand{\bbH}{\mathbb{H}}
\newcommand{\bbZ}{\mathbb{Z}}
\newcommand{\bbG}{\mathbb{G}}
\newcommand{\bbT}{\mathbb{T}}
\newcommand{\calB}{\mathcal{B}}
\newcommand{\calC}{\mathcal{C}}
\newcommand{\calE}{\mathcal{E}}
\newcommand{\calF}{\mathcal{F}}
\newcommand{\calH}{\mathcal{H}}
\newcommand{\calL}{\mathcal{L}}
\newcommand{\calM}{\mathcal{M}}
\newcommand{\calN}{\mathcal{N}}
\newcommand{\calO}{\mathcal{O}}
\newcommand{\calR}{\mathcal{R}}
\newcommand{\calZ}{\mathcal{Z}}
\newcommand{\cH}{\mathcal{H}}
\newcommand{\U}{\mathrm{U}}
\newcommand{\rw}{\mathrm{w}}
\newcommand{\sfg}{\mathsf{g}}
\newcommand{\sfh}{\mathsf{h}}
\newcommand{\sfk}{\mathsf{k}}
\newcommand{\appref}[1]{Appendix~\ref{#1}}
\newcommand{\eqnref}[1]{Eq.~\eqref{#1}}
\newcommand{\figref}[1]{Fig.~\ref{#1}}
\tikzset{
	mid arrow/.style={postaction={decorate,decoration={
				markings,
				mark=at position .575 with {\arrow[#1]{stealth}}
	}}},
	near arrow/.style={postaction={decorate,decoration={
				markings,
				mark=at position .275 with {\arrow[#1]{stealth}}
	}}},
	far arrow/.style={postaction={decorate,decoration={
				markings,
				mark=at position .800 with {\arrow[#1]{stealth}}
	}}},
}
\pgfmathsetmacro\MathAxis{height("$\vcenter{}$")}
\def\l@subsection#1#2{}
\def\l@subsubsection#1#2{}
\newtheorem*{lemma*}{Lemma}
\begin{document}

\title{Tensor network formulation of symmetry protected topological phases in mixed states}

\author{Hanyu Xue}
\affiliation{Yuanpei College, Peking University, Beijing 100871, People’s Republic of China}

\author{Jong Yeon Lee}
\affiliation{Department of Physics, University of California, Berkeley, California 94720, USA}
\affiliation{Department of Physics, University of Illinois at Urbana-Champaign, Urbana, Illinois 61801, USA}

\author{Yimu Bao}
\affiliation{Kavli Institute for Theoretical Physics, University of California, Santa Barbara, CA 93106, USA}

\begin{abstract}
We define and classify symmetry-protected topological (SPT) phases in mixed states based on the tensor network formulation of the density matrix. In one dimension, we introduce strong injective matrix product density operators (MPDO), which describe a broad class of short-range correlated mixed states, including the locally decohered SPT states. We map strong injective MPDO to a pure state in the doubled Hilbert space and define the SPT phases according to the cohomology class of the symmetry group in the doubled state. Although the doubled state exhibits an enlarged symmetry, the possible SPT phases are also constrained by the Hermiticity and the semi-positivity of the density matrix. We here obtain a complete classification of SPT phases with a direct product of strong $G$ and weak $K$ unitary symmetry given by the cohomology group $\mathcal{H}^2(G, \text{U}(1))\oplus\mathcal{H}^1(K, \mathcal{H}^1(G, \text{U}(1)))$. The SPT phases in our definition are preserved under symmetric local circuits consisting of non-degenerate channels. This motivates an alternative definition of SPT phases according to the equivalence class of mixed states under a ``one-way" connection using symmetric non-degenerate channels. In locally purifiable MPDO with strong symmetry, we prove that this alternative definition reproduces the cohomology classification. We further extend our results to two-dimensional mixed states described by strong semi-injective tensor network density operators and classify the possible SPT phases.
\end{abstract}

\maketitle

\section{Introduction}
Over the past decade, significant theoretical progress has been made to classify and understand symmetry-protected topological (SPT) orders in short-range entangled quantum ground states~\cite{chen2011classification,schuch2011classifying,chen2011complete,chen2013symmetry, pollmann2012symmetry,gu2009tensor,chen2012symmetry,senthil2015symmetry}.
Such states feature anomalous boundary modes, degenerate entanglement spectra~\cite{pollmann2010entanglement}, and can be detected using non-decaying string order parameters in one dimension and perimeter-law decaying nonlocal order parameters in higher dimensions~\cite{den1989preroughening,kennedy1992hidden,pollmann2012detection}. 
These features are robust under constant-depth local unitary circuit evolution unless the protecting symmetry is broken.

Symmetry-protected topological phases in mixed states, on the other hand, have been less explored outside the thermal Gibbs states~\cite{roberts2017symmetry}. 
Such a theory is needed in light of recent quantum simulation experiments to characterize SPT states prepared using non-equilibrium protocols, which are generally non-thermal mixed states due to environmental decoherence~\cite{de2019observation,sompet2022realizing,bluvstein2022quantum}. 
This raises pivotal questions about the resilience of the SPT order against local decoherence and the methodologies for defining and classifying mixed-state SPT phases.

Currently, SPT phases in mixed states have been investigated and defined using several approaches.
The first approach builds on various conventional diagnostics of SPT pure states.
Specifically, Ref.~\cite{mcginley2020fragility} characterizes the SPT mixed states according to the coherence of degenerate edge states and shows that the SPT phase protected by unitary symmetry is robust under decoherence. 
Ref.~\cite{de2022symmetry} uses non-decaying string order parameters in mixed states as a defining feature. 
Ref.~\cite{lee2022symmetry} defines SPT mixed states according to the anomaly in the doubled-state formulation of density matrix using the Choi-Jamiolkowski isomorphism~\cite{CHOI1975, JAMIOLKOWSKI1972}.
The second approach defines SPT phases according to the equivalence class of mixed states under local symmetric quantum channel circuits~\cite{ma2023average}. 
Here, one requires a ``two-way" connection between states in the same phase~\cite{coser2019classification,ma2023average,sang2023mixed}.
Specifically, two mixed states $\rho_1$ and $\rho_2$ belong to the same phase if there exist symmetric local quantum channels $\calN_{1,2}$ such that $\rho_2 = \calN_1[\rho_1]$ and $\rho_1 = \calN_2[\rho_2]$.
However, it remains unclear whether the definitions based on these two distinct approaches are compatible and produce the same classification of SPT mixed states.

Notably, the SPT phases in these definitions are for short-range correlated mixed states and are different from the mixed-state phases determined by the generalized separability criteria~\cite{hastings2011topological,chen2024separability,chen2023symmetry}.
Specifically, a symmetric mixed state is separable if written as an ensemble of pure states generated by symmetric short-depth unitary circuits~\cite{chen2023symmetry}.
The key difference is that a separable mixed state can possess long-range classical correlations; the mixed states in the same phase are related by local operations and classical communication, whose nonlocal nature can lead to a distinct classification of phases.

We here focus on the non-thermal mixed states with short-range correlation.
The issue of defining SPT phases in these states is partly because the notion of an energy gap is missing.
The energy gap is central for establishing equivalence classes of quantum ground states based on whether two states can be adiabatically connected without the gap closing.
Two states in the same equivalence class are then related by finite-time unitary evolution and share the same long-range defining features.
Moreover, the gap ensures short-range correlation in ground states, effectively ruling out the possibility of critical states or unconventional states given by a superposition of different SPT states.
Hence, a key for defining mixed-state phases is to generalize the notion of gap.

The tensor network formulation of SPT pure states introduces a concept equivalent to the energy gap, which hints at a possible extension to mixed states.
In particular, one-dimensional SPT pure states are described by matrix product states (MPS) consisting of local injective tensors~\cite{schuch2011classifying,cirac2021matrix}.
The injectivity condition plays the role of the energy gap and enforces short-range correlation.
In injective MPS, the SPT phases are characterized by the projective representation of the symmetry group on the virtual bond, which governs the defining features of SPT order.
The states in the same SPT phase can be always related by finite-time adiabatic evolution along a path of symmetric injective MPS, while states in different phases cannot be symmetrically connected unless the injectivity condition is violated.

In this work, we generalize the injectivity condition to tensor network formulation of density operators and define SPT phases in mixed states.
In one dimension, we introduce a broad class of short-range correlated mixed states described by matrix product density operators (MPDO) with strong injectivity conditions, which strictly generalize pure states described by injective MPS and include locally decohered SPT pure states studied in Ref.~\cite{de2022symmetry,ma2023average,lee2022symmetry} as special examples.
To define the SPT phases, we formulate the strong injective MPDO as a pure state in the doubled Hilbert space, described by an injective MPS; the mixed-state SPT phases are then defined according to the projective representation of the symmetry group in the doubled state.

We classify the mixed-state SPT phases protected by a direct product $G \times K$ of strong $G$ and weak $K$ symmetry.
In this case, the doubled state exhibits an enlarged symmetry $\bbG = (G\times G\times K)\rtimes \bbZ_2^\bbH$ as the strong symmetry can act on both ket and bra Hilbert space, and the state is further invariant under the Hermitian conjugate of the density matrix.
While the enlarged symmetry may suggest an enriched classification, the possible phases are also highly constrained because not every doubled state corresponds to a physical density matrix.
In particular, the possible SPT phases are constrained by the Hermiticity and semi-positivity of the density matrix.
In the case of unitary physical symmetry, we obtain a classification given by the cohomology group $\calH^2(G,\U(1))\oplus\calH^1(G,\calH^1(K,\U(1)))$, which is consistent with that of the average SPT phases~\cite{ma2023average}.
For the anti-unitary symmetry, such as time reversal, we show it cannot protect nontrivial SPT phases even if the strong time-reversal symmetry is preserved, which is expected due to its fragility pointed out in Ref.~\cite{mcginley2020fragility}.

The SPT phases in our definition are robust under non-degenerate symmetric local channel circuits, which preserves the strong injectivity condition.
This motivates an alternative definition of the SPT phases according to the equivalence classes of mixed states under ``one-way" connection; two states $\rho_{1,2}$ belong to the same phase if there exist non-degenerate local channel $\calN$ such that $\rho_1 = \calN[\rho_2]$ \emph{or} $\rho_2 =\calN[\rho_1]$.
We further prove that this definition reproduces the cohomology classification $\calH^2(G,\U(1))$ for locally purifiable MPDO with only strong symmetry $G$.

We also discuss the possibility of using string order parameters to probe SPT phases in strong injective MPDO.
We prove that the string order can fully characterize the mixed-state SPT phase if the phase is protected by strong Abelian symmetry or is characterized by the mixed anomaly between the strong and the weak symmetry.

Last, we extend our results to two dimensions and introduce strong semi-injective tensor network density operators (TNDO), which are mixed-state generalizations of semi-injective projective entangled pair states (PEPS) ~\cite{Molnar_2018}.
Within this framework, we show that the mixed-state SPT phases protected by $G\times K$ symmetry are classified by $\calH^3(G,\U(1))\oplus\calH^2(G,\calH^1(K,\U(1)))\oplus\calH^1(G,\calH^2(K,\U(1)))$.

The rest of the paper is organized as follows.
Section~\ref{sec:preliminaries} briefly reviews the concept of injective MPS.
Section~\ref{sec:definition} introduces the definition of MPDO with strong injectivity conditions.
Section~\ref{sec:def_spt_phases} defines the SPT phases in strong injective MPDO.
Section~\ref{sec:classification} classifies the SPT phases in the strong injective MPDO protected by both strong and weak symmetry.
Section~\ref{sec:purification} proposes a definition of SPT phases according to a ``one-way" connection using non-degenerate channels.
Section~\ref{sec:stringorder} proves the existence of string order parameters for various mixed-state SPT phases.
Section~\ref{sec:generalization} introduces strong injective TNDO in two dimensions and classifies the possible SPT phases.
We close with discussions in Sec.~\ref{sec:discussion}.

\tableofcontents

\section{Preliminaries: injective MPS}\label{sec:preliminaries}
We here review the concept of matrix product state (MPS) and the injectivity condition.
The properties of injective MPS are essential for understanding the statements regarding the strong injective MPDO introduced in this work.

Matrix product state is a theoretical tool that can describe the ground state of one-dimensional gapped local Hamiltonian, which exhibits area-law entanglement entropy~\cite{hastings2007area} and finite correlation length~\cite{hastings2004locality,hastings2006spectral}.
Specifically, the MPS takes the form
\begin{align}
    \ket{\Psi[A]} = \sum_{\{i_x = 1\}}^d \tr \left( \prod_{x = 1}^L A^{i_x}\right) \ket{\{i_x\}},
\end{align}
where the local tensor $A^{i_x}$ is a $D\times D$ matrix at site $x$ with $D$ being the bond dimension, and $i_x$ denotes the state in the physical Hilbert space of dimension $d$.

Representing 1d quantum state using MPS has redundancies.
One can always convert a translationally invariant MPS to a canonical form, in which each local tensor is decomposed into a direct sum of normal tensors in the virtual Hilbert space, i.e. $A^i = \bigoplus_\mu A^{[\mu],i}$~\cite{schuch2011classifying,cirac2021matrix}.
A tensor $B$ is \emph{normal} if the transfer matrix $\bbT = \sum_i B^i \otimes (B^*)^i$ has a unique largest eigenvalue, and the associated left and the right eigenvector represents a (strict)positive matrix.
An MPS $|\Psi[A] \rangle$ is normal if the direct sum decomposition of $A$ contains a single normal tensor.

The normal MPS is the unique ground state of gapped local Hamiltonian~\cite{perez2006matrix,cirac2021matrix} and can describe pure states with SPT order.
The unique largest eigenvalue of the transfer matrix for a normal MPS indicates that all the connected correlation functions are exponentially decaying, i.e. $\langle \calO_x\calO_{x'}\rangle -\langle \calO_x\rangle \langle \calO_{x'}\rangle \sim \exp(-|x-x'|/\xi)$, where $\langle \calO_x \rangle = \bra{\Psi}\calO_x\ket{\Psi}$, and $\xi$ is the correlation length determined by the second largest eigenvalue of the transfer matrix $\bbT$.
On the other hand, the symmetry-breaking ground state with long-ranged connected correlations is not a normal MPS. 

The injective MPS is a concept closely related to the normal MPS and is given by the following definition.
\begin{definition} \label{def:injectivity}
(Injective MPS) An MPS $\ket{\Psi[A]}$ is injective if the local tensor $A:\mathbb{C}^{D^2} \mapsto \mathbb{C}^d$ viewed as a mapping from the virtual Hilbert space $\mathbb{C}^{D^2}$ to the physical Hilbert space $\mathbb{C}^d$ is injective.
\end{definition}
An injective MPS is always normal\footnote{One can always connect an injective MPS to a fixed point injective MPS of isometric form without closing the gap of its parent Hamiltonian~\cite{schuch2011classifying}. The transfer matrix of an injective MPS therefore always has a unique largest eigenvalue same as its corresponding fixed point MPS. It is further shown that the eigenvector corresponding to the unique largest eigenvalue of the transfer matrix is positive~\cite{cirac2021matrix}.}, and a normal MPS can be always converted to an injective MPS by grouping a finite number of sites~\cite{sanz2010quantum,michalek2019quantum,rahaman2019new} (see Ref.~\cite{cirac2021matrix} for a review).
In this work, we do not distinguish the difference between these two concepts. 
We use mathematical theorems proved for normal MPS and assume the MPS is brought to the injective form by grouping neighboring sites.

\section{Strong injective matrix product density operator}\label{sec:definition}

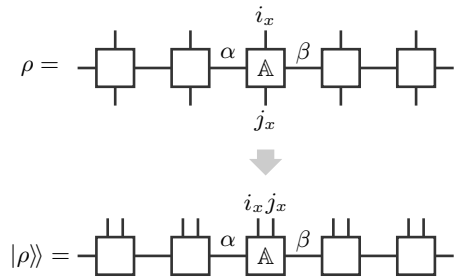
\begin{figure}[t!]
\centering
\begin{tikzpicture}
\definecolor{myred}{RGB}{240,83,90};
\definecolor{myblue}{RGB}{73,103,189};
\definecolor{myturquoise}{RGB}{83,195,189};
\draw[black!80, line width=1.0] (-0.5,0) -- (4.5,0);
\foreach \x in {0,1,...,4}{
\draw[black!80, line width=1.0] (\x,-0.5) -- (\x, 0.5);
\filldraw[fill=white, draw=black!80, line width=1.0] (\x-0.25,-0.25) rectangle (\x+0.25,0.25);
}
\node[black] at (2,0) {$\bbA$};
\node[black] at (2,0.7) {$i_x$};
\node[black] at (2,-0.7) {$j_x$};
\node[black] at (1.5,0.2) {$\alpha$};
\node[black] at (2.5,0.2) {$\beta$};
\node[black] at (-1.0,0) {$\rho = $};
\node[single arrow, fill=black!20,
      minimum width = 0.5, single arrow head extend=3,
      minimum height= 0.5,
      rotate=270] at (2,-1.2) {};

\draw[black!80, line width=1.0] (-0.5,-2.5) -- (4.5,-2.5);
\foreach \x in {0,1,...,4}{
\draw[black!80, line width=1.0] (\x+0.1,-2.5) -- (\x+0.1, -2.0);
\draw[black!80, line width=1.0] (\x-0.1,-2.5) -- (\x-0.1, -2.0);
\filldraw[fill=white, draw=black!80, line width=1.0] (\x-0.25,-2.75) rectangle (\x+0.25,-2.25);
}
\node[black] at (2,-2.5) {$\bbA$};
\node[black] at (1.85,-1.8) {$i_x$};
\node[black] at (2.15,-1.8) {$j_x$};
\node[black] at (1.5,-2.3) {$\alpha$};
\node[black] at (2.5,-2.3) {$\beta$};
\node[black] at (-1.0,-2.5) {$\kket{\rho} = $};
\end{tikzpicture}
\caption{ Matrix product density operator $\rho$ and its corresponding doubled state $\kket{\rho}$. For each local tensor $\bbA$, $i_x$ and $j_x$ denote the state in the ket and bra physical Hilbert space, respectively, and $\alpha$ and $\beta$ denote the state in the virtual Hilbert space.}
\label{fig:mpdo}
\end{figure}

In this section, we generalize the injective MPS (Def.~\ref{def:injectivity}), which describes SPT pure states, to tensor network formulation of mixed states.
To this end, we introduce matrix product density operators (MPDO) satisfying \emph{strong injectivity} conditions.
We show that the strong injectivity condition is preserved under non-degenerate local quantum channels, and the strong injective MPDO can describe a broad class of short-range correlated one-dimensional mixed states, including the locally decohered SPT states.

Consider a one-dimensional translationally invariant mixed state in the form of MPDO~\cite{cirac2017matrix}
\begin{align}
    \rho[\bbA] =  \sum_{\{i_x,j_x = 1\}}^d \tr \left(\prod_{x = 1}^L \bbA^{i_x j_x} \right) \ketbra{\{i_x\}}{\{j_x\}}.
\end{align}
where $\bbA^{i_xj_x}$ is the local tensor at site $x$, the physical indices $i_x$ and $j_x$ denote the state in the ket and bra local Hilbert space of $d$ dimension, respectively. 
Each $\bbA^{i_xj_x}$ is a $\chi \times \chi$ matrix, where $\chi$ is the bond dimension. 
In this work, we also take an equivalent formulation as a pure matrix product state (MPS) in the doubled Hilbert space when classifying mixed-state phases (see Fig.~\ref{fig:mpdo})~\cite{JAMIOLKOWSKI1972, CHOI1975}
\begin{align}
    \rho \mapsto \kket{\rho[\bbA]} :=& \sum_{\{i_x,j_x\}} \rho^{\{i_x,j_x\}} \ket{\{i_x\}}\ket{\{j_x\}}\nonumber \\
    =& \sum_{\{i_x,j_x\}} \tr\left(\prod_x \bbA^{i_xj_x}\right) \ket{\{i_x\}}\ket{\{j_x\}}.
\end{align}
Here, the $d^2$-dimensional local Hilbert space of the doubled state is denoted by the indices $i_x$ and $j_x$ together.

The structure of general MPDOs, however, remains less understood. 
Unlike MPS, we do not have a canonical form for decomposing MPDOs into ``normal'' MPDOs, each representing a legitimate physical density matrix. 
Although one could, in theory, transform the MPS representing a doubled state of the density matrix into a canonical form, it is unclear how the constraints of the physical density matrix manifest in each component of this decomposition, because not all components necessarily correspond to physical density matrices. 
Here, rather than broadly addressing MPDOs, we generalize the injectivity condition for MPS and focus on a restricted set of MPDOs with strong injectivity conditions.
\begin{definition}\label{def:stronginjective}
(Strong injective MPDO) The MPDO $\rho$ with local tensor $\bbA^{i_xj_x}$ is \emph{strong injective} if the following two conditions are satisfied:
\begin{enumerate}
    \item The mapping $\bbA:(\bbC^{\chi})^{\otimes 2} \mapsto \bbC^{d^2}$ from the virtual Hilbert space $(\bbC^{\chi})^{\otimes 2}$ to the physical Hilbert space $\bbC^{d^2}$ (ket and bra combined) is injective.
    \item The transfer matrix $\bbT=\sum_i \bbA^{ii}$ has a unique largest real eigenvalue.
\end{enumerate}
\end{definition}
The first condition is the injectivity condition for the MPDO $\rho$. 
The condition implies the connected correlation function in the doubled state $\kket{\rho}$ is exponentially decaying, i.e. $\langle\!\langle\calO_x\calO_{x'}\rangle\!\rangle - \langle\!\langle\calO_x\rangle\!\rangle\langle\!\langle\calO_{x'}\rangle\!\rangle \sim \exp(-\abs{x-x'}/\xi_2)$, where $\langle\!\langle\calO_x\rangle\!\rangle := \bbra{\rho}\calO_x\kket{\rho}/\bbrakket{\rho}{\rho}$, and $\xi_2$ is the decay length in the doubled Hilbert space.
The second condition is to ensure that the connected physical correlation function is always exponentially decaying, i.e. $\langle \calO_x \calO_{x'}\rangle - \langle \calO_x\rangle \langle \calO_{x'} \rangle \sim \exp(-|x - x'|/\xi)$, where $\langle \calO_x \rangle := \tr\rho\calO_x$.
It follows from the definition that the density matrix of the pure state described by an injective MPS is a strong injective MPDO.

The strong injectivity conditions for MPDO ensure short-range correlations similar to the injectivity condition for MPS.
In MPS, the injectivity excludes critical states, long-range correlated symmetry-breaking states (e.g. the Greenberger–Horne–Zeilinger state), as well as states given by a superposition of two states from distinct SPT phases\footnote{It is shown that such a state cannot be the ground state of a gapped local Hamiltonian~\cite{Levin_2020}.}. 
In addition, the strong injectivity conditions for MPDO also exclude a class of short-range correlated density matrices given by a classical mixture of two wave functions from different SPT phases, which exhibit characteristics of both and do not have a direct pure-state counterpart; such states do not have gapped local parent Hamtilonian in the doubled Hilbert space and are therefore not injective.

When MPDO admits a local purification, the first condition implies the second. 
The existence of local purification requires that $\rho = \tr_E \ketbra{\Psi_{QE}}$ with $\ket{\Psi_{QE}}$ being a pure state on the system $Q$ and the environment $E$, and $\ket{\Psi_{QE}}$ is described by an MPS
\begin{align}
    \ket{\Psi_{QE}} = \sum_{\{i_x = 1\}}^d \tr\left(\prod_{x = 1}^L \tilde{A}^{i_xk_x}\right) \ket{\{i_x\}}_Q \ket{\{k_x\}}_E,
\end{align}
where $k_x$ labels the physical state of the environment, and each local tensor $\tilde{A}$ is a purification of $\bbA$, i.e. $\bbA^{i_xj_x} = \sum_{k_x} \tilde{A}^{i_xk_x}\otimes (\tilde{A}^*)^{j_xk_x}$ as shown in Fig.~\ref{fig:purification}. Note that the bond dimension of $\tilde{A}$ is the square-root of the bond dimension of $\bbA$ to satisfy this equation. Under this premise, the injectivity condition of $\bbA$ implies that its local purification $\tilde{A}$ is an injective MPS. Then, the transfer matrix $\bbT$ of the MPDO is that of its purification $\ket{\Psi_{QE}}$, which has a unique largest eigenvalue because $\tilde{A}$ is injective.

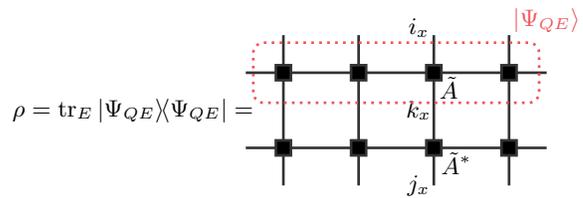
\begin{figure}
\centering
\begin{tikzpicture}
\definecolor{myred}{RGB}{240,83,90};
\definecolor{myblue}{RGB}{73,103,189};
\definecolor{myturquoise}{RGB}{83,195,189};
\draw[black!80, line width=1.0] (-0.5,0) -- (3.5,0);
\draw[black!80, line width=1.0] (-0.5,-1) -- (3.5,-1);
\foreach \x in {0,1,...,3}{
\draw[black!80, line width=1.0] (\x,-1.5) -- (\x, 0.5);
\filldraw[fill=black, draw=black!80, line width=1.0] (\x-0.1,-1.1) rectangle (\x+0.1,-0.9);
\filldraw[fill=black, draw=black!80, line width=1.0] (\x-0.1,-0.1) rectangle (\x+0.1,0.1);
}
\node[black] at (-2.0,-0.5) {$\rho = \tr_E\ketbra{\Psi_{QE}} = $};
\node[black] at (2.2,-0.2) {$\tilde{A}$};
\node[black] at (1.8,0.6) {$i_x$};
\node[black] at (1.8,-0.5) {$k_x$};
\node[black] at (2.3,-1.2) {$\tilde{A}^*$};
\node[black] at (1.8,-1.5) {$j_x$};
\draw[myred, dotted, rounded corners, line width=1.0] (-0.4,-0.4) rectangle (3.4,0.4); 
\node[myred] at (3.5, 0.7) {$\ket{\Psi_{QE}}$};
\end{tikzpicture}
\caption{ Local purification $\ket{\Psi_{QE}}$ of an MPDO. For an MPDO with local purification, its local tensor takes the form $\bbA^{i_xj_x} = \sum_{k_x} \tilde{A}^{i_xk_x} \otimes (\tilde{A}^*)^{j_xk_x}$, where $A$ is the local tensor in the purification, and $k_x$ labels the ancillary degrees of freedom in $E$.}
\label{fig:purification}
\end{figure}

Having introduced Def.~\ref{def:stronginjective}, it is now important to figure out what quantum operations preserve the strong injectivity conditions.
A generic quantum operation is a trace-preserving completely positive map, called \emph{quantum channel}. 
A quantum channel $\calN$ acting on a density matrix $\rho$ takes the form
\begin{align}
    \calN[\rho] := \sum_{i} K_{i}\rho K_{i}^\dagger,
\end{align}
where $K_i$ is the Kraus operator and satisfies $\sum_i K^\dagger_{i} K_{i} = \mathds{1}$ to preserve the trace.
A quantum channel is local if every Kraus operator acts on qubits in the vicinity of a certain site.

The strong injectivity conditions are preserved if the local quantum channel is \emph{non-degenerate}.
\begin{definition}\label{def:non-degenerate}
(Non-degenerate quantum channel) A quantum channel $\calN[\cdot]$ is non-degenerate if $\calN[\cdot]$ is an injective map when acting on the operator Hilbert space.
\end{definition}
\begin{proposition}\label{prop:local_circuit_injectivity}
Consider two MPDOs $\rho$ and $\rho' = \calN[\rho]$ related by local non-degenerate quantum channels in the form of a brickwork channel circuit $\calN$ in Fig.~\ref{fig:local_channel}(b). $\rho'$ is strong injective \emph{if and only if} $\rho$ is strong injective.
\end{proposition}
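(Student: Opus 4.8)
\emph{Strategy and reduction.} I would work in the doubled-state picture and verify the two items of Definition~\ref{def:stronginjective} separately. The starting observation is that the brickwork $\calN$, viewed on the operator Hilbert space, is a finite composition of tensor products of \emph{invertible} local linear maps $\hat K_g$ (each gate is non-degenerate, hence injective on its finite-dimensional operator space, hence invertible there). Consequently $\calN$ is invertible, and $\calM:=\calN^{-1}$ is again a finite-depth brickwork with the same light cone whose gates are the $\hat K_g^{-1}$ (local and invertible, but generically not channels). Since $\rho=\calM[\rho']$, the ``only if'' direction is the ``if'' direction run with $\calN$ replaced by $\calM$, so it is enough to prove: if $\rho$ obeys a given condition of Definition~\ref{def:stronginjective}, then so does $\rho'$, where the circuit is now allowed to be a finite-depth brickwork either of channels or of arbitrary invertible local maps. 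I will also use the standard fact that, after grouping a bounded number $m$ of sites (depending on the circuit depth), $\rho'$ is itself a translationally invariant MPDO with a local tensor $\bbA'_{(m)}$ of bounded bond dimension, obtained by absorbing the gates internal to each block and operator-Schmidt-decomposing the finitely many gates that straddle block boundaries.

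\emph{Condition 2.} This half uses only trace preservation and can be disposed of quickly. Contracting every physical leg against the vectorized identity $\bbra{\eye}$ computes the trace, and since each gate of $\calN$ is trace preserving, $\bbra{\eye}$ on its output legs reproduces $\bbra{\eye}$ on its input legs; hence in the closed network $\tr\!\big((\bbT^{(m)}_{\rho'})^{k}\big)=\tr(\rho'|_{mk})$ every gate of $\calN$ evaporates and the network collapses to $\tr(\rho|_{mk})=\tr(\bbT_\rho^{\,mk})$. Thus $\tr\!\big((\bbT^{(m)}_{\rho'})^{k}\big)=\tr(\bbT_\rho^{\,mk})$ for every $k\ge1$, so $\bbT^{(m)}_{\rho'}$ and $\bbT_\rho^{\,m}$ carry the same multiset of nonzero eigenvalues; since $\tr(\rho'|_{N})>0$ the dominant eigenvalue is real and positive, whence $\bbT^{(m)}_{\rho'}$ has a unique eigenvalue of maximal modulus iff $\bbT_\rho^{\,m}$ does iff $\bbT_\rho$ does. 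This settles both directions of condition 2 at once (and, incidentally, shows the physical correlation length is unchanged).

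\emph{Condition 1, and the main obstacle.} The mechanism is most transparent for a single-site non-degenerate channel: the local tensor transforms as $\bbA'=\hat K\circ\bbA$, a composition of the virtual-to-physical map $\bbA:(\bbC^{\chi})^{\otimes2}\to\bbC^{d^2}$ with the injective map $\hat K$, so $\ker\bbA'=\ker\bbA$ and $\bbA'$ is injective iff $\bbA$ is. The plan for the brickwork is to reduce to the statement that a finite-depth circuit of invertible local linear maps takes a normal (equivalently, after blocking, injective) MPS to a normal MPS: one absorbs the gates internal to a block into the blocked MPS tensor by composition, operator-Schmidt-decomposes the straddling gates into the two adjacent blocks, and verifies---after grouping sufficiently many sites---that the resulting tensor is again normal, following the unitary argument of~\cite{schuch2011classifying,cirac2021matrix} with only invertibility of the gates used. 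Applying this to $\kket{\rho'}$, the image of the normal MPS $\kket{\rho}$ under $\calN$, gives the ``if'' direction, and applying it to $\kket{\rho}=\calM\kket{\rho'}$ gives the converse. The real work---and the step I expect to be the main obstacle---is exactly this brickwork bookkeeping: controlling the gates that cross block boundaries and establishing that normality survives blocking under a circuit of merely invertible (rather than unitary) local gates, which is the point where one must adapt, rather than directly cite, the classical ``finite-depth circuits preserve MPS injectivity'' lemma, since for non-unitary gates one can no longer simply conjugate a gapped parent Hamiltonian.
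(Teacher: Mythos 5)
Your framing matches the paper's: non-degenerate gates are invertible on the (finite-dimensional) operator space, so the inverse circuit is again a finite-depth brickwork of invertible local maps, and the problem reduces to showing that such circuits preserve the two conditions of Definition~\ref{def:stronginjective}. Your treatment of condition 2 is correct and genuinely different from the paper's: Appendix~\ref{app:injectivity_local_channel} argues via the adjoint channel, writing the connected physical correlator of $\calN[\rho]$ as a connected correlator of $\rho$ with the locally supported operators $\calN^\dagger[O_x]$ and using non-degeneracy of $\calN^\dagger$ to get exponential decay in $\rho$ iff in $\calN[\rho]$; your trace-preservation argument, $\tr\big((\bbT'_{(m)})^k\big)=\tr(\rho'|_{mk})=\tr(\rho|_{mk})=\tr(\bbT_\rho^{mk})$ for all $k$, hence equal nonzero spectra, is a clean alternative that indeed settles both directions at once without invoking the (non-trace-preserving) inverse circuit.

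For condition 1, however, you have identified the crux but not supplied it, and the route you sketch is the one the paper explicitly rules out. The entire content of Appendix~\ref{app:injectivity_local_channel} on this point is the lemma you defer: a finite-depth brickwork of merely invertible, non-unitary local maps sends a normal (injective after blocking) MPS to a normal MPS. ``Following the unitary argument with only invertibility used'' does not go through, because the standard proofs rest either on conjugating a gapped parent Hamiltonian or on Lieb--Robinson control of correlation functions, and both fail here: acting with a channel circuit on $\kket{\rho}$ is a non-unitary operation, the parent Hamiltonian becomes non-Hermitian, and correlation functions in the doubled state can change drastically. The blocking bookkeeping alone also does not suffice, since after operator-Schmidt-absorbing the straddling gates, injectivity of the blocked tensor as a map from the enlarged virtual space to the physical space is not manifestly inherited; one needs a representation-independent invariant. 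The paper constructs one via ``locally reconstructible states'': it defines $\calL_U(\Psi)$ as the image of the reduced density matrix on $U$, shows that the closure $\bigcap_U \calL_U(\Psi)\otimes H_{\Lambda-U}$ over intervals of bounded length is preserved under local invertible circuits (the projector onto $\calL_U$ conjugates to an operator supported on an interval enlarged by twice the depth), proves that an injective MPS is reconstructible from intervals of length $2$ using the one-sided inverse of the injective tensor, and identifies the dimension of the closure with the number of normal blocks in the canonical form---a quantity therefore invariant under the circuit, so normality (a single block) is preserved. Without this argument, or an equivalent substitute for it, your proof of condition 1 (and hence of the proposition) is incomplete.
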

A non-degenerate quantum channel maps two distinct density matrices $\rho_1 \neq \rho_2$ to distinct resulting states, i.e. $\calN[\rho_1] \neq \calN[\rho_2]$~\footnote{We only require a non-degenerate channel to be invertible as a linear map. In most cases, it is not an invertible channel, i.e. its inverse is not a quantum channel.}.
The proposition is straightforward to prove for $\calN = \prod_x \calN_x$, where each $\calN_x$ is a non-degenerate onsite quantum channel acting only on qubit at site $x$ as in Fig.~\ref{fig:local_channel}(a). 
The resulting state $\rho' = \calN[\rho]$ is strong injective if and only if $\rho$ is strong injective because the non-degenerate onsite channel preserves the injectivity of local tensor $\bbA$ and does not change the transfer matrix $\bbT$.
In Appendix~\ref{app:injectivity_local_channel}, we provide the proof of Proposition~\ref{prop:local_circuit_injectivity} for brickwork local channel circuits.

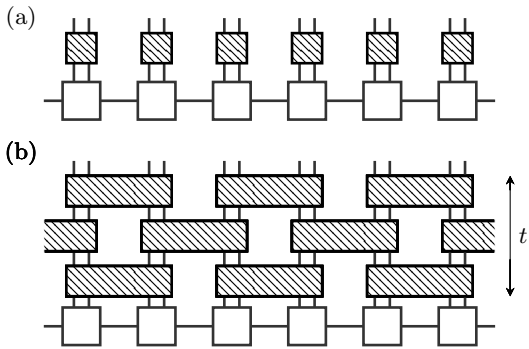
\begin{figure}
\centering
\begin{tikzpicture}
\definecolor{myred}{RGB}{240,83,90};
\definecolor{myblue}{RGB}{73,103,189};
\definecolor{myturquoise}{RGB}{83,195,189};
\draw[black!80, line width=1.0] (-0.5,0.5) -- (5.5,0.5);
\foreach \x in {0,1,...,5}{
\draw[black!80, line width=1.0] (\x+0.1,0.5) -- (\x+0.1, 1.0);
\draw[black!80, line width=1.0] (\x-0.1,0.5) -- (\x-0.1, 1.0);
\draw[black!80, line width=1.0] (\x+0.1,1.4) -- (\x+0.1, 1.6);
\draw[black!80, line width=1.0] (\x-0.1,1.4) -- (\x-0.1, 1.6);
\filldraw[fill=white, pattern=north west lines, pattern color=black, draw=black, line width=1] (\x-0.2,1.4) rectangle (\x+0.2,1.0);
\filldraw[fill=white, draw=black!80, line width=1.0] (\x-0.25,0.25) rectangle (\x+0.25,0.75);
}
\node[black] at (-0.8, 1.6) {(a)};
\draw[black!80, line width=1.0] (-0.5,-2.5) -- (5.5,-2.5);
\foreach \x in {0,1,...,5}{
\draw[black!80, line width=1.0] (\x+0.1,-2.5) -- (\x+0.1, -0.3);
\draw[black!80, line width=1.0] (\x-0.1,-2.5) -- (\x-0.1, -0.3);
\filldraw[fill=white, draw=black!80, line width=1.0] (\x-0.25,-2.75) rectangle (\x+0.25,-2.25);
\foreach \x in {0,2,4}{
\filldraw[fill=white] (\x-0.2,-1.7) rectangle (\x+1.2,-2.1);
\filldraw[pattern=north west lines, pattern color=black, draw=black, line width=1] (\x-0.2,-1.7) rectangle (\x+1.2,-2.1);
\filldraw[fill=white] (\x-0.2,-0.5) rectangle (\x+1.2,-0.9);
\filldraw[pattern=north west lines, pattern color=black, draw=black, line width=1] (\x-0.2,-0.5) rectangle (\x+1.2,-0.9);
}
\foreach \x in {-1,1,3,5}{
\filldraw[fill=white] (\x-0.2,-1.5) rectangle (\x+1.2,-1.1);
\filldraw[pattern=north west lines, pattern color=black, draw=black, line width=1] (\x-0.2,-1.5) rectangle (\x+1.2,-1.1);
}
\filldraw[fill=white, draw=white] (-1.3, -1.6) rectangle (-0.5, -0.7);
\filldraw[fill=white, draw=white] (5.5, -1.6) rectangle (6.3, -0.7);
\node[black] at (-0.8, -0.2) {(b)};
\draw[<->,>=stealth] (5.7,-2.1) -- (5.7,-1.3) node[right]{$t$} -- (5.7,-0.5);
}
\end{tikzpicture}
\caption{(a) Onsite quantum channel circuit. (b) Local quantum channel circuit of depth $t = 3$. The shaded block represents a quantum channel.}
\label{fig:local_channel}
\end{figure}

We remark that a non-degenerate local channel circuit describes the finite-time dynamics of an open system coupled to a Markovian bath.
The dynamics for a finite time $t$ governed by the Lindblad master equation~\cite{breuer2002theory}, $\rho \mapsto e^{\calL t}[\rho]$, always exists an inverse $e^{-\calL t}[\cdot]$~\cite{de2022symmetry}.
In contrast, examples of degenerate channels are the complete local dephasing, the complete depolarization channel, and tracing over a subsystem\footnote{As an example, the complete depolarization channel maps an arbitrary single-qubit density matrix $\rho_1$ to a maximally mixed state and therefore is degenerate, i.e. $\calN[\rho] = \rho_1/4 + X\rho_1 X/4 + Y\rho_1 Y/4 + Z\rho_1 Z/4 = \mathds{1}/2$, where $X$, $Y$, and $Z$ are Pauli matrices.}.
These channels, despite being local, cannot be realized by finite-time Lindblad evolution~\cite{de2022symmetry}.

Proposition~\ref{prop:local_circuit_injectivity} suggests that the strong injective MPDO can describe a broad class of short-range entangled mixed states studied in recent papers~\cite{de2022symmetry,lee2022symmetry}.
In particular, it includes the mixed states given by applying a non-degenerate local quantum channel circuit $\calN$ to a short-range entangled pure state $\ket{\Psi_0[A]}$ described by an injective MPS, i.e. $\rho = \calN[\ketbra{\Psi_0}{\Psi_0}]$.
We note that locally decohered injective MPS always has a local purification, and therefore the second condition of strong injectivity is implied from the first.
We illustrate the inclusion relation of the concepts in this section in Fig.~\ref{fig:strong_injective_MPDO}.

\begin{figure}
\centering
\begin{tikzpicture}
\draw[fill=black!30, rounded corners, line width=1.0] (-4,-1.625) rectangle (4, 3.6);
\draw[fill=black!20, rounded corners, line width=1.0] (-3.5,-1.5) rectangle (3.5, 2.5);
\draw[fill=black!10, rounded corners, line width=1.0] (-3,-1.375) rectangle (3, 1.4);
\draw[fill=white,rounded corners, line width=1.0] (-2.5,-1.25) rectangle (2.5, 0.3);
\node[black,align=center] at (0,-0.5) {Pure states $\ketbra{\Psi}$ \\ described by injective MPS $\ket{\Psi}$};
\node[black,align=center] at (0,0.8) {Locally decohered injective MPS \\ $\calN[\ketbra{\Psi}]$};
\node[black, align=center] at (0,1.9) {Locally purifiable \\ strong injective MPDO};
\node[black] at (0,3.2) {Strong injective MPDO};
\end{tikzpicture}
\caption{Strong injective MPDO describes a broad class of one-dimensional mixed states. It includes locally purifiable MPDO as a subset, which further includes the pure states, in the form of injective MPS, subject to local decoherence.}
\label{fig:strong_injective_MPDO}
\end{figure}
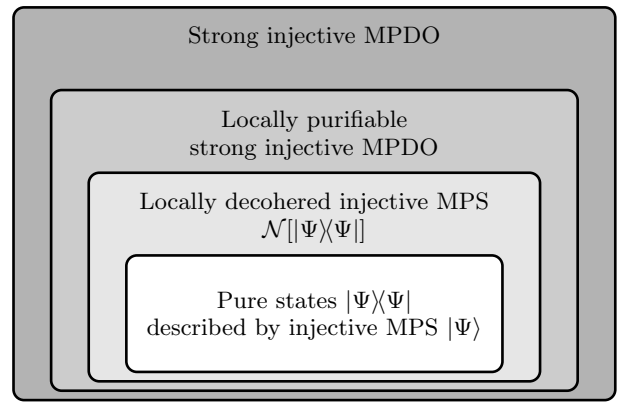

\section{SPT phases in strong injective MPDO}\label{sec:def_spt_phases}
In this section, we define the SPT phases in strong injective MPDO.
We first review the two distinct ways to impose the symmetry conditions on quantum channels and mixed states, i.e. strong and weak symmetry~\cite{buvca2012note}.
We then formulate the strong injective MPDO $\rho$ as a pure state $\kket{\rho}$ in the doubled Hilbert space, represented by an injective MPS, and define the SPT phases according to that in the corresponding pure doubled state.
The mixed-state SPT phases in our definition are preserved under the non-degenerate symmetric local quantum channels.

\subsection{Symmetry of quantum channels and mixed states}
Before discussing the SPT phases in mixed states, we specify the symmetry of quantum channels and the symmetry action on the density matrix.
We first consider a one-dimensional system with an on-site global symmetry $G$.
The symmetry acts as $U_g = \prod_x U_{g,x}$ for $g \in G$, where $U_{g,x}$ is a unitary representation of the symmetry group $G$ at site $x$.
In ground state problems, the symmetry requires the invariance of Hamiltonian under the symmetry transformation, i.e. $U_g H U_g^\dagger = H$; the symmetric state satisfies $U_g \ket{\Psi} = e^{\ri\theta} \ket{\Psi}$.

In open systems, the symmetry of the quantum channel can be implemented in two different ways~\cite{buvca2012note} (see also Ref.~\cite{albert2014symmetries,de2022symmetry}).
First, the quantum channel can exhibit strong symmetry requiring each Kraus operator to commute with the symmetry transformation (up to a phase independent of the Kraus operator)
\begin{align}
    U_g K_{i} U_g^\dagger = e^{\ri \theta_g} K_{i}.
\end{align}
In the Lindblad evolution, such a strong symmetry requires the invariance of the Hamiltonian and the Lindblad operators under the symmetry transformation $U_g$.

The definition of strong symmetry in mixed states stems from that of the quantum channel.
We call the mixed state \emph{strongly symmetric} if the density matrix $\rho$ is invariant under the symmetry action on the ket and the bra Hilbert space separately, i.e.
\begin{align}
    U_g \rho = e^{\ri \theta_g} \rho, \quad \rho U_g^\dagger = e^{-\ri\theta_g}\rho.
\end{align}
It is easy to see that for a symmetric pure state evolved under a strongly symmetric channel, the resulting mixed state has strong symmetry.

Alternatively, the quantum channel can exhibit weak symmetry, which only requires the quantum channel to commute with the symmetry action,
\begin{align}
    U_g \calN [U_g^\dagger (\,\cdot\,) U_g] U_g^\dagger = \calN[\,\cdot\,].
\end{align}
Accordingly, the weak symmetric density matrix is only invariant under the transformation on both ket and bra together, i.e. $U_g \rho U_g^\dagger = \rho$.

In this work, we consider the mixed state $\rho$ with a symmetry group $G\times K$ given by a direct product of strong symmetry $G$ and weak symmetry $K$.
The corresponding doubled state $\kket{\rho}$ exhibits an enlarged symmetry $\bbG = (G_l \times G_r \times K) \rtimes \bbZ_2^\bbH$.
Here, the left $G_l$ and the right $G_r$ symmetry stem from the $G$ symmetry that acts on the ket and bra Hilbert space, and $\bbZ_2^\bbH$ is an anti-unitary symmetry due to the Hermiticity of the density matrix. 
The hermitian conjugate $\bbH$ commutes with $K$ and swaps the two copies of $G$.

\subsection{Definition of SPT phases in strong injective MPDO}
We take the doubled state formulation of the strong injective MPDO and define its SPT phase according to the SPT phases of the injective MPS $\kket{\rho}$~\cite{schuch2011classifying}.
For an injective (or normal) MPS with an on-site symmetry, the fundamental theorem of MPS implies that each local tensor transforms as
\begin{equation} \label{eq:gauge_sym}
\begin{aligned}
    (U_g)^{ij} \bbA^{j} &= V_g^{-1} \bbA^i V_g, \text{ if $g$ is unitary,}\\
    (U_g)^{ij} \left(\bbA^{j}\right)^* &= V_g^{-1} \bbA^i V_g, \text{ if $g$ is anti-unitary,}
\end{aligned}
\end{equation}
where $V_g$ acts on the virtual Hilbert space and forms a projective representation of the symmetry group $\bbG$~\cite{perez2008string} (see also Ref.~\cite{cirac2021matrix} for a review).
Furthermore, the projective representation satisfies
\begin{equation} 
V_{g}\, {}^{g}V_{h} = \omega(g, h) V_{gh},
\label{eq:2-cocycle-proj}
\end{equation}
where the superscript $g$ acts on $V_h$, ${}^g V_h = V_h^* (V_h)$ if $g$ is anti-unitary (unitary), and $\omega(g,h)$ is a $\U(1)$ phase\footnote{For an injective MPS, the virtual Hilbert space should carry the same projective representation.}. For $g_{1,2,3} \in \bbG$, the phase factor satisfies
\begin{align} 
    \omega(g_1, g_2)\omega(g_1 g_2, g_3) = \omega(g_1, g_2 g_3){}^{g_1}\!\omega(g_2,g_3).\label{eq:2-cocycle}
\end{align}
The projective representation is only well-defined up to a $\U(1)$ phase $\varphi(g)$, i.e. $V_g \mapsto \varphi(g)V_g$.
Accordingly, $\omega(g,h)$ is defined up to $\omega(g,h) \mapsto \omega(g,h)\varphi(gh)/(\varphi(g){}^{g}\!\varphi(h))$.
The equivalence classes $[\omega]$ of $V_g$ under the above relations are given by the second cohomology group $\calH^{2}(\bbG, \U_T(1))$, which defines the SPT phase of $\kket{\rho}$.
Here, the $\bbG$-module $\U_T(1)$ is $\U(1)$ group, on which the element of group $\bbG$ acts as $g\cdot a = {}^{g}\!a$ for $g\in \bbG$ and $a \in \U_T(1)$.
The $\U(1)$ phase $\omega(g,h)$ is called 2-cocycle and is defined up to a coboundary term $\varphi(gh)/(\varphi(g){}^g\!\varphi(h))$.

The class $[\omega]$ is invariant under the smooth symmetric deformation of the injective MPS while preserving the injectivity condition.
We thus have the following statements.
\begin{proposition}\label{proposition:non-degenerate_preserve_SPT}
The strong injective MPDO $\rho$ and $\calN[\rho]$ are always in the same SPT phase if they are connected by a symmetric non-degenerate local quantum channel $\calN$.
\end{proposition}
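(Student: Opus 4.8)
The plan is to reduce Proposition~\ref{proposition:non-degenerate_preserve_SPT} to the statement that the doubled state $\kket{\rho}$, viewed as an injective MPS over the enlarged symmetry group $\bbG = (G_l\times G_r\times K)\rtimes\bbZ_2^\bbH$, carries a fixed cohomology class $[\omega]\in\calH^2(\bbG,\U_T(1))$ that cannot jump under a symmetric non-degenerate channel. The key input is Proposition~\ref{prop:local_circuit_injectivity}: if $\calN$ is a non-degenerate symmetric local channel circuit and $\rho$ is strong injective, then so is $\rho' = \calN[\rho]$, and conversely. Hence both $\kket{\rho}$ and $\kket{\rho'}$ are genuine injective MPS, and each is assigned a well-defined class $[\omega_\rho]$, $[\omega_{\rho'}]$ via the fundamental theorem of MPS (Eqs.~\eqref{eq:gauge_sym}--\eqref{eq:2-cocycle}). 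So the task is to show $[\omega_\rho] = [\omega_{\rho'}]$.

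The first step is to interpolate. Write the brickwork circuit $\calN$ as a finite composition of layers of onsite non-degenerate symmetric channels (after blocking, a two-site gate becomes onsite), so it suffices to treat a single onsite non-degenerate symmetric channel $\calN_x$ acting uniformly on each site. Introduce a one-parameter family $\calN_x(s)$, $s\in[0,1]$, of non-degenerate symmetric channels with $\calN_x(0) = \mathrm{id}$ and $\calN_x(1) = \calN_x$; concretely, since $\calN_x$ is non-degenerate it is invertible as a linear map on the operator space, and one can take $\calN_x(s)$ to be a smooth path in the space of linear maps commuting with the onsite (doubled) symmetry representation that stays invertible and stays completely positive and trace-preserving — or, more cheaply, one can avoid building a CPTP path entirely and argue directly at the level of the MPS tensor, as follows. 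Applying $\calN_x$ to $\rho$ changes the local tensor $\bbA$ to $\bbA'$, where in the doubled picture $\bbA'$ is obtained from $\bbA$ by acting on the physical leg with the (invertible) matrix $N$ representing $\calN_x$ as a linear map $\bbC^{d^2}\to\bbC^{d^2}$. Thus $\bbA' = N\cdot\bbA$ as maps on the virtual space, and since $N$ is invertible, injectivity of $\bbA$ is equivalent to injectivity of $\bbA'$, and the two tensors have \emph{the same kernel structure on the virtual space}, so the gauge transformations $V_g$ solving Eq.~\eqref{eq:gauge_sym} can be taken literally identical for $\bbA$ and $\bbA'$. This is the crux: symmetry of $\calN_x$ means $N$ commutes with the onsite doubled representation $\mathds{1}_{\bbG}$-equivariantly, i.e. $(U_g)^{ij}N^{jk} = N^{il}(U_g)^{lk}$ for unitary $g$ (and with a complex conjugation for anti-unitary $g$), so if $(U_g)^{ij}\bbA^j = V_g^{-1}\bbA^i V_g$ then $(U_g)^{ij}\bbA'^j = (U_g)^{ij}N^{jk}\bbA^k = N^{il}(U_g)^{lk}\bbA^k = N^{il}V_g^{-1}\bbA^l V_g = V_g^{-1}\bbA'^i V_g$. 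Hence $\bbA'$ admits the \emph{same} projective representation $V_g$, so $\omega_{\rho'} = \omega_\rho$ exactly (not merely up to coboundary), and a fortiori $[\omega_{\rho'}] = [\omega_\rho]$.

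The remaining steps are bookkeeping: verify that the enlarged symmetry $\bbG$ — including the anti-unitary $\bbZ_2^\bbH$ factor — acts on the doubled physical leg in a way that makes the equivariance relation $(U_g)N = N(U_g)$ (resp. with conjugation) follow from strong/weak symmetry of the Kraus operators plus Hermiticity; this is where one uses that $\bbH$ swaps $G_l\leftrightarrow G_r$ and fixes $K$, and that $\calN$ being symmetric means each Kraus operator is either strongly or weakly symmetric, which translates to the correct commutation of $N$ with the $\bbG$-representation in the doubled space. Then one invokes that $[\omega]$ classifies SPT phases of the doubled injective MPS (the definition just above the proposition) to conclude $\rho$ and $\calN[\rho]$ are in the same SPT phase. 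One should also note the converse direction, which is immediate from Proposition~\ref{prop:local_circuit_injectivity}'s ``if and only if'' together with the symmetry of the argument in $\rho\leftrightarrow\rho'$.

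I expect the main obstacle to be a clean treatment of the \emph{brickwork} (non-onsite) case: a genuinely non-degenerate \emph{circuit} need not be a product of onsite non-degenerate channels, and blocking sites to make each gate onsite must be done while checking that blocking preserves both injectivity (already handled by grouping sites for normal MPS) and the symmetry action in the doubled picture. The honest way to close this is to cite the brickwork version of Proposition~\ref{prop:local_circuit_injectivity} proved in Appendix~\ref{app:injectivity_local_channel} to guarantee $\rho'$ is strong injective, then observe that after blocking, the per-block channel is an invertible linear map commuting with the blocked $\bbG$-representation, so the equivariance argument above applies verbatim to the blocked tensor; the blocked tensor's class equals the original's because blocking is a gauge-type operation that does not change $[\omega]$. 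A secondary subtlety is ruling out that the deformation exits the injective locus at some intermediate step — but because we argue algebraically via the invertible map $N$ rather than via a continuous CPTP path, no intermediate injectivity check is needed, which is why I favor the tensor-level argument over an explicit homotopy of channels.
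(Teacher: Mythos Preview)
Your proposal is correct and takes a more direct route than the paper. The paper's own argument is the one-line remark immediately preceding the proposition: ``The class $[\omega]$ is invariant under the smooth symmetric deformation of the injective MPS while preserving the injectivity condition,'' together with Proposition~\ref{prop:local_circuit_injectivity} to guarantee that injectivity survives; no path is constructed and no tensor-level computation is given. You instead bypass the need for any smooth interpolation by observing that an onsite non-degenerate symmetric channel acts on the doubled physical leg as an invertible $\bbG$-equivariant map $N$, so the virtual projective representation $V_g$ solving Eq.~\eqref{eq:gauge_sym} is literally unchanged and $\omega$ is preserved on the nose. This is more elementary and self-contained, and it sidesteps the question of whether a CPTP homotopy staying non-degenerate exists. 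One small correction: for strong symmetry with $U_g K_\alpha U_g^\dagger = e^{\ri\theta_g}K_\alpha$ and $\theta_g\neq 0$, the map $N$ commutes with $U_g\otimes\mathds{1}$ only up to the overall phase $e^{\ri\theta_g}$, so your displayed chain of equalities picks up that phase; this merely shifts the per-site one-dimensional charge in Eq.~\eqref{eq:gauge_sym} and leaves $V_g$ and hence $[\omega]$ untouched, so the conclusion is unaffected. Your layer-by-layer reduction of the brickwork case (block each layer to make its gates onsite, use that blocking preserves $[\omega]$) is the right fix and is consistent with how Appendix~\ref{app:injectivity_local_channel} handles the injectivity side.
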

\begin{corollary}
The strong injective MPDO $\rho$ in a non-trivial SPT phase cannot be prepared from a pure trivial product state using a symmetric non-degenerate local quantum channel.
\end{corollary}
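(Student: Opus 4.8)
The plan is to derive both the proposition and the corollary from the fundamental theorem of MPS together with the already-established invariance of the cohomology class $[\omega]$ under symmetric deformations preserving injectivity. First I would unpack what it means for $\calN$ to be symmetric: if $\calN$ is a strong-symmetric channel, its Kraus operators satisfy $U_g K_i U_g^\dagger = e^{\ri\theta_g} K_i$, and if weak-symmetric, $U_g\calN[U_g^\dagger(\cdot)U_g]U_g^\dagger = \calN[\cdot]$; in either case, acting on the doubled Hilbert space, $\calN$ descends to a symmetric operator with respect to the enlarged symmetry $\bbG = (G_l\times G_r\times K)\rtimes\bbZ_2^\bbH$. Concretely, $\kket{\calN[\rho]} = (\calN\otimes\bar{\calN})\kket{\rho}$ (the Choi map), so in the doubled picture $\calN[\rho]$ is obtained from $\kket{\rho}$ by a local operator circuit that commutes with the representation of $\bbG$ on the doubled physical Hilbert space. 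By Proposition~\ref{prop:local_circuit_injectivity}, since $\calN$ is non-degenerate, $\calN[\rho]$ is strong injective whenever $\rho$ is, so $\kket{\calN[\rho]}$ is again an injective MPS.

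Next I would interpolate. Because $\calN$ is a brickwork circuit of non-degenerate channels, I can build a one-parameter family $\calN_s$, $s\in[0,1]$, with $\calN_0 = \mathrm{id}$ and $\calN_1 = \calN$, by inserting each local channel gate gradually — e.g. writing each non-degenerate local channel as $e^{\LL_\ell}$ for a generator $\LL_\ell$ (possible since finite-time Lindblad evolution is non-degenerate and, conversely, one may realize any non-degenerate local channel as the endpoint of such a path, or at minimum linearly interpolate the Choi operator and check injectivity is preserved along the way up to regrouping sites) and setting $\calN_s = \prod_\ell e^{s\LL_\ell}$ with an appropriate layer ordering. Along this path each $\calN_s$ is symmetric and non-degenerate, so $\kket{\calN_s[\rho]}$ is a continuous family of $\bbG$-symmetric injective MPS. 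The fundamental theorem, Eq.~\eqref{eq:gauge_sym}, then assigns to each $s$ a projective representation $V_g^{(s)}$ on the virtual space with 2-cocycle $\omega_s$, and continuity of the local tensors forces $[\omega_s]\in\calH^2(\bbG,\U_T(1))$ to be constant in $s$ (a discrete invariant cannot jump along a continuous path of injective tensors). Hence $[\omega_0] = [\omega_1]$, i.e. $\rho$ and $\calN[\rho]$ lie in the same SPT phase, proving Proposition~\ref{proposition:non-degenerate_preserve_SPT}. The corollary is then immediate: a trivial product state has trivial $[\omega]$, and if $\rho = \calN[\,\ketbra{\text{prod}}{\text{prod}}\,]$ with $\calN$ symmetric and non-degenerate, the proposition forces $[\omega_\rho]$ to be trivial as well, contradicting the assumption that $\rho$ is in a non-trivial phase.

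The main obstacle is the interpolation step: given an arbitrary symmetric non-degenerate \emph{channel} (not necessarily an invertible channel, and not a priori of the form $e^{\LL}$ with symmetric $\LL$), I must produce a continuous symmetry-preserving path from the identity to $\calN$ along which injectivity never fails. The cleanest route is probably to argue at the level of the Choi operator of each local gate: a non-degenerate channel has invertible Choi map, the set of such maps is path-connected within the symmetric sector (it is an open dense subset of a connected linear space, intersected with the symmetric subspace which is itself connected), and composing injective maps with injective maps keeps the doubled tensor $\bbA$ injective after regrouping a bounded number of sites — so the invariant is well-defined and constant all along. One must be slightly careful that regrouping sites does not itself change $[\omega]$, but this is standard: regrouping $V_g$ into $V_g^{\otimes n}$ only rescales the cocycle by a coboundary. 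A secondary subtlety is bookkeeping the antiunitary factor $\bbZ_2^\bbH$ in Eq.~\eqref{eq:2-cocycle-proj} when $\calN$ acts; but since $\calN\otimes\bar\calN$ manifestly commutes with Hermitian conjugation, this does not obstruct anything and only needs to be noted.
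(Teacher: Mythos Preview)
Your approach matches the paper's: the Corollary is immediate once the Proposition holds, and the Proposition is obtained by combining Proposition~\ref{prop:local_circuit_injectivity} (non-degenerate local channels preserve strong injectivity) with the standard fact that the cohomology class $[\omega]$ of an injective MPS is invariant along symmetric deformations that keep injectivity. The paper simply invokes this invariance in one sentence and states both results without further argument; it does not attempt to construct an explicit interpolating family $\calN_s$.

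Your extra work on the interpolation is therefore beyond what the paper supplies, and it is also where your write-up is loosest. Two concrete issues: (i) not every non-degenerate channel has the form $e^{\calL}$ for a Lindblad generator $\calL$ --- the paper only asserts the forward implication --- so that route does not cover the general case; (ii) the linear interpolation $(1-s)\mathrm{id}+s\calN$ is CPTP and symmetric but can become degenerate at intermediate $s$ (e.g.\ take $\calN[\rho]=X\rho X$; at $s=1/2$ this is a dephasing channel annihilating $Z$). Your fallback topological argument (``open dense in a connected space'') would need to be sharpened to the set of symmetric non-degenerate CPTP maps, which is a real semialgebraic set, not a complex linear space, so connectedness is not automatic. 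None of this undermines the result, since the paper is content to cite the discrete-invariant property of injective MPS from the literature; but if you want an explicit path, a cleaner argument is to purify $\calN$ to a local unitary $U_{QE}$ on system plus ancilla (with the symmetry acting trivially on the ancilla), interpolate the unitary, and observe that the cohomology class of the purification is constant and coincides with that of the MPDO.
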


Before proceeding, we remark that the classification given by $\calH^2(\bbG, \U_T(1))$ contains phases without a pure state correspondence. 
First, in the case with only strong unitary symmetry $G$, the doubled state exhibits the symmetry $\bbG = (G_l \times G_r)\rtimes \bbZ_2^\bbH$ leading to SPT phases classified by $\calH^2(\bbG, \U_T(1))$.
The classification is richer than $\calH^2(G, \U(1))$ for the pure-state SPT phases protected by $G$ symmetry.
Second, the weak symmetry $K$ in the symmetry group $\bbG$ of the double state naively can protect SPT phases by itself.
Such phases seem to contradict the statement in Ref.~\cite{de2022symmetry,ma2023average} that weak symmetry alone cannot protect SPT phases.
Indeed, as we show in Sec.~\ref{sec:classification}, these additional phases are unphysical as the classification $\calH^2(\bbG, \U_T(1))$ is further constrained by the Hermiticity and semi-positivity of the density matrix as well as the strong injectivity conditions of MPDO.

\section{Classification of SPT phases}\label{sec:classification}
In this section, we consider the mixed state $\rho$ with a direct product of strong symmetry $G$ and weak symmetry $K$.
We investigate the possible SPT phases in the associated doubled state $\kket{\rho}$ governed by the symmetry $\bbG = (G_l\times G_r\times K)\rtimes \bbZ_2^\bbH$.
The symmetry $\bbG$ consists of the physical symmetry $\bbG_p := G_l \times G_r \times K$ and the anti-unitary symmetry $\bbZ_2^\bbH$ due to the Hermiticity of density matrix.

The Hermiticity symmetry affects the possible SPT phases in two ways. 
It may protect the SPT phases, and it further constrains the projective representations of left and right symmetry.
In what follows, we show that the Hermiticity symmetry cannot on its own protect an SPT phase due to the strong injectivity condition.
We further show the absence of mixed anomaly between the Hermiticity and the physical symmetry.

The possible SPT phases are therefore only protected by the physical symmetry. 
We start with the case of unitary symmetry; the classification is given by $\calH^2(G_l\times G_r\times K, \U(1))$.
We consider the constraints on the possible SPT phases due to the hermiticity, the semi-positivity of the density matrix, and the strong injectivity conditions. We obtain a few key results.
First, in the case with only strong symmetry $G$, despite an enlarged symmetry $(G_l \times G_r)\rtimes \bbZ_2^\bbH$, the allowed SPT phases in the doubled state are classified by $\calH^2(G, \U(1))$.
Second, weak symmetry cannot by itself protect nontrivial SPT phases in mixed states, which is consistent with the results in Ref.~\cite{de2022symmetry,ma2023average}.
Third, the weak symmetry and strong symmetry together can protect SPT phases classified by $\calH^1(G, H^1(K,\U(1)))$~\cite{ma2023average}.
We close the section by commenting that anti-unitary symmetry, e.g. time-reversal symmetry, cannot on its own protect a non-trivial SPT phase even if the strong time-reversal symmetry is preserved.

We remark that the research problem regarding the physically allowed 1d SPT phases in the doubled state representing a mixed-density matrix first appears when studying the boundary phases of the 2d quantum double model~\cite{duivenvoorden2017entanglement}.
In the doubled state with Abelian symmetry $G \times G$, which corresponds to the 1d mixed state with strong Abelian symmetry $G$, Ref.~\cite{duivenvoorden2017entanglement} proved that the semi-positivity of the density matrix prevents the mixed anomaly between the two copies of the $G$ symmetry.
As a key step to obtain our classification, we extend this proof to the mixed state with general non-Abelian strong symmetry in one dimension.

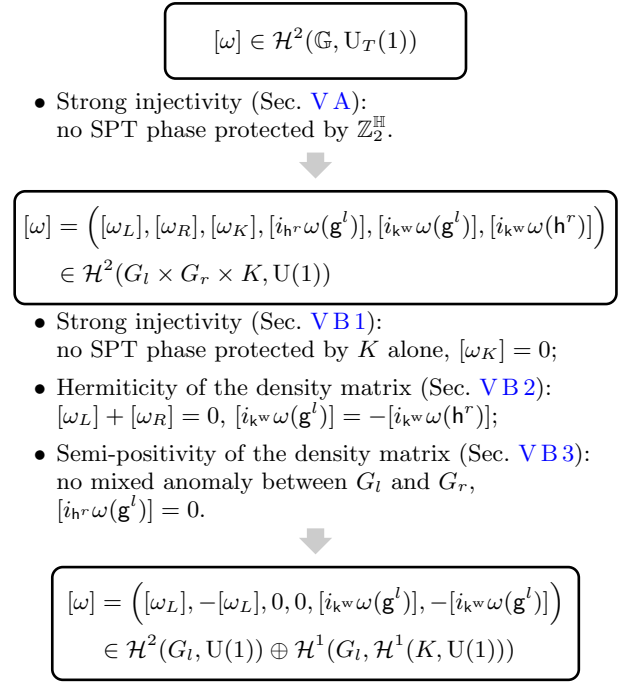
\begin{figure}
\centering
\begin{tikzpicture}
\definecolor{myred}{RGB}{240,83,90};
\definecolor{myblue}{RGB}{73,103,189};
\definecolor{myturquoise}{RGB}{83,195,189};
\draw[black,rounded corners,line width=1.0] (-2,4.5) rectangle (2,5.5) node[pos=.5]  {$[\omega] \in \calH^2(\bbG, \U_T(1))$};
\node[black,text width=\linewidth] at (0,4.0) { \begin{minipage}{\linewidth}
\begin{itemize}
\item Strong injectivity (Sec.~\ref{sec:no_hermiticity_SPT}):\\
no SPT phase protected by $\bbZ_2^\bbH$.
\end{itemize}\end{minipage}};
\node[single arrow, fill=black!20,
      minimum width = 0.5, single arrow head extend=3,
      minimum height= 0.5,
      rotate=270] at (0,3.4) {};
\draw[black,rounded corners,line width=1.0] (-4,1.5) rectangle (4,3.0) node[pos=.5,align=center]  {$\begin{aligned}[\omega] &= \left( [\omega_L], [\omega_R], [\omega_K], [i_{\sfh^r}\omega(\sfg^l)], [i_{\sfk^\rw}\omega(\sfg^l)], [i_{\sfk^\rw}\omega(\sfh^r)]\right) \\
&\in \calH^2(G_l\times G_r \times K, \U(1))\end{aligned}$};
\node[black,text width=\linewidth, text centered] at (0,-0) {\begin{minipage}{\linewidth}
\begin{itemize}
\item Strong injectivity (Sec.~\ref{sec:no_weak_SPT}): \\no SPT phase protected by $K$ alone, $[\omega_K] = 0$;
\item Hermiticity of the density matrix (Sec.~\ref{sec:hermiticity_constraint}): \\
$[\omega_L] + [\omega_R] = 0$, 
$[i_{\sfk^\rw}\omega(\sfg^l)]=-[i_{\sfk^\rw}\omega(\sfh^r)]$;
\item Semi-positivity of the density matrix (Sec.~\ref{sec:positivity_constraint}):\\
no mixed anomaly between $G_l$ and $G_r$, \\
$[i_{\sfh^r}\omega(\sfg^l)]=0$.
\end{itemize}
\end{minipage}};
\node[single arrow, fill=black!20,
      minimum width = 0.5, single arrow head extend=3,
      minimum height= 0.5,
      rotate=270] at (0,-1.6) {};
\draw[black,rounded corners,line width=1.0] (-3.5,-2) rectangle (3.5,-3.5) node[pos=.5,align=center]  {$\begin{aligned}\relax    [\omega] &= \left([\omega_L], -[\omega_L], 0, 0, [i_{\sfk^\rw}\omega(\sfg^l)], -[i_{\sfk^\rw}\omega(\sfg^l)]\right) \\
&\in \calH^2(G_l, \U(1)) \oplus \calH^1(G_l,\calH^1(K, \U(1)))\end{aligned}$};
\end{tikzpicture}
\caption{Classification of 1D mixed state SPT phases protected by the strong symmetry $G$ and the weak symmetry $K$. Starting from the SPT phases in the doubled state $\kket{\rho}$ labeled by $[\omega] \in \calH^2(\bbG, \U_T(1))$, where $\bbG = (G_l\times G_r \times K)\rtimes \bbZ_2^\bbH$, we show in this section the strong injectivity of the MPDO, the hermiticity and the semi-positivity of the density matrix constrain the possible SPT phases.}
\label{fig:summary_classification}
\end{figure}

\subsection{Absence of SPT phase protected by the Hermiticity symmetry}\label{sec:no_hermiticity_SPT}
The Hermiticity symmetry in the doubled state $\kket{\rho}$ is $\bbZ_2$ anti-unitary symmetry, which can in principle protect an SPT phase in pure state~\cite{pollmann2012symmetry}.
However, we here show that such an SPT phase cannot exist in the strong injective MPDO $\rho$ because the transfer matrix $\bbT$ has a unique largest eigenvalue.

The hermiticity symmetry of the strong injective doubled state requires
\begin{align}
    \bbA^{ij} \xrightarrow{\bbH} \left(\bbA^{ji}\right)^* = V_\bbH^{-1} \bbA^{ij} \,V_\bbH,
\end{align}
where the projective representation $V_\bbH$ satisfies $V_\bbH V^*_\bbH = V^*_\bbH V_\bbH = \pm \mathds{1}$~\cite{chen2011classification}.
The $+$ and $-$ sign corresponds to the trivial and the SPT phase, respectively.
Hence, the transfer matrix $\bbT = \sum_i \bbA^{ii}$ of the MPDO satisfies $\bbT^* = V_\bbH^{-1}\bbT  V_\bbH$.

The fact that $\bbT$ has a unique largest (real) eigenvalue prevents the SPT phase in the doubled state protected by the Hermiticity symmetry.
Let $\lambda_1$ be the largest eigenvalue of $\bbT$ and $|{\phi}_r)$ being the corresponding right eigenvector, i.e. $\bbT|{\phi}_r) = \lambda_1 |{\phi}_r)$. 
We then have $\bbT^*|{\phi_r^*}) = V_\bbH^{-1}\bbT V_\bbH|{\phi_r^*}) = \lambda_1 |{\phi_r^*})$, indicating $V_\bbH|{\phi_r^*})$ is also an eigenvector of $\bbT$ with eigenvalue $\lambda_1$. 
The uniqueness of the eigenvalue $\lambda_1$ requires $V_\bbH|{\phi_r^*}) = c|{\phi_r})$ with $c$ being a constant.
Thus, $|{\phi_r^*})$ is an eigenvector of $V_\bbH^* V_\bbH$ with a positive eigenvalue, i.e. $V_\bbH^* V_\bbH|{\phi_r^*}) = \abs{c}^2|{\phi_r^*})$, which implies $V_\bbH^* V_\bbH = V_\bbH V_\bbH^* = \mathds{1}$.
Physically, this result indicates that anti-unitary symmetry in strong injective MPDO cannot fractionalize and give rise to a Kramers doublet at the boundary.

As shown in Appendix~\ref{app:no_H_mixed_anomaly_1d}, the Hermiticity symmetry cannot exhibit a mixed anomaly with the physical symmetry. 
One can partially fix the gauge of the projective representation such that $\omega(\sfg, \bbH)\,{=}\,\omega(\bbH, \sfg)\,{=}\,\omega(\bbH,\bbH)\,{=}\,1$ for $\sfg\,{\in}\,\bbG_p$.
In what follows, we work with this gauge choice, and the Hermicity symmetry further imposes a constraint on the 2-cocycle
\begin{align}
    \omega(\sfg_1,\sfg_2) = \omega^*(\bar{\sfg}_1,\bar{\sfg}_2),\label{eq:2-cocycle-hermiticity_constraint}
\end{align}
where $\sfg_{1,2}\in\bbG_p$, and $\bar{\sfg}_{1,2} = \bbH \sfg_{1,2}\bbH$.

\subsection{SPT phase protected by physical symmetry}\label{sec:SPT_physical}
We now focus on the SPT phase protected by the physical symmetry $\bbG_p = G_l \times G_r \times K$.
In this subsection, we consider the case of unitary symmetry, and the possible SPT phases are given by the cohomology group $\calH^2(G_l \times G_r \times K, \U(1))$.
According to the K\"unneth formula~\cite{chen2014symmetry}, we have 
\begin{align}
&\calH^2(\bbG_p, \U(1)) \nonumber \\
=&\calH^2(G_l, \U(1))\oplus \calH^1(G_l, \calH^1(G_r, \U(1))) \oplus \calH^2(G_r,\U(1)) \nonumber \\
&\oplus \calH^1(G_l, \calH^1(K,\U(1))) \oplus \calH^1(G_r, \calH^1(K,\U(1)))\nonumber \\
&\oplus \calH^2(K, \U(1)).\label{eq:kunneth}
\end{align}
The first three terms in the decomposition characterize the SPT phases protected solely by the strong symmetry.
The next two terms describe the SPT phases featuring the mixed anomaly between the strong and the weak symmetry.
The last term is the SPT phase protected only by the weak symmetry.

The direct sum decomposition can be constructed from the decomposition of $2$-cochain $\omega: \bbG_p\times \bbG_p \mapsto \U(1)$ as shown in Appendix~\ref{app:direct-sum}.
In particular, the $2$-cochain acting on the subgroup $G_l$, $G_r$, and $K$ defines 
\begin{align}
    \omega_L(g_1, g_2) :=&\; \omega(\sfg_1^l, \sfg_2^l), \\
    \omega_R(h_1, h_2) :=&\; \omega(\sfh_1^r, \sfh_2^r), \\
    \omega_K(k_1, k_2) :=&\; \omega(\sfk_1^{\rw}, \sfk_2^{\rw}).
\end{align}
where we introduce the short-hand notations $\sfg^l := (g, e, e)$, $\sfh^r := (e, h, e)$, and $\sfk^{\mathrm{w}} := (e,e,k)$ with $e$ being the identity element.
Their cohomology classes $[\omega_L]$, $[\omega_R]$, and $[\omega_K]$ fully specify the elements in $\calH^2(G_l, \U(1))$, $\calH^2(G_r, \U(1))$, and $\calH^2(K,\U(1))$, respectively.
The three cross terms in the K\"unneth decomposition \eqnref{eq:kunneth} describe the mixed anomalies between symmetry groups and are characterized by the slant product
\begin{align}
    i_{\sfh^r}\omega(\sfg^l) :=& \frac{\omega(\sfg^l,\sfh^r)}{\omega(\sfh^r, \sfg^l)}, \\
    i_{\sfk^\rw}\omega(\sfg^l) :=& \frac{\omega(\sfg^l,\sfk^{\rw})}{\omega(\sfk^{\rw}, \sfg^l)},\\
    i_{\sfk^\rw}\omega(\sfh^r) :=& \frac{\omega(\sfh^r,\sfk^{\rw})}{\omega(\sfk^{\rw}, \sfh^r)},
\end{align}
where $ g \in G_l, h \in G_r, k \in K$ (see Appendix~\ref{app:direct-sum}).
The slant products above are 1-cocycle, and their equivalence classes belong to the cohomology group, e.g. $[i_{\sfh^r}\omega(\sfg^l)] \in \calH^1(G_l,\calH^1(G_r,\U(1)))$.
Thus, one can fully specify the element in $\calH^2(\bbG_p, U(1))$ using
\begin{align}
    [\omega] = \left( [\omega_L], [\omega_R], [\omega_K], [i_{\sfh^r}\omega(\sfg^l)], [i_{\sfk^\rw}\omega(\sfg^l)], [i_{\sfk^\rw}\omega(\sfh^r)]\right).\label{eq:6-tuple}
\end{align}
In what follows, we first show the absence of SPT phase protected by weak symmetry, i.e. $[\omega_K] = 0$, and then discuss how hermiticity and semi-positivity of the density matrix further constrain the class $[\omega]$.

\subsubsection{Absence of SPT phase protected by weak symmetry}\label{sec:no_weak_SPT}
Under the weak symmetry transformation $k \in K$, the local tensor of the MPDO transforms as
\begin{align}
    \sum_{i'j'} U_{\sfk^{\rw}}^{ij;i'j'}\bbA^{i'j'} = V_{\sfk^{\rw}}^\dagger \bbA^{ij} V_{\sfk^{\rw}}^{\vphantom{\dagger}},
\end{align}
where $U_{\sfk^{\rw}}^{ij;i'j'} = U_{k}^{ii'} (U_k^*)^{jj'}$.
Hence, the transfer matrix satisfies $\bbT = V_{\sfk^{\rw}}^\dagger \bbT V_{\sfk^{\rw}}^{\vphantom{\dagger}}$. 
Let $|\phi_r)$ be the right eigenvector of $\bbT$ associated with the unique largest eigenvalue. 
This indicates that $V_{\sfk^{\rw}}|\phi_r)$ is also the eigenvector with the largest eigenvalue, i.e. $V_{\sfk^{\rw}}|\phi_r) = \varphi(k)|\phi_r), \forall k \in K$.
Thus, the 2-cocycle takes the form $\omega_K(k,k') = \varphi(k)\varphi(k')/\varphi(kk')$, which is a 2-coboundary.
We then conclude that $[\omega_K] = 0$, and the weak symmetry cannot protect nontrivial SPT phases.

\subsubsection{Constraints from Hermiticity}\label{sec:hermiticity_constraint}
The Hermiticity symmetry in the doubled state constrains the 2-cocycle according to Eq.~\eqref{eq:2-cocycle-hermiticity_constraint} and therefore restricts the possible SPT phases protected by the physical symmetry.
First, taking $\sfg_1 = \sfg_1^l$ and $\sfg_2 = \sfg_2^l$ in Eq.~\eqref{eq:2-cocycle-hermiticity_constraint}, we have $\omega_L(g_1,g_2)\omega_R(g_2,g_1) = 1$ and therefore 
\begin{align}
    [\omega_L]+[\omega_R] =0.
\end{align}
Second, taking $\sfg_1 = \sfg^l$ and $\sfg_2 = \sfk^{\rw}$ in Eq.~\eqref{eq:2-cocycle-hermiticity_constraint}, we have $\omega(\sfg^l,\sfk^{\rw})\omega(\sfg^r,\sfk^{\rw}) = 1$. Similarly, $\omega(\sfk^{\rw},\sfg^l)\omega(\sfk^{\rw},\sfg^r) = 1$.
Hence, 
\begin{align}
    i_{\sfk^\rw}\omega(\sfg^l) = i_{\sfk^\rw}\omega(\sfg^r)^*,
\end{align}
which indicates $[i_{\sfk^\rw}\omega(\sfg^l)] = -[i_{\sfk^\rw}\omega(\sfg^r)]$.

\subsubsection{Constraints from semi-positivity}\label{sec:positivity_constraint}
The semi-positivity of the density matrix further constrains the SPT phases in the doubled state.
It excludes the SPT phases with the mixed anomaly between the left and the right $G$ symmetry.
Based on the decomposition of cocycle in Appendix~\ref{app:direct-sum}, the phase classified by $\calH^1(G_l,\calH^1(G_r,\U(1)))$ has a faithful representation in terms of the slant product.
In this case, one can show that the string order parameter in the doubled state fully characterizes such phases.
According to this understanding, we extend the proof in Ref.~\cite{duivenvoorden2017entanglement} to general non-Abelian symmetry group $G$ and show that the projective representations of the left and the right $G$ symmetry to commute with each other, i.e. no mixed anomaly.

The SPT phases in $\calH^1(G_l, \calH^1(G_r, \U(1)))$ are the condensate of decorated domain walls~\cite{chen2014symmetry}.
Under a right symmetry transformation of $\sfh^r$ on a subregion, the SPT state exhibits a non-trivial response at the boundary of the subregion.
The response transforms under the left symmetry $\sfg^l$ according to the slant product $i_{\sfh^r}\omega(\sfg^l)$.
Since the slant product is a one-dimensional representation of the left symmetry group, the response carries $G_l$-charge\footnote{This result follows from a simple modification of the derivation in Ref.~\cite{pollmann2012detection,duivenvoorden2017entanglement}.}.
This implies a string order in the doubled state.

In general, the charged operator decorated on two ends of the string operator supports on both the ket and the bra Hilbert space.
However, since only the ket Hilbert space carries the representation of the left symmetry, one can always find a string operator decorated by the charged operator only in the ket Hilbert space, which generically exhibits a finite overlap with the condensed object\footnote{In the space of operators charged under $G_l$, the chance for the charged operator $R_\alpha$ supported entirely on the right Hilbert space being zero has a zero measure.}. 
This string operator is of the form $\calO^{(2)}_{(\alpha,0), (e, g)}:= \calO_{\alpha,e} \otimes \calO^*_{0,g}$ and acquires a non-decaying expectation value,
\begin{align}
     \langle\!\langle \calO_{\alpha,e}\otimes \calO^*_{0, g}\rangle\!\rangle &:= \frac{\bbra{\rho}\calO_{\alpha,e}(x_1,x_2)\otimes \calO^*_{0,g}(x_1, x_2)\kket{\rho}}{\bbrakket{\rho}{\rho}} \nonumber \\
     &\xrightarrow{|x_1-x_2|\to\infty} \text{const.}
\end{align}
where $\calO_{\alpha, e}$ and $\calO_{0, g}$ are string operators acting on the ket and the bra Hilbert space, respectively.
The operator $\calO_{\alpha, g}$ in a single-copy Hilbert space takes the form
\begin{align}
    \calO_{\alpha,g}(x_1,x_2) = R_{\alpha^*}(x_1)\left(\prod_{x = x_1+1}^{x_2-1} U_{g}(x) \right) R_{\alpha}(x_2). \label{eq:string_order_parameter}
\end{align}
Here, $R_{\alpha}(x)$ carries charge $\alpha(\cdot)$ and transforms under the symmetry action as $U_{g'} R_{\alpha}(x) U_{g'}^\dagger = e^{\ri \alpha(g')} R_{\alpha}(x)$. 

The positivity of the density matrix requires other non-vanishing correlation functions in the doubled state. 
First, one can rewrite the string order parameter as 
\begin{align}
    \langle\!\langle \calO_{\alpha,e}\otimes \calO^*_{0, g}\rangle\!\rangle &= \frac{1}{\tr\rho^2} \tr \calO_{\alpha,e} \rho \calO^\dagger_{0,g}\rho \nonumber \\
    &= \frac{1}{\tr\rho^2} \tr \sqrt{\rho} \calO_{\alpha,e} \sqrt{\rho} \sqrt{\rho} \calO^\dagger_{0,g}\sqrt{\rho}
\end{align}
where $\sqrt{\rho}$ is well-defined because $\rho$ is positive.
Then, using the Cauchy-Schwarz inequality, we have
\begin{align} \label{eq:inequality}
    0 &< \left| \langle\!\langle\calO_{\alpha,e}\otimes \calO^*_{0,g}\rangle\!\rangle\right|^2 \nonumber \\
    &\leq \langle\!\langle\calO_{\alpha,e}\otimes \calO_{\alpha,e}^* \rangle\!\rangle \langle\!\langle\calO_{0,g}\otimes \calO_{0,g}^*\rangle\!\rangle.
\end{align}
This indicates the correlation function of charged objects is non-decaying\footnote{We note that this result only requires the semi-positivity of the density matrix and holds even if the density operator is not strongly injective.}, i.e. 
\begin{align}
\langle\!\langle\calO_{\alpha,e}\otimes \calO_{\alpha,e}^*\rangle\!\rangle &= \langle\!\langle R_{\alpha^*}(x_1)\!\otimes\! R_{\alpha^*}^*(x_1) \, R_{\alpha}(x_2)\!\otimes\! R_{\alpha}^*(x_2) \rangle\!\rangle \nonumber \\
&\xrightarrow{|x_1-x_2|\to\infty} \text{const.} >0.
\end{align} 
Since $\langle\!\langle R_\alpha(x)\otimes R^*_{\alpha}(x)\rangle\!\rangle = 0$, the connected correlation of $R_\alpha\otimes R_\alpha^*$ is long-range, which contradicts the injectivity condition of MPDO.
We then have $e^{\ri\alpha(\cdot)} = i_{\sfh^r}\omega(\cdot) = 1$, i.e. the slant product $i_{\sfh^r} \omega(\cdot)$ is a trivial representation, for all $h \in G_r$.
Thus, the projective representation of the left and the right symmetry commute.

The results thus far indicate that the possible SPT phases $[\omega] \in \calH^2(\bbG_p,\U(1))$ are given by
\begin{align}
    [\omega] = \left([\omega_L], -[\omega_L], 0, 0, [i_{\sfk^\rw}\omega(\sfg^l)], -[i_{\sfk^\rw}\omega(\sfg^l)]\right).
\end{align}
Hence, they are fully characterized by $[\omega_L] \in \calH^2(G, \U(1))$ and $[i_{\sfk^\rw}\omega(\sfg^l)] \in \calH^1(G, \calH^1(K,\U(1)))$.
In the system with only the strong symmetry, despite an enlarged symmetry $(G \times G) \rtimes \bbZ_2^\bbH$, the possible phases are only given by $\calH^2(G, \U(1))$.
In the system with also a weak symmetry $K$, although $K$ cannot on its own protect SPT phases, it can protect SPT phases together with the strong symmetry $G$, which are classified by $\calH^1(G, \calH^1(K,\U(1)))$.

\subsubsection{Time reversal symmetry}
So far, we have focused on the mixed-state SPT phases protected by unitary symmetry. A question arises regarding the possible SPT phase protected by anti-unitary symmetry, e.g. time-reversal symmetry.
In a seminal work, Ref.~\cite{mcginley2020fragility}  showed that the SPT phase with anti-unitary protecting symmetry is fragile against coupling to the environment even if the strong symmetry is preserved.
Specifically, the edge mode of a one-dimensional time-reversal SPT state rapidly decoheres and loses its encoded information when coupled to the environment.

The absence of SPT phases protected by an anti-unitary symmetry $\mathcal{T}$ in mixed states can be also understood in the doubled-state formulation.
The anti-unitary symmetry is ill-defined when acting on a subsystem.
Thus, despite being strongly symmetric, the doubled state does not possess individual left and right anti-unitary symmetry and only exhibits a global symmetry $\bbZ_2^\mathcal{T}\times \bbZ_2^\mathbb{H}$.
Since the $\bbZ_2$ anti-unitary symmetry in the doubled space can only take a trivial projective representation as shown in Sec.~\ref{sec:no_hermiticity_SPT}, the time-reversal symmetry cannot by itself protect non-trivial SPT phases in mixed states.

\section{One-way connection}\label{sec:purification}

The non-degenerate local quantum channel preserves the strong injectivity of MPDO and further preserves the SPT phases in our definition (Proposition~\ref{proposition:non-degenerate_preserve_SPT}) if the symmetry is imposed.
Such channel defines equivalence classes of mixed states; $\rho$ and $\rho'$ belong to the same class if there exists non-degenerate symmetric local channel $\calN$ such that $\rho = \calN[\rho']$ \emph{or} $\rho'=\calN[\rho]$.
Here, the equivalence relation only requires a ``one-way" connection, which is different from the ``two-way" connection based on generic symmetric local channels proposed in Ref.~\cite{ma2023average}.
It is natural to ask whether the equivalence class of mixed states based on this ``one-way" connection gives a compatible definition of SPT phases in mixed states.

In this section, we prove that in the case of strong unitary symmetry $G$, the equivalence class of mixed states under ``one-way" connection reproduces the cohomology classification if the strong injective MPDO has a local purification\footnote{We note that a local purification of MPDO with a finite bond dimension may not exist~\cite{de2013purifications}, even if the MPDO is translationally invariant~\cite{de2016fundamental}.}.
Proposition~\ref{proposition:non-degenerate_preserve_SPT} suggests that mixed states in different cohomology classes belong to different equivalence classes.
Here, we show that the mixed states in the same cohomology class are in a single equivalence class.
Specifically, we prove a theorem to show that an SPT mixed state in the cohomology class $[\omega_L]$, if admits a local purification, can always be realized by applying a non-degenerate strong symmetric quantum channel to an SPT pure state in the same class $[\omega_L]$ (illustrated in Fig.~\ref{fig:one-way_connection}).

\begin{theorem}\label{thm:purification}
Consider a one-dimensional quantum state with strong symmetry $G$. For a mixed state $\rho$ described by a strong injective MPDO, the following statements are equivalent:
\begin{enumerate}
    \item The state $\rho$ belongs to the mixed state SPT phase in class $[\omega]$ and has a local purification;
    \item The local purification $\ket{\Psi_{QE}}$ of $\rho$ is in the pure-state SPT phase in class $[\omega]$. Here, we require that $G$ acts trivially on the ancilla;
    \item There exists a strongly symmetric non-degenerate onsite quantum channel that maps an SPT pure state in class $[\omega]$ to $\rho$.
\end{enumerate}
\end{theorem}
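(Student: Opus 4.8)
The plan is to prove the three statements equivalent through the cycle $1\Rightarrow 2\Rightarrow 3\Rightarrow 1$, using throughout the structural facts established in Sec.~\ref{sec:classification} for a strong injective MPDO with only strong symmetry $G$: the admissible SPT class of the doubled state is carried by $[\omega_L]\in\calH^2(G,\U(1))$, and the virtual representations $V_{\sfg^l},V_{\sfg^r}$ of the left and right copies of $G$ commute and satisfy $V_{\sfg^r}\simeq V_{\sfg^l}^{\,*}$. For $1\Rightarrow 2$: given a local purification, the local tensor factorizes as $\bbA^{ij}=\sum_k \tilde A^{ik}\otimes(\tilde A^*)^{jk}$, and, as recalled in Sec.~\ref{sec:definition}, injectivity of $\bbA$ forces $\tilde A$ to be an injective MPS tensor. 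The symmetry data of $\kket{\rho}$ lives on the virtual space $\bbC^{\chi}\cong\calV\otimes\calV$ attached to this factorization, and $[V_{\sfg^l},V_{\sfg^r}]=0$ together with $V_{\sfg^r}\simeq V_{\sfg^l}^{\,*}$ forces $V_{\sfg^l}=W_g\otimes\mathds{1}$ for a projective representation $W_g$ on $\calV$. One then checks that the ancilla basis can be gauged so that $(U_g)^{ii'}\tilde A^{i'k}=W_g^{-1}\tilde A^{ik}W_g$ with $G$ acting trivially on $k$, where the fact that strong symmetry pins $\rho$ to a single $G$-charge sector is what makes the trivial-ancilla choice possible; the fundamental theorem of MPS then identifies the pure-state SPT class of $\ket{\Psi_{QE}}$ with $[W_g]=[\omega_L]=[\omega]$.

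The step $2\Rightarrow 3$ is the substantive one. Block $n$ sites so that $\dim Q_x^{\otimes n}=d^n$ is large, and view the blocked purification tensor as an injective linear map $T\colon\bbC^{\chi}\to Q_x^{\otimes n}\otimes E_x^{\otimes n}$; its image $\operatorname{im}T$ is a $G$-invariant subspace of dimension $\chi\le d^n$ on which $G$ acts through the first factor only. Since tensor powers of a faithful representation eventually contain every irreducible with growing multiplicity, for $n$ large $\operatorname{im}T$ embeds as a $G$-representation into $(\bbC^{d^n},U_g^{\otimes n})$, so by complete reducibility one may pick a $G$-invariant $d^n$-dimensional subspace $W\supseteq\operatorname{im}T$ with $W\cong(\bbC^{d^n},U_g^{\otimes n})$ and a $G$-equivariant isometry $V_x\colon Q_x^{\otimes n}\to Q_x^{\otimes n}\otimes E_x^{\otimes n}$ onto $W$. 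Define $\ket{\Psi_0}$ to be the MPS with blocked tensor $T^0:=V_x^{\dagger}T$: it still has rank $\chi$, hence is injective, and equivariance of $V_x$ gives $U_g^{\otimes n}T^0=T^0\Omega_g$ with the \emph{same} virtual action $\Omega_g$ as $T$, so $\ket{\Psi_0}$ is in class $[\omega]$. Finally set $\calN_x[\,\cdot\,]:=\tr_{E_x}\!\big(V_x(\,\cdot\,)V_x^{\dagger}\big)$, a strongly symmetric onsite channel; since $\ket{\Psi_{QE}}=(\bigotimes_x V_x)\ket{\Psi_0}$ by construction, $\calN[\ketbra{\Psi_0}]=\tr_E\ketbra{\Psi_{QE}}=\rho$ exactly. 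Non-degeneracy is arranged by noting that $\calN_x$ has equal-dimensional domain and codomain $\calB(Q_x^{\otimes n})$, so it is injective iff invertible; the admissible isometries $V_x$ form a positive-dimensional family for $n$ large, and those at which $\calN_x$ fails to be invertible form a measure-zero subset, so a generic admissible choice works while preserving symmetry, exactness and injectivity of $\ket{\Psi_0}$.

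Finally, $3\Rightarrow 1$ is straightforward: if $\rho=\calN[\ketbra{\Psi_0}]$ with $\calN=\prod_x\calN_x$ a non-degenerate onsite strongly symmetric channel and $\ket{\Psi_0}$ an injective MPS in pure SPT class $[\omega]$, then $\rho$ is strong injective by the onsite case of Proposition~\ref{prop:local_circuit_injectivity} and lies in class $[\omega]$ by Proposition~\ref{proposition:non-degenerate_preserve_SPT}, while a local purification is read off from the Stinespring dilations $V_x\colon Q_x\to Q_x\otimes E_x$ of the $\calN_x$, with $G$ acting trivially on the ancilla because $\calN$ is strongly symmetric. The main obstacle is the construction in $2\Rightarrow 3$: one must produce a channel that is simultaneously exact on the fixed MPS $\ket{\Psi_{QE}}$, strongly symmetric, and non-degenerate, and this is exactly why the blocking step is needed --- both to create dimensional room and to make the representation of $G$ on the purification's virtual space embeddable into that of the physical space --- after which the non-degeneracy is a genericity argument; the gauge-fixing and symmetrization entering $1\Rightarrow 2$ is a secondary technical point, and the rest is bookkeeping with the fundamental theorem of MPS and the constraints of Sec.~\ref{sec:classification}.
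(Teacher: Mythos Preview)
Your cycle and the step $3\Rightarrow 1$ match the paper. For $1\Rightarrow 2$ the paper is more direct: $U_g\rho=e^{i\theta}\rho$ gives $|\langle\Psi_{QE}|U_g\otimes\mathds{1}_E|\Psi_{QE}\rangle|=1$, which for a unitary forces $(U_g\otimes\mathds{1}_E)|\Psi_{QE}\rangle=e^{i\theta}|\Psi_{QE}\rangle$; the identification $V_{\sfg^l}=V_g\otimes\mathds{1}$ then follows by direct computation from $\bbA^{ij}=\sum_k\tilde A^{ik}\otimes(\tilde A^*)^{jk}$, with no appeal to Sec.~\ref{sec:classification}. Your claim that $[V_{\sfg^l},V_{\sfg^r}]=0$ together with $V_{\sfg^r}\simeq V_{\sfg^l}^*$ ``forces'' the tensor factorization $V_{\sfg^l}=W_g\otimes\mathds{1}$ is not a valid deduction on its own --- commuting conjugate pairs do not in general split along a prescribed tensor factor --- though the conclusion is correct for the direct reason above.

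The real divergence is in $2\Rightarrow 3$. The paper polar-decomposes the injective purification tensor, $A=WB$ with $B=(A^\dagger A)^{1/2}>0$ bijective and $W$ an isometry. Because the symmetry intertwines $A$ it commutes with $A^\dagger A$ and hence with $B$, so $B$ is itself an injective MPS tensor in class $[\omega]$ whose physical leg carries a \emph{linear} representation; the Kraus operators $(K_k)^i_\gamma=W^{ik}_\gamma$ are then automatically $G$-equivariant, giving a strongly symmetric onsite channel. Crucially, non-degeneracy is \emph{automatic}: since $B$ is bijective between virtual and physical space, its doubled tensor $\bbB$ is bijective, and $\bbA=\calN_W[\bbB]$ being injective forces $\calN_W$ to be injective on all of its domain. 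Your blocking-plus-embedding construction is reasonable in outline, but the assertion that a generic $G$-equivariant isometry $V_x$ yields a non-degenerate channel is a genuine gap --- you invoke a measure-zero argument without proof, and the equivariance constraint makes even the dimension of the admissible family nontrivial to bound. The polar decomposition bypasses this entirely and needs no blocking or genericity.
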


\begin{proof}
To begin, we prove the second statement from the first. 
The strong symmetry condition of $\rho$ requires $U_g\rho = e^{\ri\theta}\rho$. Thus, $\bra{\Psi_{QE}}U_g \ket{\Psi_{QE}} = e^{\ri\theta}$; the purification $\ket{\Psi_{QE}}$ is symmetric under the symmetry $G$, i.e. $U_g \ket{\Psi_{QE}} = e^{\ri\theta} \ket{\Psi_{QE}}$.
Meanwhile, the on-site purification $\ket{\Psi_{QE}}$ is an injective MPS because each on-site tensor of $\rho$ obtained after tracing over the ancilla is still injective.
Hence, $\ket{\Psi_{QE}}$ is an SPT state.
The projective representation of the symmetry $G$ in $\ket{\Psi_{QE}}$ follows from that of the left symmetry in $\rho$.
Therefore, $\ket{\Psi_{QE}}$ is a pure state SPT in class $[\omega]$.

Next, we prove the third statement from the second.
Here, we show that given a local purification $\ket{\Psi_{QE}}$, we can explicitly construct a single-qubit quantum channel that maps an SPT state in class $[\omega]$ to the SPT mixed state $\rho$.
Because $A$ in $\ket{\Psi_{QE}}$ is injective tensor, we find a polar decomposition
\begin{align}
    A^{ik}_{\alpha\beta} = W^{ik}_{\gamma} B_{\gamma;\alpha\beta}, \;\; 
\begin{tikzpicture}[line cap=round,line join=round,x=1cm,y=1cm,baseline={(0, -\MathAxis pt)}]
\draw[black!80, line width=1.0] (-.5,0) -- (.5,0);
\draw[black!80, line width=1.0] (0.1,0) -- (0.1, 0.5);
\draw[black!80, line width=1.0] (-0.1,0) -- (-0.1,0.5);
\filldraw[fill=white, draw=black!80, line width=1.0] (-0.25,-0.25) rectangle (+0.25,0.25);
\node[black] at (0,0) {$A$};
\node[black] at (-0.7,0.) {$\alpha$};
\node[black] at (0.7,0.) {$\beta$};
\node[black] at (-0.2,0.65) {$i$};
\node[black] at (0.2,0.65) {$k$};
\end{tikzpicture}
=
\begin{tikzpicture}[line cap=round,line join=round,x=1cm,y=1cm,baseline={(0, -\MathAxis pt)}]
\draw[black!80, line width=1.0] (-.5,0) -- (.5,0);
\draw[black!80, line width=1.0] (0.1,0.6) -- (0.1, 1.25);
\draw[black!80, line width=1.0] (-0.1,0.6) -- (-0.1,1.25);
\draw[black!80, line width=1.0] (0,0) -- (0,0.6);
\filldraw[fill=white, draw=black!80, line width=1.0] (-0.25,-0.25) rectangle (+0.25,0.25);
\filldraw[fill=white, draw=black!80, line width=1.0] (-0.25,0.6) rectangle (+0.25,1.1);
\node[black] at (0,0.0) {$B$};
\node[black] at (-0.7,0.) {$\alpha$};
\node[black] at (0.7,0.) {$\beta$};
\node[black] at (0,0.85) {$W$};
\node[black] at (0.3,0.4) {$\gamma$};
\node[black] at (-0.3,1.35) {$i$};
\node[black] at (0.3,1.35) {$k$};
\end{tikzpicture}
\end{align}
where $B>0$ is a positive square matrix when viewed as a map from the $\chi^2$-dimensional virtual space to the $\chi^2$ dimensional physical space labeled by $\gamma$, and $W$ is an isometric embedding 
satisfying $W^\dagger W = \mathds{1}$.

The positive matrix $B$ is an injective tensor with its physical leg carrying a linear representation of the symmetry group and therefore represents an SPT state in class $[\omega]$. 
This is because the symmetry transformation preserves the invariant subspace $W W^\dagger$.
The isometric embedding together with the trace over ancillary degrees of freedom labeled by $k_x$ defines an onsite quantum channel $\calN_W:\bbC^{\chi^2} \mapsto \bbC^{d^2}$ with Kraus operators
\begin{align}
    (K_{k})_{\gamma}^{i} &= W_{\gamma}^{ik}
\end{align}
Since each Kraus operator is invariant under symmetry transformation, the channel is strongly symmetric~\cite{de2022symmetry}.
The channel acting on the pure SPT state with local tensor $B$ gives rise to the MPDO $\rho$ with local injective tensor
\begin{align}
    \mathbb{B}\xrightarrow{\calN_W}& \sum_{k, \gamma, \gamma'} W^{ik}_\gamma B^{\gamma}_{\alpha \beta} B^{\gamma'}_{\alpha' \beta'}  (W^{jk}_{\gamma'})^* \nonumber \\
    =& \sum_{k} A^{ik}_{\alpha\beta} (A^*)^{jk}_{\alpha'\beta'} = 
    \bbA^{ij}_{\alpha\beta;\alpha'\beta'}.
\end{align}
Since $B$ is bijective and $\bbA$ is injective, the channel $\cal N_W$ must be an injective map, i.e., a non-degenerate channel.
The ensuing MPDO has the virtual legs carrying the same projective representations as that in the local purification and therefore is in the class $[\omega]$, proving the statement.

Lastly, it is straightforward to prove the first statement from the third. The SPT pure state is described by a strong injective MPDO. The SPT phase of strong injective MPDO is preserved under the non-degenerate symmetric on-site channel. We have thus proved the theorem.
\end{proof}

The theorem has a few implications. First, in the case that a strong injective MPDO exists multiple local purifications, the theorem suggests that different purifications are in the same pure-state SPT phase.
Second, the fact that locally purifiable SPT mixed states can be realized by applying on-site non-degenerate channels to a pure SPT state suggests that the same string order parameter of the SPT pure state is also nondecaying in the SPT mixed states.
We emphasize that although our analysis is formulated in the doubled space, the SPT phases in our definition are characterized by the string order parameter in a single copy density matrix, i.e. $\tr \calO \rho$.

We remark that the local purifiablility of MPDO $\rho$ is preserved under local quantum channels in Fig~\ref{fig:local_channel}(b), namely, if $\rho$ has a local purification $\ket{\Psi_{QE}}$, $\calN[\rho]$ related to $\rho$ by a local channel $\calN$ also has a local purification.
Specifically, one can purify the local channel $\calN$ by a constant depth local unitary $U_{QE'}$ acting on the system $Q$ and ancilla $E'$, indicating that $\calN[\rho]$ has a local purification $\ket{\Psi_{QEE'}} = U_{QE'} \ket{\Psi_{QE}}\otimes\ket{0}_{E'}$.
This result suggests that all the locally decohered pure states are locally purifiable MPDO (as in Fig.~\ref{fig:strong_injective_MPDO}).

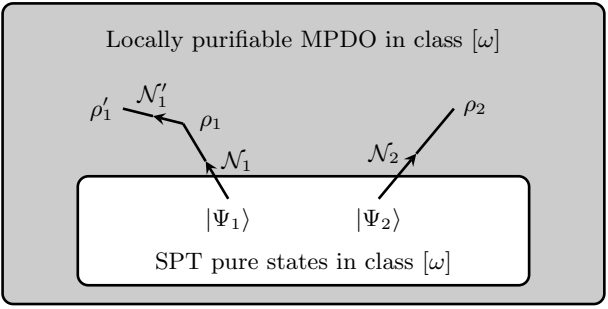
\begin{figure}[t!]
\centering
\begin{tikzpicture}
\draw[fill=black!20, rounded corners, line width=1.0] (-4,-1.5) rectangle (4, 2.5);
\draw[fill=white,rounded corners, line width=1.0] (-3,-1.25) rectangle (3, 0.2);
\node[black] at (0,-.95) {SPT pure states in class $[\omega]$};
\node[black] at (0,2.0) {Locally purifiable MPDO in class $[\omega]$};
\draw[->,>=stealth, line width=1.0] (-1.0, -0.1) node[below]{$\ket{\Psi_1}$}-- (-1.3,0.4) node[right] {\;$\calN_1$};
\draw[line width=1.0] (-1.3,0.4) -- (-1.6,0.9) node[right] {$\;\rho_1$};
\draw[->,>=stealth, line width=1.0] (-1.6,0.9) -- (-2.0,1.0) node[above] {$\calN'_1$};
\draw[line width=1.0] (-2.0,1.0) -- (-2.4,1.1) node[left] {$\rho'_1$};
\draw[->,>=stealth, line width=1.0] (1.0, -0.1) node[below]{$\ket{\Psi_2}$}-- (1.5,0.5) node[left] {$\calN_2$\;};
\draw[line width=1.0] (1.5,0.5) -- (2,1.1) node[right] {$\rho_2$};
\end{tikzpicture}
\caption{An SPT mixed state in class $[\omega]$, if admits a local purification, can be always realized by applying a non-degenerate symmetric local channel $\calN$ to an SPT pure state $\ket{\Psi}$ in the same class. On the other hand, the symmetric non-degenerate local channel preserves the SPT phase in mixed states.}
\label{fig:one-way_connection}
\end{figure}

Based on Theorem~\ref{thm:purification}, we have a compatible definition of mixed-state SPT phases (protected by strong symmetry $G$) using ``one-way" connection.
\begin{definition}
For strong injective MPDOs with a local purification, we define a relation $\rho \sim \rho'$ if there exists a non-degenerate strongly symmetric local channel $\calN$ such that $\rho = \calN[\rho']$ or $\rho' = \calN[\rho]$. Then, $\rho$ and $\rho'$ belong to the same SPT phase protected by a strong symmetry if they are in the same equivalence class generated by $``\sim"$\footnote{In general, $\rho,\rho'$ are linked through a zig-zag diagram.}.
\end{definition}

\section{String order parameter}\label{sec:stringorder}

Having established the classification of SPT phases in MPDO, we now discuss the diagnostics of distinct phases.
In quantum ground states, the string order parameter is a faithful characterization of the SPT phases if the protecting symmetry is Abelian~\cite{pollmann2012detection}, or the phase is characterized by the mixed anomaly between $G_1$ and $G_2$ belonging to the symmetry group $G_1\times G_2$~\cite{chen2014symmetry}.

The string order parameters for decohered SPT states have been discussed in various cases.
Reference~\cite{de2022symmetry} shows that an SPT pure state protected by an Abelian symmetry subject to a strong symmetric dephasing channel always has a non-decaying string order parameter as long as the channel is short of fully dephasing.
For the SPT phases protected by the mixed anomaly between the strong $G$ and the weak $K$ symmetry, Ref.~\cite{ma2023average} shows that the state exhibits a non-decaying string order given by the partial symmetry transformation of $G$ decorated by the charge of $K$.

This section discusses the string order parameters for the SPT phases in strong injective MPDO, which includes the decohered SPT states as a subset.
The theorem proved in this section leads to two main results.
First, the string order parameter is a faithful characterization of the SPT phase in strong injective MPDO protected by strong Abelian symmetry.
We note that for a restricted set of MPDO that has a local purification, this result stems directly from Theorem~\ref{thm:purification}.
Second, one can also probe the SPT phases in the MPDO characterized by $\calH^1(G,\calH^1(K,\U(1)))$ using the string order parameter that consists of the strong symmetry transformation on a partial region decorated by the charge of weak symmetry.

To begin, for an SPT mixed state protected by symmetry $G$ and $G'$, it is straightforward to show that we have the following lemma if the element $g \in G$ and $g' \in G'$ commute (see \appref{app:direct-sum}).
\begin{lemma}\label{lemma:commutator}
The slant product $i_{g}\omega(g') := \omega(g', g)/\omega(g, g')$ for $g \in G$ and $g' \in G'$ forms a one-dimensional representation of $G$ and $G'$, i.e. 
\begin{equation}
\begin{aligned}
    i_{g_1}\omega(g')i_{g_2}\omega(g') &= i_{g_1g_2}\omega(g'), \\
    i_{g}\omega(g'_1)i_{g}\omega(g'_2) &= i_{g}\omega(g'_1g'_2),
\end{aligned}
\end{equation}
if $gg'=g'g$ for $\forall g\in G$, $g' \in G'$.
\end{lemma}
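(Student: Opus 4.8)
The statement to prove is Lemma~\ref{lemma:commutator}: when $gg' = g'g$ for all $g \in G$, $g' \in G'$, the slant product $i_g\omega(g') = \omega(g',g)/\omega(g,g')$ is a bicharacter — linear in $g$ and linear in $g'$. The plan is to work directly from the $2$-cocycle condition \eqnref{eq:2-cocycle}, which for unitary group elements reads $\omega(g_1,g_2)\omega(g_1g_2,g_3) = \omega(g_1,g_2g_3)\omega(g_2,g_3)$. By symmetry of the claim in $G$ and $G'$, it suffices to prove the first identity, $i_{g_1}\omega(g')\,i_{g_2}\omega(g') = i_{g_1g_2}\omega(g')$; the second follows by swapping the roles of the two groups.

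First I would expand $i_{g_1g_2}\omega(g') = \omega(g', g_1g_2)/\omega(g_1g_2, g')$. The idea is to re-express both $\omega(g', g_1 g_2)$ and $\omega(g_1 g_2, g')$ using the cocycle relation so that the commutativity hypothesis $g'g_i = g_ig'$ can be invoked to rearrange terms. Concretely, apply the cocycle condition with the triple $(g', g_1, g_2)$ to get $\omega(g',g_1)\omega(g'g_1, g_2) = \omega(g', g_1g_2)\omega(g_1,g_2)$, and with the triple $(g_1, g_2, g')$ to get $\omega(g_1,g_2)\omega(g_1g_2, g') = \omega(g_1, g_2g')\omega(g_2,g')$. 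Then I need a third application that links $\omega(g'g_1, g_2)$ (or equivalently $\omega(g_1g', g_2)$, using $g'g_1 = g_1g'$) and $\omega(g_1, g_2g') = \omega(g_1, g'g_2)$ together; the natural choice is the triple $(g_1, g', g_2)$, giving $\omega(g_1,g')\omega(g_1g', g_2) = \omega(g_1, g'g_2)\omega(g', g_2)$. Dividing the three relations appropriately, the $\omega(g_1,g_2)$ factors and the "mixed" factors like $\omega(g_1g', g_2)$ cancel, and what remains is precisely $\omega(g',g_1g_2)/\omega(g_1g_2,g') = [\omega(g',g_1)/\omega(g_1,g')]\cdot[\omega(g',g_2)/\omega(g_2,g')]$, i.e. $i_{g_1g_2}\omega(g') = i_{g_1}\omega(g')\, i_{g_2}\omega(g')$.

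There is one bookkeeping subtlety: in the doubled state the symmetry group $\bbG$ contains anti-unitary elements, so \eqnref{eq:2-cocycle} carries the twist ${}^{g_1}\omega(g_2,g_3)$. However, Lemma~\ref{lemma:commutator} as stated is invoked for the \emph{physical} (unitary) subgroups $G, G' \subseteq \bbG_p$, where the action is trivial and the twist disappears, so the plain cocycle identity above is the right one to use. I would state this restriction explicitly at the start of the proof. Finally I would note that the claimed identities show $i_\bullet\omega(g')$ is a genuine one-dimensional (hence $\U(1)$-valued) representation of $G$, and symmetrically, so that $[i_{\sfh^r}\omega(\sfg^l)]$ etc. indeed live in $\calH^1(G_l, \calH^1(G_r,\U(1)))$ as asserted in Sec.~\ref{sec:SPT_physical}.

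**Main obstacle.** The only real work is choosing the right three instances of the cocycle condition and tracking which arguments get combined via commutativity so that everything cancels cleanly; this is a routine but slightly fiddly algebraic manipulation. I do not anticipate a genuine conceptual obstacle — the result is a standard fact about slant products of group $2$-cocycles — so the "hard part" is merely presenting the cancellation transparently rather than as an opaque computation. One should also double-check that no coboundary ambiguity spoils the statement: replacing $\omega$ by $\omega \cdot (\delta\varphi)$ changes $i_g\omega(g')$ only by $\varphi(g'g)\varphi(g)\varphi(g')^{-1} / [\varphi(gg')\varphi(g')\varphi(g)^{-1}]$, which equals $1$ since $gg' = g'g$, so the slant product — and hence the lemma — is well-defined at the level of cohomology classes.
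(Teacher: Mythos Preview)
Your proposal is correct and follows essentially the same route as the paper: in \appref{app:direct-sum} the paper verifies the homomorphism property by combining the three instances of the 2-cocycle condition on the triples $(g',g_1,g_2)$, $(g_1,g_2,g')$, and $(g_1,g',g_2)$ (written there as a single ratio that collapses to $1$), exactly as you outline. Your additional remarks about the anti-unitary twist being absent for $\bbG_p$ and about coboundary invariance of the slant product under $gg'=g'g$ are correct and a nice touch, though the paper does not spell these out.
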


A non-trivial slant product guarantees the existence of string order according to the following theorem.
\begin{theorem}\label{thm:string_order_parameter}
Consider a strong injective MPDO $\rho$ in an SPT mixed state protected by symmetry $G$ and $G'$, where $G$ is a strong symmetry, and the group elements $g \in G$ and $g'\in G'$ commute, i.e. $gg' = g'g$.
If the slant product $i_{g}\omega(g')$ forms a non-trivial one-dimensional representation of $G'$, there always exists a non-decaying string order parameter 
\begin{equation}
    \calO_{\alpha,g}(x_1,x_2) = R_{\alpha^*}(x_1) \left(\prod_{x = x_1+1}^{x_2-1} U_{g}(x) \right)R_\alpha(x_2),\nonumber
\end{equation}
where $R_\alpha$ is a charge operator with $e^{-\ri\alpha(\cdot)} = i_{g}\omega(\cdot)$ and transforms as $U_{g'} R_\alpha U_{g'}^\dagger = e^{\ri\alpha(g')} R_\alpha$, $\forall g' \in G'$.
On the other hand, the string order parameter with $e^{-\ri\alpha(\cdot)} \neq i_{g}\omega(\cdot)$ is vanishing.
\end{theorem}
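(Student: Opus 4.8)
The plan is to transcribe the computation into the doubled-state picture, where the MPDO $\rho$ is an injective MPS $\kket{\rho}$ carrying the projective representation $V$ of the enlarged symmetry, and then to use the fundamental theorem of MPS together with the injectivity condition to pin down exactly when the string expectation value survives. First I would reduce the physical correlator $\tr\,\calO_{\alpha,e}(x_1,x_2)\,\rho\,\calO_{\alpha,g}^\dagger(x_1,x_2)\,\rho$ (appropriately normalized) to a doubled-state object of the form $\langle\!\langle \calO^{(2)}\rangle\!\rangle = \bbra{\rho}\calO^{(2)}\kket{\rho}/\bbrakket{\rho}{\rho}$, exactly as was done for the $\calH^1(G_l,\calH^1(G_r,\U(1)))$ case in Sec.~\ref{sec:positivity_constraint}. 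Here $\calO^{(2)}$ is a string of the symmetry transformation $U_g^{\,\rw}$ (acting, say, as $U_g$ on the ket and trivially on the bra, which is a legitimate doubled-space string since $G$ is a strong symmetry) capped at its two ends by the charged operators $R_{\alpha^*}$ and $R_\alpha$.

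The core computation is the standard ``pulling the symmetry string through the transfer matrix'' argument. Because $\kket{\rho}$ is injective, applying $U_g$ along a bond segment is equivalent, via \eqnref{eq:gauge_sym}, to inserting $V_g^{-1}\cdots V_g$ on the virtual level, so the bulk of the string collapses and leaves only the two endpoint contributions: at each end one gets a local tensor environment in which $R_\alpha$ has been conjugated by the symmetry action, producing a factor governed by how $R_\alpha$ transforms. Two effects combine at the endpoints: (i) the charge of $R_\alpha$ under $G'$, i.e.\ $e^{\ri\alpha(g')}$, which must be matched against the slant product $i_g\omega(g')$ for the endpoint operator not to be projected out of the relevant virtual sector, and (ii) the commutation of $V_g$ past $V_{g'}$, which is precisely $i_g\omega(\cdot)$ by definition of the slant product. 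The upshot: the endpoint virtual operator is nonzero in the limit $|x_1-x_2|\to\infty$ if and only if the $G'$-charge it carries equals the one-dimensional representation $i_g\omega(\cdot)$ — this is the ``if'' direction, and it uses Lemma~\ref{lemma:commutator} to guarantee $i_g\omega(\cdot)$ is genuinely a character of $G'$ so that such an $R_\alpha$ exists and is generically nonzero at the endpoints. For the ``on the other hand'' clause, if $e^{-\ri\alpha(\cdot)}\neq i_g\omega(\cdot)$, then the endpoint operator decomposes into virtual-space components all of which live in sectors orthogonal to the leading eigenvector of the transfer matrix (a selection rule from Schur's lemma applied to the projective rep on the virtual bond); hence the expectation value decays exponentially rather than saturating. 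I would also invoke the second strong-injectivity condition (unique largest eigenvalue of $\bbT$) to ensure there is a single dominant channel and no accidental degeneracy that could resurrect the vanishing case.

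A technical point deserving care is that a priori the charged endpoint operator could be forced to act on both the ket and the bra Hilbert space, whereas I want $R_\alpha$ in a single copy; here I would reuse the genericity/measure-zero argument from Sec.~\ref{sec:positivity_constraint} (the footnote about $R_\alpha$ supported entirely on the bra Hilbert space having zero measure) to conclude that a single-copy $R_\alpha$ with the required $G'$-charge has nonzero overlap with the condensed object. The main obstacle I anticipate is making the ``endpoint operator is nonzero in the relevant virtual sector'' step fully rigorous in the non-Abelian $G$ setting: one must show that the slant product $i_g\omega$ being a nontrivial character of $G'$ actually forces a nonzero projection onto the fixed subspace of $V_g$ on the virtual bond, rather than merely being consistent with it. I expect this to follow from combining the injectivity (which makes the endpoint-to-endpoint amplitude a genuine virtual-space trace against the unique leading eigenvector, with no cancellations possible) with the representation-theoretic identity that the slant product computes exactly the phase by which $R_\alpha$'s virtual image commutes past $V_g$ — so the relevant matrix element is a nonzero overlap times that phase, and it vanishes only in the mismatched-charge case. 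Everything else — the reduction to the doubled picture, the collapse of the bulk string, the exponential decay in the off-resonant case — is routine given the fundamental theorem of MPS and the material already developed in Secs.~\ref{sec:def_spt_phases}--\ref{sec:classification}.
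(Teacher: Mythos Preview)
Your conceptual outline---push the $U_g$ string to the virtual level via \eqnref{eq:gauge_sym}, factorize into two endpoint pieces, and impose a selection rule from the slant product $i_g\omega$---is the right mechanism, and it is what the paper does. But you are computing the wrong object. The theorem concerns the single-copy expectation value $\tr\rho\,\calO_{\alpha,g}$, not a doubled-state quantity like $\tr\calO_{\alpha,e}\,\rho\,\calO_{\alpha,g}^\dagger\,\rho$ or $\bbra{\rho}\calO^{(2)}\kket{\rho}/\bbrakket{\rho}{\rho}$. These are controlled by different transfer matrices: $\tr\rho\,\calO$ is governed by the MPDO transfer matrix $\bbT=\sum_i\bbA^{ii}$ (the second strong-injectivity condition, which you relegate to a side remark about ``accidental degeneracy''), whereas your doubled-state expectation uses $\sum_{ij}\bbA^{ij}\otimes(\bbA^{ij})^*$ (the first condition). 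Non-decay of one does not imply non-decay of the other, and the Sec.~\ref{sec:positivity_constraint} argument you import was deployed there to derive a contradiction, not to construct a string order.

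The paper works directly with $\tr\rho\,\calO_{\alpha,g}$: after replacing $\bbT$ by its fixed-point form $|\phi_r)(\phi_\ell|$, the correlator factorizes as $\tr(R_{\alpha^*}M_L)\,\tr(R_\alpha M_R)$, where $M_R$ is an operator on the single-copy physical Hilbert space at the endpoint. One then shows $U_{g'}^\dagger M_R U_{g'}=i_g^*\omega(g')\,M_R$; the subtlety your doubled-state framing obscures is that this step needs the diagonal action $V_{g'\bar g'}$ to commute with $|\phi_r)(\phi_\ell|$ (which holds because the weak action fixes $\bbT$) and uses $i_g\omega(\bar g')=1$ from Sec.~\ref{sec:positivity_constraint} to extract $i_g\omega(g')$ from $i_g\omega(g'\bar g')$. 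For existence, the paper is constructive rather than generic: injectivity of $\bbA$ guarantees an invertible $R'$ whose insertion at the endpoint reproduces $M_R$ itself, so taking $R_\alpha=(R')^{-1}$ gives $\tr(R_\alpha M_R)=1$. Your measure-zero heuristic is aimed at the wrong obstacle (it concerned finding a ket-only charged operator in the doubled state, which is irrelevant for a single-copy trace) and is not needed here.
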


\begin{proof}
We take a tensor network formulation of the string order parameter $\tr\rho\calO_{\alpha, g}$. 
\begin{align}
    \tr\rho\calO_{\alpha, g} = \begin{tikzpicture}[scale=0.7,line cap=round,line join=round,x=1cm,y=1cm,baseline={(0, -\MathAxis pt)}]
\definecolor{myred}{RGB}{240,83,90};
\definecolor{myblue}{RGB}{73,103,189};
\definecolor{myturquoise}{RGB}{83,195,189};
\draw[black!80, line width=1.0] (-0.5,0) -- (1.5,0);
\draw[black!80, line width=1.0] (2.5,0) -- (4.5,0);
\draw[black!80, line width=1.0] (5.5,0) -- (7.5,0);
\foreach \x in {0,1,3,4,6,7}{
\draw[black!80, line width=1.0] (\x,-0.5) -- (\x, 1.0);
\filldraw[fill=white, draw=black!80, line width=1.0] (\x-0.25,-0.25) rectangle (\x+0.25,0.25);
\filldraw[white] (\x+0.35,-0.05) rectangle ++(0.1,0.1);
\draw[black, line width=1.0] plot [smooth, tension=0.5] coordinates { (\x,-0.5) (\x+0.2,-0.65) (\x+0.4,-0.5) (\x+0.4,0) (\x+0.4,0.5) (\x+0.4,1.0) (\x+0.2, 1.15) (\x, 1.0)};
}
\node[black] at (2,0) {$\cdots$};
\node[black] at (5,0) {$\cdots$};
\filldraw[myred] (1,0.5) circle (0.1);
\filldraw[white] (0.3,0.3) rectangle ++(0.2,0.6);
\node[myred] at (0.5,0.6) {$R_{\alpha^*}$};
\filldraw[myred] (6,0.5) circle (0.1);
\node[myred] at (5.5,0.6) {$R_\alpha$};
\filldraw[myblue] ([xshift=-3pt,yshift=-3pt]3,0.5) rectangle ++(6pt,6pt);
\node[myblue] at (2.5,0.6) {$U_g$};
\filldraw[myblue] ([xshift=-3pt,yshift=-3pt]4,0.5) rectangle ++(6pt,6pt);
\end{tikzpicture}\nonumber
\end{align}
By grouping a constant number of sites, one can convert the local transfer matrix for the strong injective MPDO to its fixed point $\bbT = |\phi_r) (\phi_\ell|$, where $|\phi_r)$ and $(\phi_\ell|$ are the right and the left eigenvector associated with its unique largest eigenvalue, respectively.
\begin{equation}
\bbT = 
\begin{tikzpicture}[line cap=round,line join=round,x=1cm,y=1cm,baseline={(0, -\MathAxis pt)}]
\draw[black!80, line width=1.0] (-0.5,0) -- (0.5,0);
\draw[black!80,line width=1.0] (0.1,0.0) -- (0.1,0.5);
\draw[black!80,line width=1.0] (-0.1,0.0) -- (-0.1,0.5);
\filldraw[fill=white, draw=black!80, line width=1.0] (-0.25,-0.25) rectangle (+0.25,0.25);
\def\Radius{0.1}
\draw[black, line width=1.0] (0.1,0.5) arc(0:180:\Radius);
\node[black] at (0.0,0.0) {$\bbA$};
\end{tikzpicture}
=
\begin{tikzpicture}[line cap=round,line join=round,x=1cm,y=1cm,baseline={(0, -\MathAxis pt)}]
\draw[black!80, line width=1.0] (-0.5,0) -- (0,0);
\draw[black!80, line width=1.0] (0.8,0.0) -- (1.3,0.0);
\draw[fill=white, draw=black!80, line width=1.0] (-0.25,0.25) -- (-0.25,-0.25) -- (0.1,-0.25) arc(-90:90:0.25) -- cycle;
\draw[fill=white, draw=black!80, line width=1.0] (1.05,-0.25) -- (1.05,0.25) -- (0.7,0.25) arc(90:270:0.25) -- cycle;
\node[black] at (0,0) {$\phi_r$};
\node[black] at (0.8,0) {$\phi_\ell$};
\end{tikzpicture}
\end{equation}
The symmetry transformation acting on the physical leg fractionalizes into two projective representations, i.e.
\begin{equation}
\sum_{i,j} U_g^{ji} \bbA^{ij} = 
\begin{tikzpicture}[line cap=round,line join=round,x=1cm,y=1cm,baseline={(0, -\MathAxis pt)}]
\definecolor{myblue}{RGB}{73,103,189};
\draw[black!80, line width=1.0] (-0.5,0) -- (0.5,0);
\draw[black!80,line width=1.0] (0.1,0.0) -- (0.1,0.8);
\draw[black!80,line width=1.0] (-0.1,0.0) -- (-0.1,0.8);
\filldraw[fill=white, draw=black!80, line width=1.0] (-0.25,-0.25) rectangle (+0.25,0.25);
\def\Radius{0.1}
\draw[black, line width=1.0] (0.1,0.8) arc(0:180:\Radius);
\node[black] at (0.0,0.0) {$\bbA$};
\node[myblue] at (-0.5,0.6) {$U_g$};
\filldraw[myblue] ([xshift=-3pt,yshift=-3pt]-0.1,0.5) rectangle ++(6pt,6pt);
\end{tikzpicture}
=
\begin{tikzpicture}[line cap=round,line join=round,x=1cm,y=1cm,baseline={(0, -\MathAxis pt)}]
\definecolor{myturquoise}{RGB}{83,195,189};
\draw[black!80, line width=1.0] (-0.8,0) -- (0,0);
\draw[black!80, line width=1.0] (0.8,0.0) -- (1.6,0.0);
\draw[fill=white, draw=black!80, line width=1.0] (-0.25,0.25) -- (-0.25,-0.25) -- (0.1,-0.25) arc(-90:90:0.25) -- cycle;
\draw[fill=white, draw=black!80, line width=1.0] (1.05,-0.25) -- (1.05,0.25) -- (0.7,0.25) arc(90:270:0.25) -- cycle;
\filldraw[myturquoise] ([xshift=-3pt,yshift=-3pt]-0.5,0) rectangle ++(6pt,6pt);
\node[myturquoise] at (-0.5,0.5) {$V_g^{-1}$};
\filldraw[myturquoise] ([xshift=-3pt,yshift=-3pt]1.3,0) rectangle ++(6pt,6pt);
\node[myturquoise] at (1.3,0.5) {$V_g$};
\node[black] at (0,0) {$\phi_r$};
\node[black] at (0.8,0) {$\phi_\ell$};
\end{tikzpicture}
\end{equation}
Thus, the string order parameter is given by the product of two tensor network diagrams
\begin{align}
    \tr\rho\calO =&\, 
\begin{tikzpicture}[line cap=round,line join=round,x=1cm,y=1cm,baseline={(0, -\MathAxis pt)}]
\definecolor{myred}{RGB}{240,83,90};
\definecolor{myblue}{RGB}{73,103,189};
\definecolor{myturquoise}{RGB}{83,195,189};
\draw[black!80, line width=1.0] (-1.0,0) -- (1.0,0);
\draw[black!80,line width=1.0] (0.1,0.0) -- (0.1,0.8);
\draw[black!80,line width=1.0] (-0.1,0.0) -- (-0.1,0.8);
\filldraw[fill=white, draw=black!80, line width=1.0] (-0.25,-0.25) rectangle (+0.25,0.25);
\draw[fill=white, draw=black!80, line width=1.0] (-0.75,-0.25) -- (-0.75,0.25) -- (-1.1,0.25) arc(90:270:0.25) -- cycle;
\node[black] at (-1.0,0) {$\phi_\ell$};
\def\Radius{0.1}
\draw[black, line width=1.0] (0.1,0.8) arc(0:180:\Radius);
\node[black] at (0.0,0.0) {$\bbA$};
\node[myred] at (-0.5,0.5) {$R_{\alpha^*}$};
\filldraw[myred] (-0.1,0.5) circle (3pt);
\draw[fill=white, draw=black!80, line width=1.0] (0.75,0.25) -- (0.75,-0.25) -- (1.1,-0.25) arc(-90:90:0.25) -- cycle;
\filldraw[myturquoise] ([xshift=-3pt,yshift=-3pt]0.5,0) rectangle ++(6pt,6pt);
\node[myturquoise] at (0.5,0.5) {$V_g^{-1}$};
\node[black] at (1.0,0) {$\phi_r$};
\end{tikzpicture}\;
\begin{tikzpicture}[line cap=round,line join=round,x=1cm,y=1cm,baseline={(0, -\MathAxis pt)}]
\definecolor{myred}{RGB}{240,83,90};
\definecolor{myblue}{RGB}{73,103,189};
\definecolor{myturquoise}{RGB}{83,195,189};
\draw[black!80, line width=1.0] (-1.0,0) -- (1.0,0);
\draw[black!80,line width=1.0] (0.1,0.0) -- (0.1,0.8);
\draw[black!80,line width=1.0] (-0.1,0.0) -- (-0.1,0.8);
\filldraw[fill=white, draw=black!80, line width=1.0] (-0.25,-0.25) rectangle (+0.25,0.25);
\draw[fill=white, draw=black!80, line width=1.0] (-0.75,-0.25) -- (-0.75,0.25) -- (-1.1,0.25) arc(90:270:0.25) -- cycle;
\node[black] at (-1.0,0) {$\phi_\ell$};
\def\Radius{0.1}
\draw[black, line width=1.0] (0.1,0.8) arc(0:180:\Radius);
\node[black] at (0.0,0.0) {$\bbA$};
\node[myred] at (0.4,0.5) {$R_{\alpha}$};
\filldraw[myred] (-0.1,0.5) circle (3pt);
\draw[fill=white, draw=black!80, line width=1.0] (0.75,0.25) -- (0.75,-0.25) -- (1.1,-0.25) arc(-90:90:0.25) -- cycle;
\filldraw[myturquoise] ([xshift=-3pt,yshift=-3pt]-0.5,0) rectangle ++(6pt,6pt);
\node[myturquoise] at (-0.5,0.5) {$V_g$};
\node[black] at (1.0,0) {$\phi_r$};
\end{tikzpicture} \nonumber \\
:=& \tr(R_{\alpha^*}M_L) \tr(R_\alpha M_R)
\end{align}

The tensor $M_R$ lives in an adjoint representation of the symmetry group $G'$. 
Specifically, we have
\begin{align}\label{adjointRepresentation}
    U_{g'}^\dagger M_R U_{g'} &= \begin{tikzpicture}[line cap=round,line join=round,x=1cm,y=1cm,baseline={(0, -\MathAxis pt)}]
\definecolor{myred}{RGB}{240,83,90};
\definecolor{myblue}{RGB}{73,103,189};
\definecolor{myturquoise}{RGB}{83,195,189};
\draw[black!80, line width=1.0] (-1.0,0) -- (1.0,0);
\draw[black!80,line width=1.0] (0.1,0.0) -- (0.1,1.3);
\draw[black!80,line width=1.0] (-0.1,0.0) -- (-0.1,1.3);
\def\Radius{0.1}
\draw[black, line width=1.0] (0.1,1.3) arc(0:180:\Radius);
\filldraw[fill=white, draw=black!80, line width=1.0] (-0.25,-0.25) rectangle (+0.25,0.25);
\node[black] at (0.0,0.0) {$\bbA$};
\draw[fill=white, draw=black!80, line width=1.0] (-0.75,-0.25) -- (-0.75,0.25) -- (-1.1,0.25) arc(90:270:0.25) -- cycle;
\node[black] at (-1.0,0) {$\phi_\ell$};
\draw[fill=white, draw=black!80, line width=1.0] (0.75,0.25) -- (0.75,-0.25) -- (1.1,-0.25) arc(-90:90:0.25) -- cycle;
\filldraw[myturquoise] ([xshift=-3pt,yshift=-3pt]-0.5,0) rectangle ++(6pt,6pt);
\node[black] at (1.0,0) {$\phi_r$};
\filldraw[white] (-0.1,0.8) circle (3pt);
\node[myblue] at (-0.5,0.5) {$U_{g'}^\dagger$};
\filldraw[myblue] ([xshift=-3pt,yshift=-3pt]-0.1,0.5) rectangle ++(6pt,6pt);
\node[myblue] at (-0.5,1.1) {$U_{g'}$};
\filldraw[myblue] ([xshift=-3pt,yshift=-3pt]-0.1,1.1) rectangle ++(6pt,6pt);
\end{tikzpicture}
= \begin{tikzpicture}[line cap=round,line join=round,x=1cm,y=1cm,baseline={(0, -\MathAxis pt)}]
\definecolor{myred}{RGB}{240,83,90};
\definecolor{myblue}{RGB}{73,103,189};
\definecolor{myturquoise}{RGB}{83,195,189};
\draw[black!80, line width=1.0] (-1.6,0) -- (1.0,0);
\draw[black!80,line width=1.0] (0.1,0.0) -- (0.1,0.8);
\draw[black!80,line width=1.0] (-0.1,0.0) -- (-0.1,0.8);
\def\Radius{0.1}
\draw[black, line width=1.0] (0.1,0.8) arc(0:180:\Radius);
\filldraw[fill=white, draw=black!80, line width=1.0] (-0.25,-0.25) rectangle (+0.25,0.25);
\node[black] at (0.0,0.0) {$\bbA$};
\draw[fill=white, draw=black!80, line width=1.0] (-1.05,-0.25) -- (-1.05,0.25) -- (-1.4,0.25) arc(90:270:0.25) -- cycle;
\node[black] at (-1.3,0) {$\phi_\ell$};
\draw[fill=white, draw=black!80, line width=1.0] (0.75,0.25) -- (0.75,-0.25) -- (1.1,-0.25) arc(-90:90:0.25) -- cycle;
\filldraw[myturquoise] ([xshift=-3pt,yshift=-3pt]-0.8,0) rectangle ++(6pt,6pt);
\node[black] at (1.0,0) {$\phi_r$};
\filldraw[white] (-0.1,0.5) circle (3pt);
\node[myblue!80] at (-0.6,0.45) {$V^{\vphantom{-1}}_{g'\bar{g}'}$};
\filldraw[myblue!80] ([xshift=-3pt,yshift=-3pt]-0.5,0) rectangle ++(6pt,6pt);
\node[myblue!80] at (0.48,0.45) {$V^{-1}_{g'\bar{g}'}$};
\filldraw[myblue!80] ([xshift=-3pt,yshift=-3pt]0.5,0) rectangle ++(6pt,6pt);
\node[myturquoise] at (-0.8,-0.4) {$V_g$};
\end{tikzpicture} \nonumber \\
&= i^*_{g}\omega(g')\begin{tikzpicture}[line cap=round,line join=round,x=1cm,y=1cm,baseline={(0, -\MathAxis pt)}]
\definecolor{myred}{RGB}{240,83,90};
\definecolor{myblue}{RGB}{73,103,189};
\definecolor{myturquoise}{RGB}{83,195,189};
\draw[black!80, line width=1.0] (-1.6,0) -- (1.0,0);
\draw[black!80,line width=1.0] (0.1,0.0) -- (0.1,0.8);
\draw[black!80,line width=1.0] (-0.1,0.0) -- (-0.1,0.8);
\def\Radius{0.1}
\draw[black, line width=1.0] (0.1,0.8) arc(0:180:\Radius);
\filldraw[fill=white, draw=black!80, line width=1.0] (-0.25,-0.25) rectangle (+0.25,0.25);
\node[black] at (0.0,0.0) {$\bbA$};
\draw[fill=white, draw=black!80, line width=1.0] (-1.05,-0.25) -- (-1.05,0.25) -- (-1.4,0.25) arc(90:270:0.25) -- cycle;
\node[black] at (-1.3,0) {$\phi_\ell$};
\draw[fill=white, draw=black!80, line width=1.0] (0.75,0.25) -- (0.75,-0.25) -- (1.1,-0.25) arc(-90:90:0.25) -- cycle;
\filldraw[myturquoise] ([xshift=-3pt,yshift=-3pt]-0.5,0) rectangle ++(6pt,6pt);
\node[black] at (1.0,0) {$\phi_r$};
\filldraw[white] (-0.1,0.5) circle (3pt);
\node[myblue!80] at (-0.8,0.45) {$V^{\vphantom{-1}}_{g'\bar{g}'}$};
\filldraw[myblue!80] ([xshift=-3pt,yshift=-3pt]-0.8,0) rectangle ++(6pt,6pt);
\node[myblue!80] at (0.48,0.45) {$V^{-1}_{g'\bar{g}'}$};
\filldraw[myblue!80] ([xshift=-3pt,yshift=-3pt]0.5,0) rectangle ++(6pt,6pt);
\node[myturquoise] at (-0.5,-0.4) {$V_g$};
\end{tikzpicture}\nonumber \\
&= i^*_{g}\omega(g') M_R,
\end{align}
where $g'\bar{g}'$ labels the symmetry transformation acting on both ket and bra Hilbert space.
On the second line, we use $V_gV_{g'\bar{g}'} = i^*_{g}\omega(g'\bar{g}') V_{g'\bar{g}'} V_g$, and $i_{g}\omega(g'\bar{g}') = i_{g}\omega(g')i_{g}\omega(\bar{g}') = i_{g}\omega(g')$ because the projective representation of the left and the right symmetry commute, i.e. $i_{g}\omega(\bar{g}') = 1$, according to Sec.~\ref{sec:positivity_constraint}. 
Furthermore, since $\bbT = V_{g'\bar{g}'}^{-1} \bbT V_{g'\bar{g}'}$, we have $V^{-1}_{g'\bar{g}'}|\phi_r)(\phi_\ell|V_{g'\bar{g}'} = |\phi_r)(\phi_\ell|$, leading to the final result.
According to the Lemma~\ref{lemma:commutator}, $i_{g}\omega$ is one-dimensional representation.
Thus, due to the selection rule, if $e^{-\ri\alpha(\cdot)} \neq i_{g}\omega(\cdot)$, then $ \tr R_\alpha M_R = 0$ (similarly $\tr R_{\alpha^*}M_L =0$).

There always exists an operator $R_\alpha$ in the representation $\alpha$ such that $\tr(R_\alpha M_R) \neq 0$.
Since the local tensor $\bbA$ is injective, there always exists an operator $R'$ such that
\begin{align}
    M_R = \begin{tikzpicture}[line cap=round,line join=round,x=1cm,y=1cm,baseline={(0, -\MathAxis pt)}]
\definecolor{myred}{RGB}{240,83,90};
\definecolor{myblue}{RGB}{73,103,189};
\definecolor{myturquoise}{RGB}{83,195,189};
\draw[black!80, line width=1.0] (-1.0,0) -- (1.0,0);
\draw[black!80,line width=1.0] (0.1,0.0) -- (0.1,1.1);
\draw[black!80,line width=1.0] (-0.1,0.0) -- (-0.1,1.1);
\def\Radius{0.1}
\draw[black, line width=1.0] (0.1,1.1) arc(0:180:\Radius);
\filldraw[fill=white, draw=black!80, line width=1.0] (-0.25,-0.25) rectangle (+0.25,0.25);
\node[black] at (0.0,0.0) {$\bbA$};
\draw[fill=white, draw=black!80, line width=1.0] (-0.75,-0.25) -- (-0.75,0.25) -- (-1.1,0.25) arc(90:270:0.25) -- cycle;
\node[black] at (-1.0,0) {$\phi_\ell$};
\draw[fill=white, draw=black!80, line width=1.0] (0.75,0.25) -- (0.75,-0.25) -- (1.1,-0.25) arc(-90:90:0.25) -- cycle;
\filldraw[myred] (-0.1,0.5) circle (3pt);
\node[myred] at (-0.5,0.5) {$R'$};
\node[black] at (1.0,0) {$\phi_r$};
\filldraw[white] (-0.1,0.8) circle (3pt);
\end{tikzpicture},
\end{align}
where $R'$ is invertible and is in the irrep $i^*_{g}\omega(\cdot)$.
Thus, by choosing $R_\alpha = (R')^{-1}$, $\tr(R_\alpha M_R) = 1$, we prove the existence of a non-vanishing string order parameter.
\end{proof}

Theorem~\ref{thm:string_order_parameter} leads to two main results. First, in the case that $G = G'$ is a strong Abelian symmetry, we have the following corollary.
\begin{corollary}
Consider a strong injective MPDO $\rho$ in an SPT phase protected by strong Abelian symmetry $G$. The string order parameter $\calO_{\alpha, g}$ is non-decaying in $\rho$ if $e^{-\ri\alpha(\cdot)} = i_{g}\omega(\cdot)$, while $\calO_{\alpha, g}$ vanishes otherwise.
\end{corollary}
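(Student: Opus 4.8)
The plan is to obtain this corollary as a direct specialization of Theorem~\ref{thm:string_order_parameter} to the case $G' := G$, supplemented by the standard fact that for Abelian $G$ the slant products classify $\calH^2(G,\U(1))$.

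First I would observe that when $G$ is Abelian every pair $g, g' \in G$ satisfies $g g' = g' g$, so the commutation hypothesis of Theorem~\ref{thm:string_order_parameter} holds automatically for all choices of $g$ and all one-dimensional representations $\alpha$ of $G$. By Lemma~\ref{lemma:commutator} the assignment $g' \mapsto i_{g}\omega(g')$ is then a one-dimensional representation of $G$, and it depends only on the cohomology class $[\omega]$: the coboundary ambiguity $\omega(g,h)\mapsto \omega(g,h)\varphi(gh)/(\varphi(g)\varphi(h))$ is symmetric in $g\leftrightarrow h$ for unitary symmetry and therefore cancels in the ratio $i_{g}\omega(g') = \omega(g',g)/\omega(g,g')$ whenever $g$ and $g'$ commute. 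Feeding this into Theorem~\ref{thm:string_order_parameter} with $G' = G$ yields at once that $\calO_{\alpha,g}$ is non-decaying precisely when $e^{-\ri\alpha(\cdot)} = i_{g}\omega(\cdot)$ and vanishes otherwise, which is the statement of the corollary. (The existence of a charge operator $R_\alpha$ realizing the required representation is already furnished by the injectivity-based construction in the proof of Theorem~\ref{thm:string_order_parameter}.)

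To also justify the \emph{faithfulness} claim made in the surrounding text, I would recall that for finite — more generally, locally compact — Abelian $G$ the class $[\omega_L]\in\calH^2(G,\U(1))$ is uniquely determined by the antisymmetric bicharacter $(g,g')\mapsto i_{g}\omega_L(g')$; hence $[\omega_L]=[\omega_L']$ iff $i_{g}\omega_L = i_{g}\omega_{L'}$ for all $g\in G$. Combined with the previous paragraph, two strong injective MPDOs in distinct classes must differ in whether some $\calO_{\alpha,g}$ is non-decaying or vanishing, establishing that the string order parameters distinguish all strong-Abelian SPT phases.

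I do not expect a substantial obstacle here. The only points needing care are (i) identifying the doubled-state slant product $i_{g}\omega$ appearing in Theorem~\ref{thm:string_order_parameter} with the single-copy quantity, which was already handled inside that proof through $i_{g}\omega(g'\bar g') = i_{g}\omega(g')$ together with the commuting of the left and right projective representations established in \secref{sec:positivity_constraint}; and (ii) the cocycle-versus-class independence noted above, a routine cohomological remark. The faithfulness upgrade rests on the standard structure theorem for $\calH^2$ of Abelian groups, which I would either cite or prove by the usual antisymmetrization argument.
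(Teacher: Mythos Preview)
Your proposal is correct and follows essentially the same approach as the paper: the corollary is stated there as an immediate specialization of Theorem~\ref{thm:string_order_parameter} with $G'=G$, using that all elements of an Abelian $G$ commute. Your additional remarks on coboundary independence and the bicharacter classification of $\calH^2(G,\U(1))$ go slightly beyond what the paper spells out but are standard and sound.
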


We remark that Ref.~\cite{de2022symmetry} defines SPT phases protected by Abelian symmetry in locally decohered SPT states according to the non-decaying string order parameter.
They showed that the string order parameter is preserved under a strongly symmetric local channel as long as the channel is short of fully dephasing.
This is consistent with our result because a partially dephasing channel is non-degenerate and therefore preserves the SPT phase; the locally decohered Abelian SPT state belongs to the same phase of strong injective MPDO and features the non-decaying string order.

Theorem~\ref{thm:string_order_parameter} also applies when $G' = K$ is the weak symmetry, indicating the existence of string order parameters for the SPT phases that feature the mixed anomaly between the strong and the weak symmetry.
In this case, the SPT phases are fully specified by the slant product $i_{g}\omega(k)$ for $g\in G$ and $k \in K$.
We have the following corollary for general group $G$ and $K$, which are not necessarily Abelian.
\begin{corollary}
Consider a strong injective MPDO $\rho$ in an SPT phase protected by symmetry $G\times K$, where $G$ is strong symmetry and $K$ is weak symmetry. The string order parameter $\calO_{\alpha, g}$ for $g \in G$ is non-decaying in $\rho$ if $e^{-\ri\alpha(\cdot)} = i_{g}\omega(\cdot)$, while $\calO_{\alpha, g}$ vanishes otherwise.
\end{corollary}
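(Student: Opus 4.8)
\emph{Proof proposal.} The statement is a direct specialization of Theorem~\ref{thm:string_order_parameter} to the choice $G' = K$, so the plan is mostly to check that the hypotheses transfer and that the resulting diagnostic is faithful. First I would observe that the protecting symmetry is a \emph{direct product} $G\times K$, hence every $g\in G$ commutes with every $k\in K$; the commutation hypothesis $gg'=g'g$ of Theorem~\ref{thm:string_order_parameter} is therefore automatically met with $g'=k\in K$, and by Lemma~\ref{lemma:commutator} the slant product $i_{g}\omega(\cdot):K\to\U(1)$ is a genuine one-dimensional representation of $K$ for each fixed $g\in G$. Note that the theorem only requires $G$ to be strong and imposes no symmetry-type restriction on $G'$, so the fact that $K$ is weak is irrelevant for its application.

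Next I would recall from Sec.~\ref{sec:classification} that, after imposing Hermiticity, semi-positivity and the strong injectivity conditions, the allowed mixed-state SPT phase protected by $G\times K$ is labeled by the pair $\left([\omega_L],\,[i_{\sfk^\rw}\omega(\sfg^l)]\right)\in\calH^2(G,\U(1))\oplus\calH^1(G,\calH^1(K,\U(1)))$, and that the mixed-anomaly piece is precisely the family of slant products $g\mapsto i_{g}\omega(\cdot)$. Thus the hypothesis of Theorem~\ref{thm:string_order_parameter} (that $i_{g}\omega$ be a nontrivial one-dimensional representation of $K$) is exactly the condition that this anomaly piece be nonzero on the chosen $g$, so the diagnostic stated in the corollary does probe a genuine label of the phase. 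Applying the theorem with $G'=K$ then gives, for each $g\in G$: the string order parameter $\calO_{\alpha,g}$ built from a charge operator $R_\alpha$ with $e^{-\ri\alpha(\cdot)}=i_{g}\omega(\cdot)$ satisfies $\tr\rho\,\calO_{\alpha,g}=\tr(R_{\alpha^*}M_L)\,\tr(R_\alpha M_R)$ with both factors nonzero constants (using that $\bbA$ is injective and the transfer matrix has a unique largest eigenvalue, so that $M_R$ can be written as a fixed contraction with an invertible operator $R'$ in the irrep $i^{*}_{g}\omega(\cdot)$ inserted, and taking $R_\alpha\propto(R')^{-1}$), hence it is non-decaying; and when $e^{-\ri\alpha(\cdot)}\neq i_{g}\omega(\cdot)$ the selection rule on the adjoint-type tensor $M_R$ in \eqnref{adjointRepresentation} forces $\tr(R_\alpha M_R)=0$, so $\calO_{\alpha,g}$ vanishes. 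The only point requiring a word of care is the degenerate case in which $i_{g}\omega$ is the trivial representation of $K$: then $e^{-\ri\alpha(\cdot)}=1$, $R'$ is an invertible operator commuting with all $U_k$, and the same argument still produces nonzero constants, so $\calO_{\alpha,g}$ is non-decaying for the trivial charge as well.

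Since the corollary is essentially a repackaging of Theorem~\ref{thm:string_order_parameter}, I do not expect a substantive obstacle; the main thing to be careful about is simply making explicit that the mixed-anomaly sector $\calH^1(G,\calH^1(K,\U(1)))$ is faithfully labeled by the collection $\{i_{g}\omega(\cdot)\}_{g\in G}$, which is what guarantees that the vanishing-versus-non-decaying dichotomy actually distinguishes phases rather than merely detecting one representative. I would close with the remark that, combined with the previous corollary (obtained instead by taking $G'=G$), this yields a complete string-order diagnostic of the full classification $\calH^2(G,\U(1))\oplus\calH^1(G,\calH^1(K,\U(1)))$ whenever the strong symmetry $G$ is Abelian.
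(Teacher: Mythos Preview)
Your proposal is correct and follows essentially the same approach as the paper: you specialize Theorem~\ref{thm:string_order_parameter} to $G'=K$, observing that the direct-product structure $G\times K$ guarantees the required commutation $gk=kg$ and that the theorem places no symmetry-type restriction on $G'$. The paper treats this corollary in the same way, simply remarking that the theorem ``also applies when $G'=K$ is the weak symmetry'' and that the slant products $i_g\omega(k)$ fully specify the $\calH^1(G,\calH^1(K,\U(1)))$ component; your additional care about the trivial-representation case and the faithfulness of the diagnostic is a welcome elaboration but not a different route.
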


\section{Two-dimensional generalization}\label{sec:generalization}

In this section, we study SPT phases in two-dimensional mixed states.
We introduce two-dimensional tensor network density operators with strong semi-injectivity, generalizing the concept of semi-injective projected entangled pair states (PEPS).
We provide a classification of the SPT phases in the corresponding doubled state.

\subsection{2d SPT and MPO injectivity}

Our first job is to find a suitable generalization of strong injective MPDO into two dimensions, which would allow us to properly define mixed-state SPTs. 
One crucial point is that an injective PEPS (as a generalization of injective MPS) fails to encapsulate various SPT or topological ground states of gapped local Hamiltonians in higher dimensions~\cite{Chen_2011, PEPSlocalHam2019}. 
In particular, injective PEPS can only represent weak SPT~\cite{directSplit}, which is the stacking of 1d SPT protected by the translation symmetry. 
This stems from the fundamental theorem of PEPS, which states that any two injective tensors $A$ and $B$ generating the same state should be related by the following gauge transformation~\cite{perez2009canonical}:
\begin{align}
    \begin{tikzpicture}[line cap=round,line join=round,x=1cm,y=1cm,baseline={(0, -\MathAxis pt)}]
\definecolor{myred}{RGB}{148, 181, 235};
\definecolor{myturquoise}{RGB}{83,195,189};
    \draw[color=black, line width=1.5] (0,0) -- (0.5,-0.5);    
    \draw[black, line width=1.0] (-0.8,0) -- (0.8,0);
    \draw[black, line width=1.0] (0,-0.8) -- (0,0.8);    
    \draw[black, fill=white, line width=1.0] (-0.3,-0.3) rectangle (0.3,0.3);
    \node at (0,0) {$A$};
\end{tikzpicture} \,=\, 
\begin{tikzpicture}[line cap=round,line join=round,x=1cm,y=1cm,baseline={(0, -\MathAxis pt)}]
\definecolor{myred}{RGB}{148, 181, 235};
\definecolor{myturquoise}{RGB}{83,195,189};
    \draw[color=black, line width=1.5] (0,0) -- (0.5,-0.5);    
    \draw[black, line width=1.0] (-1.2,0) -- (1.2,0);
    \draw[black, line width=1.0] (0,-1.2) -- (0,1.2);    
    \draw[black, fill=white, line width=1.0] (-0.3,-0.3) rectangle (0.3,0.3);
    \node at (0,0) {$B$};
    \draw[black, fill=myred, line width=1.0] (-0.2,-1) rectangle (0.2,-0.6);
    \node at (0.45,-1) {$Y$};
    \draw[black, fill=myred, line width=1.0] (-0.2,0.6) rectangle (0.2,1);
    \node at (0.65,1) {$Y^{-1}$};    
    \draw[black, fill=myred, line width=1.0] (-1,-0.2) rectangle (-0.6,0.2);
    \node at (-0.8,0.45) {$X^{-1}$};
    \draw[black, fill=myred, line width=1.0] (0.6,-0.2) rectangle (1,0.2);
    \node at (1,0.45) {$X$};
\end{tikzpicture}.
\end{align}
Accordingly, if a given injective PEPS is invariant under on-site symmetry $G$, we obtain that
\begin{align}
    U_g A = A (X_g \otimes X_g^{-1} \otimes Y_g \otimes Y_g^{-1}).
\end{align}
When $X_g$ or $Y_g$ forms a non-trivial projective representation of $G$, the 2d PEPS is in a weak SPT phase characterized by a stacking of 1d SPTs.

To describe general 2d SPT pure states, one needs to consider a more generalized class of PEPS, where the on-site symmetry action $U_g$ is not simply translating into a tensor product of gauge transformation; rather, $U_g$ is given as a matrix product operator (MPO) action $V_g$ along the virtual legs as follows:
\begin{align} \label{eq:MPO_sym}
    \begin{tikzpicture}[line cap=round,line join=round,x=1cm,y=1cm,baseline={(0, -\MathAxis pt)}]
\definecolor{myred}{RGB}{148, 181, 235};
\definecolor{myturquoise}{RGB}{83,195,189};
    \draw[color=black, line width=1.5] (0,0) -- (0.5,-0.5);    
    \node at (0.7,-0.7) {$U_g$};
    \draw[black, line width=1.0] (-0.8,0) -- (0.8,0);
    \draw[black, line width=1.0] (0,-0.8) -- (0,0.8);    
    \draw[black, fill=white, line width=1.0] (-0.3,-0.3) rectangle (0.3,0.3);
    \node at (0,0) {$A$};
\end{tikzpicture} \,=\, \begin{tikzpicture}[line cap=round,line join=round,x=1cm,y=1cm,baseline={(0, -\MathAxis pt)}]
\definecolor{myred}{RGB}{148, 181, 235};
\definecolor{myturquoise}{RGB}{83,195,189};
    \draw[color=black, line width=1.5] (0,0) -- (0.5,-0.5);    
    \draw[black, line width=1.0] (-1.2,0) -- (1.2,0);
    \draw[black, line width=1.0] (0,-1.2) -- (0,1.2);    
    \draw[black, fill=white, line width=1.0] (-0.3,-0.3) rectangle (0.3,0.3);
    \node at (0,0) {$A$};
    \draw[myred, line width=2.0]  (0,-0.8) .. controls (0.8,-0.8) .. (0.8,0);
    \draw[myred, line width=2.0]  (0.8,0) .. controls (0.8,0.8) .. (0,0.8);
    \draw[myred, line width=2.0]  (0,0.8) .. controls (-0.8,0.8) .. (-0.8,0);
    \draw[myred, line width=2.0]  (-0.8,0) .. controls (-0.8,-0.8) .. (0,-0.8);    
    \draw[black, fill=myred, line width=1.0] (-0.2,-1) rectangle (0.2,-0.6);
    \draw[black, fill=myred, line width=1.0] (-0.2,0.6) rectangle (0.2,1);
    \draw[black, fill=myred, line width=1.0] (-1,-0.2) rectangle (-0.6,0.2);
    \draw[black, fill=myred, line width=1.0] (0.6,-0.2) rectangle (1,0.2);
    \node[text=black] at (1,0.8) {$V_g$};
\end{tikzpicture},
\end{align}
where blue legs visualize that $V_g$ is not a simple tensor product.

Is it possible to refine the concept of injectivity in PEPS to accommodate symmetries as indicated by \eqnref{eq:MPO_sym}, without overly broadening the scope to include states with topological order or gaplessness? 
Williamson et al.~\cite{Williamson_2016} proposed that single-blocked MPO-injective PEPS provides a faithful description of 2d SPT states. MPO-injective PEPS $A:\mathbb{C}^{D^z}\,{\rightarrow}\,\mathbb{C}^d$ admits the pseudoinverse as $A^+$ such the projection $A^+ A:(\mathbb{C}^D)^{\otimes z} \,{\rightarrow}\, (\mathbb{C}^D)^{\otimes z}$ is expressed as the matrix product operator (MPO) that is built from copies of a single local tensor $P$ with $P_{\mu\nu}:\mathbb{C}^D\,{\rightarrow}\,\mathbb{C}^D$ where $\mu,\nu$ are MPO bond indices~\cite{_ahino_lu_2021}. 
Specifically, when $P$ is an injective MPO, or \emph{single-block}, the MPO symmetries $V_g$s align with this single-blocked structure~\cite{Williamson_2016}.
Moreover, such a single-block MPO-injective PEPS is associated with a local, gapped parent Hamiltonian with a unique ground state\footnote{Multi-block MPO-injective PEPS can have a topological order with the ground state degeneracy larger than one.}.

Nevertheless, the principle of MPO-injectivity is primarily established on axioms, with \eqnref{eq:MPO_sym} accepted rather than derived as a fundamental theorem. 
Addressing this gap, Molnar et al. (2018) rigorously proved Eq.~\eqref{eq:MPO_sym} for a \emph{semi-injective} PEPS; semi-injective PEPS  includes a broad spectrum of states, notably 2d SPT states, and also guarantees the existence of a local gapped Hamiltonian with a unique ground state for such PEPS.
It is posited that semi-injective PEPS might be considered equivalent to single-block MPO-injective PEPS under mild conditions.
Therefore, for a more concrete discussion, we adopt the notion of semi-injectivity in what follows.

\begin{definition}
(Semi-injective PEPS) A semi-injective PEPS $\ket{\Psi[A]}$ on a square lattice is defined by two objects, a four-body state $\psi$ and an invertible operator $O$: at each site, there are four virtual degrees of freedom (dof). At each plaquette, four dofs from four sites form an entangled state $\psi$ (gray color). Then, $O$ (dashed line) maps four unentangled virtual dofs at each site into a physical wavefunction:
\begin{align} \label{eq:semiPEPS}
    |\Psi[A(\psi,O)] \rangle \,=\,
    \begin{tikzpicture}[line cap=round,line join=round,x=1cm,y=1cm,baseline={(0, -\MathAxis pt)}]
\definecolor{myred}{RGB}{135, 10, 10};
\definecolor{myblue}{RGB}{204, 194, 194};
\definecolor{myturquoise}{RGB}{83,195,189};
\draw[black, line width=1, fill=myblue] (-0.7, -0.7) rectangle (0.7,0.7);
\fill[fill=myblue] (0.8,-0.7) rectangle (1.4,0.7);
\fill[fill=myblue] (-1.4,-0.7) rectangle (-0.8,0.7);
\fill[fill=myblue] (-0.7,-1.4) rectangle (0.7,-0.8);
\fill[fill=myblue] (-0.7,0.8) rectangle (0.7,1.4);
\fill[fill=myblue] (-1.4,-1.4) rectangle (-0.8,-0.8);
\fill[fill=myblue] (0.8,-1.4) rectangle (1.4,-0.8);
\fill[fill=myblue] (-1.4,0.8) rectangle (-0.8,1.4);
\fill[fill=myblue] (0.8,0.8) rectangle (1.4,1.4); 
\foreach \i in {-0.5,0.5}
{
    \foreach \j in {-0.5,0.5}
    {
        \draw[black, fill=white, line width=1.0] (-0.2 + 1.5*\i,-0.2+1.5*\j) circle (0.1);
        \draw[black, fill=white, line width=1.0] (0.2 + 1.5*\i,-0.2+1.5*\j) circle (0.1);
        \draw[black, fill=white, line width=1.0] (-0.2 + 1.5*\i,0.2+1.5*\j) circle (0.1);
        \draw[black, fill=white, line width=1.0] (0.2 + 1.5*\i,0.2+1.5*\j) circle (0.1);
    }
}
\draw[black, dotted, line width=1] (-0.75,-0.75) circle (0.45);
\draw[black, dotted, line width=1] (0.75,-0.75) circle (0.45);
\draw[black, dotted, line width=1] (-0.75,0.75) circle (0.45);
\draw[black, dotted, line width=1] (0.75,0.75) circle (0.45);
\node[text=black] at (-0.15,-1.05) {$\bm{O}$};
\node[text=black] at (0.,0.) {$\bm{\psi}$};
\end{tikzpicture}  
\end{align}
\end{definition} 

The fundamental theorem for semi-injective PEPS~\cite{Molnar_2018} states that the action of an on-site symmetry $g \in G$ on the physical legs in the region $\calR$ transforms into the action on the virtual legs in the boundary $\partial \calR$ as
\begin{align} \label{eq:symmMPO}
    U_g(\calR) A(\calR) = A(\calR) V_g(\partial \calR)
\end{align}
where $V_g(\partial \calR)$ is an MPO acting on the virtual legs in $\partial \calR$, as visualized in \figref{fig:PEPS}(a)~\footnote{More precisely, in \cite{Molnar_2018} the $V_g(\partial \calR)$ was defined on the  Hilbert space of the minimal rank decomposition of four-qubit state $\psi$ in semi-injective PEPS. If the projection from the PEPS bond Hilbert space to this Hilbert space is $P$, one can always find a pseudoinverse $P^+$ (not unique) to bring $V_g(\partial \calR)$ into the PEPS bond Hilbert space.}. 
For a horizontal strip $\calR$, $V_g(\partial \calR) = V_g(\partial \calR_\textrm{top}) \cdot V^{-1}_g(\partial \calR_\textrm{bot})$ consists of two parts acting on top and bottom boundaries. Note that the operations on the top and bottom boundaries cancel each other such that the symmetry action on a larger strip $\calR$ still results in the virtual action on its two boundaries. 
One can obtain a similar behavior for a vertical strip.
We would use $V_g$ to denote a single local tensor for an MPO $V_g(\rd \calR)$.

\begin{figure}[t]
\centering
\includegraphics[width=1\columnwidth]{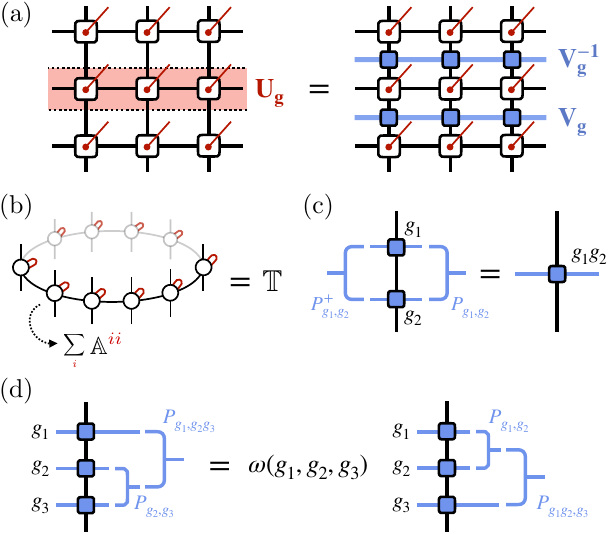}
\caption{ 
(a) In a semi-injective PEPS, the action of an on-site symmetry $U_g$ on a strip $R$  transforms into the action of an MPO on the boundaries $V_g$ (bottom) and $V_g^{-1}$ (top). 
(b) For a given 2d local tensor $\bbA$, one contract two physical legs (red lines). The transfer matrix $\bbT$ on an infinite cylinder can be obtained by contracting it along the circumference. This transfer matrix captures correlation along the infinite direction.  (c) Given two MPO symmetries $V_{g_1}$ and $V_{g_2}$, one can reduce them into $V_{g_1, g_2}$ by applying reduction tensors $P_{g_1,g_2}$ and $P_{g_1,g_2}^+$ from its left and right. (d) Two different ways to merge three symmetry MPOs from one end give rise to a relative phase factor  $\omega(g_1,g_2,g_3)$.}
\label{fig:PEPS}
\end{figure}

With these definitions in hand, we can extend the concept from pure states to two-dimensional tensor-network density operators (TNDO) as follows:
\begin{definition}\label{def:strong_semi_injective}
(Strong semi-injective TNDO) Consider a TNDO  $\rho[\bbA]$ with local tensor $\bbA$ with two physical and four virtual indices. It is strong semi-injective if 
\begin{enumerate}
    \item The doubled state $\kket{\rho[\bbA] }$ for the TNDO is a semi-injective PEPS, where $\bbA = \bbA[\mathbb{\psi},\mathbb{O}]$.
    \item The transfer matrix $\bbT$ of the TNDO on the infinite cylinder is defined by contracting local tensor $\sum_i \bbA^{ii}$ (two superscripts are physical indices for ket and bra) as in \figref{fig:PEPS}(b). $\mathbb{T}$ has a unique largest eigenvalue at any circumference length $L$, and the corresponding eigenvector is given as an injective MPS with a finite bond dimension.
\end{enumerate}
These two conditions are the direct generalizations of the conditions for MPDO. 
The first condition suggests the density matrix represents a short-range entangled state within a doubled Hilbert space. 
The second condition ensures that the correlations are short-range within the physical Hilbert space. If the largest eigenvalue of $\bbT$ is not unique, there exists either a local or loop operator with a long-range correlation along the cylinder's axis, which deviates from the intended short-range correlated criteria.
\end{definition}

A natural consequence of this definition is that the non-degenerate onsite quantum channel preserves strong semi-injectivity.
This is because a non-degenerate onsite channel acts on physical Hilbert space at each site as an injective map, thus its action can be absorbed into an invertible map $\mathbb{O}$. Moreover, its action cannot change the transfer matrix $\bbT$.

Equipped with \eqnref{eq:symmMPO}, one can study how MPO symmetries are fused or decomposed. As MPO symmetries $\{V_g | g \in G\}$ should form a representation of $G$, given two MPO symmetries $V_{g_2}$ and $V_{g_1}$, there exists a three-leg \emph{reduction} tensor $P_{g_1,g_2}$ (as a surjective map from two virtual bonds to one virtual bond) and its pseudo-inverse $P^+_{g_1,g_2}$ such that it merges $V_{g_1}$ and $V_{g_2}$ into $V_{g_2 g_1}$.
Taking further steps, one can imagine the composition of three MPO symmetries. At each end, there are two different ways to compose $(g_1, g_2, g_3)$ as illustrated in \figref{fig:PEPS}(c), and this leads to the phase factor $\omega$ (c.f. \eqnref{eq:2-cocycle-proj}):
\begin{align} \label{eq:3rdCohom}
P_{g_1, g_2 g_3} {}^{g_1}\!P_{g_2, g_3}  &= \omega(g_1,g_2,g_3) P_{g_1 g_2, g_3} P_{g_1, g_2},
\end{align}
where superscript on the left of $P_{g_2,g_3}$ corresponds to a complex conjugation if $g_1$ is anti-unitary~\cite{Else_2014, Chen_2011}. 
Note that the merging operation for $P^+$ on the left of the local MPO tensor produces the relative phase factor $\omega^{-1}$, which cancels the phase factor from merging $P$. 
The 3-cocycle $\omega$ is defined up to a gauge transformation as one can always redefine $P_{g,h} \mapsto \varphi(g,h) P_{g,h}$ and $P^+_{g,h} \mapsto \varphi^*(g,h) P^+_{g,h}$, i.e.
\begin{align}
    \omega(g_1,g_2,g_3) \mapsto \omega(g_1,g_2,g_3) \frac{\varphi(g_1g_2, g_3)\varphi(g_1, g_2)}{\varphi(g_1, g_2 g_3){}^{g_1}\!\varphi(g_2, g_3)}.
\end{align}
Furthermore, by considering two different reduction procedures for four MPO symmetries, we can arrive at the following consistency equation [c.f. Eq.~\eqref{eq:2-cocycle}]:
\begin{align}
    \frac{\omega(g_1, g_2, g_3) \omega(g_1, g_2 g_3, g_4) {}^{g_1}\!\omega(g_2,g_3,g_4)}{\omega(g_1 g_2, g_3, g_4) \omega(g_1, g_2, g_3 g_4)} = 1. \label{eq:3-cocycle}
\end{align}
As this is exactly the 3-cocycle condition, $\omega$ is given as a particular equivalence class of the third cohomology group $\calH^3(G,\U(1))$~\cite{Chen_2011}. 

Note that local tensors for vertical and horizontal MPO symmetries can form a projective representation: $V_g V_{h} = \lambda(g,h) V_{gh}$. This projective representation can be different for vertical and horizontal MPOs, each of which corresponds to a weak SPT protected by the translation $T_x$ and $T_y$, respectively. However, for an onsite symmetry group with a finite order, $\lambda^n = 1$ for some $n$. Accordingly, if we regroup $n$ sites along $x$ or $y$ direction, $\lambda \rightarrow \lambda^n = 1$ and this weak SPT structure can be trivialized. On the other hand, the equivalence class of $\omega$ is stable under regrouping multiple sites into a single site, thus $[\omega]$ provides a scale-invariant labeling of a semi-injective PEPS~\cite{Molnar_2018}.

\subsection{Classification}
Next, we classify the SPT phases of strong MPO-injective TNDO protected by a direct product of the strong symmetry $G$ and the weak symmetry $K$.
Again, we formulate the density operator as the doubled state and investigate the SPT phases of the doubled state subject to the physical constraints of the density matrix similar to Sec.~\ref{sec:classification}.
We focus on the \emph{strong} SPTs as in Ref.~\cite{directSplit} characterized by the third group cohomology of internal symmetries. 
The \emph{weak} SPT states in two dimension~\cite{directSplit}, which are a stack of 1d SPT states protected by the internal and translational symmetries, are constrained by the constraints discussed in Sec.~\ref{sec:classification}.

The doubled state $\kket{\rho}$ of the strong injective TNDO exhibits a symmetry $\bbG = \bbG_p \rtimes \bbZ_2^\bbH$, which consists of the physical symmetry $\bbG_p = G_l \times G_r \times K$ and the hermiticity symmetry $\mathbb{Z}_2^\mathbb{H}$, same as its 1d counterpart.
In two dimensions, the possible SPT phases are classified by the third cohomology group $\calH^3(\bbG, \U_T(1))$.

To begin, the anti-unitary $\bbZ_2^\bbH$ cannot protect a nontrivial 2d SPT phase by itself~\cite{chen2013symmetry}.
Moreover, as shown in Appendix~\ref{app:no_H_mixed_anomaly_2d}, one can pick a gauge of the reduction operator such that the 3-cocycle involving Hermiticity symmetry is trivial, i.e. $\omega(\bbH, \cdot, \cdot) = \omega(\cdot, \bbH, \cdot) = \omega(\cdot, \cdot, \bbH) = 1$.
This indicates the absence of mixed anomaly between $\bbZ_2^\bbH$ and $\bbG_p$.

The SPT phases are therefore only protected by the physical symmetry and are classified by the third cohomology group $\calH^3(\bbG_p, \U(1))$.
According to the K\"unneth decomposition,
\begin{align}
&\calH^3(\bbG_p, \U(1))  \nonumber\\
=&\calH^3(G_l, \U(1)) \oplus \calH^3(G_r, \U(1)) \oplus \calH^3(K, \U(1)) \nonumber \\
& \oplus \calH^1(G_l, \calH^2(K,\U(1))) \oplus \calH^1(G_r, \calH^2(K,\U(1)))\nonumber \\
& \oplus \calH^1(K, \calH^2(G_l,\U(1))) \oplus \calH^1(K, \calH^2(G_r,\U(1))) \nonumber \\
& \oplus \calH^1(G_l, \calH^2(G_r,\U(1))) \oplus \calH^1(G_r, \calH^2(G_l,\U(1))) \nonumber \\
&\oplus \calH^1(K,\calH^1(G_l,\calH^1(G_r,\U(1)))).\label{eq:kunneth2}
\end{align}
As shown in Appendix~\ref{app:direct-sum}, each component in the decomposition has a faithful representation given by the slant product.
In what follows, we discuss the constraints on the possible phases due to the Hermiticity, semi-positivity, and strong injectivity of the density operator.

\subsubsection{Absence of SPT phase protected by weak symmetry}
To show the absence of nontrivial SPT phases protected by the weak symmetry alone, we consider the transfer matrix $\bbT$ contracted along the circumference in the cylindrical geometry as in \figref{fig:PEPS}(b):
\begin{align}
\bbT = \begin{tikzpicture}[line cap=round,line join=round,x=1cm,y=1cm,baseline={(0, -\MathAxis pt)}]
\draw[black!80, line width=1.0] (-0.5,0) -- (4.5,0);
\foreach \x in {0,1,...,4}{
\draw[black!80, line width=1.0] (\x,-0.5) -- (\x, 0.5);
\draw[black!80, line width=1.0] (\x,0.1) -- (\x+0.4, 0.5);
\draw[black!80, line width=1.0] (\x+0.1,0) -- (\x+0.5, 0.4);
\draw[black, line width=1.0] (\x+0.5,0.4) arc(0:90:0.1);
\filldraw[fill=white, draw=black!80, line width=1.0] (\x-0.25,-0.25) rectangle (\x+0.25,0.25);
}
\node[black] at (2,0) {$\bbA$};
\end{tikzpicture}.\label{eq:2d_transfer_matrix}
\end{align}
The strong injectivity of the TNDO requires that the eigenvector associated with the largest eigenvalue of $\bbT$ is in the form of injective MPS
\begin{align}
|\phi[C]) = \begin{tikzpicture}[line cap=round,line join=round,x=1cm,y=1cm,baseline={(0, -\MathAxis pt)}]
\draw[black!80, line width=1.0] (-0.5,0) -- (4.5,0);
\foreach \x in {0,1,...,4}{
\draw[black!80, line width=1.0] (\x,0) -- (\x, 0.5);
\filldraw[fill=white, draw=black!80, line width=1.0] (\x-0.25,-0.25) rectangle (\x+0.25,0.25);
}
\node[black] at (2,0) {$C$};
\end{tikzpicture}.
\end{align}

The global symmetry transformation $U_{\sfk^{\rw}}$ with $k \in K$ on the 2d state does not change the transfer matrix $\bbT$; the transformations on the ket and the bra Hilbert space cancel after tracing over physical legs. 
We thus have
\begin{align}
    \bbT = V_{\sfk^{\rw}}^{-1} \bbT V_{\sfk^{\rw}}.
\end{align}
This indicates that $|\phi[C])$ is symmetric under the projective MPO symmetry, i.e. $V_{{\sfk^{\rw}}}$ acting on $|\phi[C])$ produces a global phase, $V_{\sfk^{\rw}} |\phi[C]) = e^{\ri\theta_{\sfk^{\rw}}}|\phi[C])$.
Hence, according to Ref.~\cite{Molnar_2018,Chen_2011}, there exists a reduction $Q_{{\sfk^{\rw}}}$ such that
\begin{align}
\begin{tikzpicture}[line cap=round,line join=round,x=1cm,y=1cm,baseline={(0, -\MathAxis pt)}]
\definecolor{myred}{RGB}{240,83,90};
\definecolor{myblue}{RGB}{73,103,189};
\definecolor{myturquoise}{RGB}{83,195,189};
\draw[black!80, line width=1.0] (-1,0) -- (1,0);
\draw[black!80, line width=1.0] (-0.7,1) -- (0.7,1);
\def\x{0}
\draw[black!80, line width=1.0] (\x,0) -- (\x, 1.5);
\filldraw[fill=white, draw=black!80, line width=1.0] (\x-0.3,-0.3) rectangle (\x+0.3,0.3);
\filldraw[fill=white, draw=black!80, line width=1.0] (\x-0.3,0.7) rectangle (\x+0.3,1.3);
\filldraw[fill=myblue!80, draw=myblue, line width=1.0] (-0.6,-0.1) rectangle (-0.8,1.1);
\filldraw[fill=myblue!80, draw=myblue, line width=1.0] (0.6,-0.1) rectangle (0.8,1.1);
\node[black] at (0,1.0) {$V_{\sfk^{\rw}}$};
\node[black] at (0,0) {$C$};
\node[myblue] at (1.2,1.0) {$Q_{\sfk^{\rw}}$};
\end{tikzpicture} 
= 
\begin{tikzpicture}[line cap=round,line join=round,x=1cm,y=1cm,baseline={(0, -\MathAxis pt)}]
\draw[black!80, line width=1.0] (-0.5,0) -- (0.5,0);
\def\x{0}
\draw[black!80, line width=1.0] (\x,0) -- (\x, 0.5);
\filldraw[fill=white, draw=black!80, line width=1.0] (\x-0.3,-0.3) rectangle (\x+0.3,0.3);
\node[black] at (0,0) {$C$};
\end{tikzpicture}.
\end{align}
The above property indicates that $|\phi[C])$ is symmetric under the weak symmetry transformation.
Thus, $|\phi[C])$ is non-anomalous under the weak symmetry, suggesting that the weak symmetry cannot protect nontrivial SPT phases by itself.
Specifically, consider local tensors $V_{\sfk^\rw_1}, V_{\sfk^\rw_2}, V_{\sfk^\rw_3}$ associated with the MPO for weak symmetry. 
The reduction among the three MPOs and the eigenstate $|\phi[C])$ satisfies a pentagon equation~\cite{Chen_2011}, 
\begin{align}
\begin{tikzpicture}
[line cap=round,line join=round,x=1cm,y=1cm,baseline={(0, -\MathAxis pt)}]
\definecolor{myred}{RGB}{240,83,90};
\definecolor{myblue}{RGB}{73,103,189};
\definecolor{myturquoise}{RGB}{83,195,189};
\draw[black!80, line width=1.0] (-0.4,0) -- (1.75,0);
\draw[black!80, line width=1.0] (-0.4,0.75) -- (0.7,0.75);
\draw[black!80, line width=1.0] (-0.4,1.5) -- (1.1,1.5);
\draw[black!80, line width=1.0] (-0.4,2.25) -- (1.5,2.25);
\filldraw[fill=myblue!80, draw=myblue, line width=1.0] (+0.6,-0.1) rectangle (+0.8,0.85);
\node[myblue] at (+0.7,1.1) {$Q_{\sfk^{\rw}_3}$};
\filldraw[fill=myblue!80, draw=myblue, line width=1.0] (+1.0,-0.1) rectangle (+1.2,1.6);
\node[myblue] at (+1.1,1.85) {$Q_{\sfk^{\rw}_2}$};
\filldraw[fill=myblue!80, draw=myblue, line width=1.0] (+1.4,-0.1) rectangle (+1.6,2.35);
\node[myblue] at (+1.5,2.6) {$Q_{\sfk^{\rw}_1}$};
\def\xtwo{3};
\draw[black!80, line width=1.0] (\xtwo-0.4,0) -- (\xtwo+1.35,0);
\draw[black!80, line width=1.0] (\xtwo-0.4,0.75) -- (\xtwo+0.7,0.75);
\draw[black!80, line width=1.0] (\xtwo-0.4,1.5) -- (\xtwo+1.1,1.5);
\draw[black!80, line width=1.0] (\xtwo-0.4,2.25) -- (\xtwo+0.7,2.25);
\filldraw[fill=myblue!80, draw=myblue, line width=1.0] (\xtwo+0.6,-0.1) rectangle (\xtwo+0.8,0.85);
\node[myblue] at (\xtwo+0.7,1.1) {$Q_{\sfk^{\rw}_3}$};
\filldraw[fill=myblue!80, draw=myblue, line width=1.0] (\xtwo+0.6,1.4) rectangle (\xtwo+0.8,2.35);
\node[myblue] at (\xtwo+0.7,2.6) {$P_{\sfk^{\rw}_1,\sfk^{\rw}_2}$};
\filldraw[fill=myblue!80, draw=myblue, line width=1.0] (\xtwo+1,-0.1) rectangle (\xtwo+1.2,1.6);
\node[myblue] at (\xtwo+1.3,1.85) {$Q_{\sfk^{\rw}_1\sfk^{\rw}_2}$};
\def\xthree{6-0.4};
\draw[black!80, line width=1.0] (\xthree-0.4,0) -- (\xthree+1.75,0);
\draw[black!80, line width=1.0] (\xthree-0.4,0.75) -- (\xthree+1.5,0.75);
\draw[black!80, line width=1.0] (\xthree-0.4,1.5) -- (\xthree+1.1,1.5);
\draw[black!80, line width=1.0] (\xthree-0.4,2.25) -- (\xthree+0.7,2.25);
\filldraw[fill=myblue!80, draw=myblue, line width=1.0] (\xthree+0.6,1.4) rectangle (\xthree+0.8,2.35);
\node[myblue] at (\xthree+0.7,2.6) {$P_{\sfk^{\rw}_1,\sfk^{\rw}_2}$};
\filldraw[fill=myblue!80, draw=myblue, line width=1.0] (\xthree+1,0.65) rectangle (\xthree+1.2,1.6);
\node[myblue] at (\xthree+1.5,1.85) {$P_{\sfk^{\rw}_1\sfk^{\rw}_2,\sfk^{\rw}_3}$};
\filldraw[fill=myblue!80, draw=myblue, line width=1.0] (\xthree+1.4,-0.1) rectangle (\xthree+1.6,0.85);
\node[myblue] at (\xthree+1.5,-0.45) {$Q_{\sfk^{\rw}_1\sfk^{\rw}_2\sfk^{\rw}_3}$};
\def\xfour{1.5};
\def\yy{-3.5}
\draw[black!80, line width=1.0] (\xfour-0.4,\yy+0) -- (\xfour+1.75,\yy+0);
\draw[black!80, line width=1.0] (\xfour-0.4,\yy+0.75) -- (\xfour+1.1,\yy+0.75);
\draw[black!80, line width=1.0] (\xfour-0.4,\yy+1.5) -- (\xfour+0.7,\yy+1.5);
\draw[black!80, line width=1.0] (\xfour-0.4,\yy+2.25) -- (\xfour+1.5,\yy+2.25);
\filldraw[fill=myblue!80, draw=myblue, line width=1.0] (\xfour+0.6,\yy+0.65) rectangle (\xfour+0.8,\yy+1.6);
\node[myblue] at (\xfour+0.7,\yy+1.85) {$P_{\sfk^{\rw}_2,\sfk^{\rw}_3}$};
\filldraw[fill=myblue!80, draw=myblue, line width=1.0] (\xfour+1,\yy-0.1) rectangle (\xfour+1.2,\yy+0.85);
\node[myblue] at (\xfour+1.1,\yy-0.35) {$Q_{\sfk^{\rw}_2\sfk^{\rw}_3}$};
\filldraw[fill=myblue!80, draw=myblue, line width=1.0] (\xfour+1.4,\yy-0.1) rectangle (\xfour+1.6,\yy+2.35);
\node[myblue] at (\xfour+1.5,\yy+2.6) {$Q_{\sfk^{\rw}_1}$};
\def\xfive{4.5};
\draw[black!80, line width=1.0] (\xfive-0.4,\yy+0) -- (\xfive+1.75,\yy+0);
\draw[black!80, line width=1.0] (\xfive-0.4,\yy+0.75) -- (\xfive+1.5,\yy+0.75);
\draw[black!80, line width=1.0] (\xfive-0.4,\yy+1.5) -- (\xfive+0.7,\yy+1.5);
\draw[black!80, line width=1.0] (\xfive-0.4,\yy+2.25) -- (\xfive+1.1,\yy+2.25);
\filldraw[fill=myblue!80, draw=myblue, line width=1.0] (\xfive+0.6,\yy+0.65) rectangle (\xfive+0.8,\yy+1.6);
\node[myblue] at (\xfive+0.7,\yy+0.4) {$P_{\sfk^{\rw}_2,\sfk^{\rw}_3}$};
\filldraw[fill=myblue!80, draw=myblue, line width=1.0] (\xfive+1,\yy+0.65) rectangle (\xfive+1.2,\yy+2.35);
\node[myblue] at (\xfive+1.1,\yy+2.6) {$P_{\sfk^{\rw}_1,\sfk^{\rw}_2\sfk^{\rw}_3}$};
\filldraw[fill=myblue!80, draw=myblue, line width=1.0] (\xfive+1.4,\yy-0.1) rectangle (\xfive+1.6,\yy+0.85);
\node[myblue] at (\xfive+1.9,\yy+1.1) {$Q_{\sfk^{\rw}_1\sfk^{\rw}_2\sfk^{\rw}_3}$};
\foreach \x in {0,3,6-0.4}{
\draw[black!80, line width=1.0] (\x,0) -- (\x, 2.75);
\filldraw[fill=white, draw=black!80, line width=1.0] (\x-0.25,-0.25) rectangle (\x+0.25,0.25);
\filldraw[fill=white, draw=black!80, line width=1.0] (\x-0.25,0.5) rectangle (\x+0.25,1);
\filldraw[fill=white, draw=black!80, line width=1.0] (\x-0.25,1.25) rectangle (\x+0.25,1.75);
\filldraw[fill=white, draw=black!80, line width=1.0] (\x-0.25,2) rectangle (\x+0.25,2.5);
\node[black] at (\x,2.25) {$V_{\sfk^{\rw}_1}$};
\node[black] at (\x,1.5) {$V_{\sfk^{\rw}_2}$};
\node[black] at (\x,0.75) {$V_{\sfk^{\rw}_3}$};
\node[black] at (\x,0) {$C$};
}
\def\y{-3.5}
\foreach \x in {1.5,4.5}{
\draw[black!80, line width=1.0] (\x,\y+0) -- (\x, \y+2.75);
\filldraw[fill=white, draw=black!80, line width=1.0] (\x-0.25,\y-0.25) rectangle (\x+0.25,\y+0.25);
\filldraw[fill=white, draw=black!80, line width=1.0] (\x-0.25,\y+0.5) rectangle (\x+0.25,\y+1);
\filldraw[fill=white, draw=black!80, line width=1.0] (\x-0.25,\y+1.25) rectangle (\x+0.25,\y+1.75);
\filldraw[fill=white, draw=black!80, line width=1.0] (\x-0.25,\y+2) rectangle (\x+0.25,\y+2.5);
\node[black] at (\x,\y+2.25) {$V_{\sfk^{\rw}_1}$};
\node[black] at (\x,\y+1.5) {$V_{\sfk^{\rw}_2}$};
\node[black] at (\x,\y+0.75) {$V_{\sfk^{\rw}_3}$};
\node[black] at (\x,\y+0) {$C$};
\draw[->,>=stealth, line width=1.0] (.4, -0.5) -- (0.7,-1) node[left]{$\varphi(\sfk^{\rw}_2,\sfk^{\rw}_3)$} -- (1,-1.5);
\draw[->,>=stealth, line width=1.0] (6.6, -0.9) -- (6,-1.5);
\node[black] at (6.8,-1.7){$\omega(\sfk^{\rw}_1,\sfk^{\rw}_2,\sfk^{\rw}_3)$};
\draw[->,>=stealth,black, line width=1.0] (3.3,-4) arc(180:360:0.3);
\node[black] at (3.6,-4.6){$\varphi(\sfk^{\rw}_1,\sfk^{\rw}_2\sfk^{\rw}_3)$};
\draw[->,>=stealth,black, line width=1.0] (1.8,2.7) arc(180:0:0.3);
\node[black] at (2.1,3.3){$\varphi(\sfk^{\rw}_1,\sfk^{\rw}_2)$};
\draw[->,>=stealth,black, line width=1.0] (4.4,2.7) arc(180:0:0.3);
\node[black] at (4.7,3.3){$\varphi(\sfk^{\rw}_1\sfk^{\rw}_2,\sfk^{\rw}_3)$};
}
\end{tikzpicture}\nonumber
\end{align}
For simplicity, we only draw the reduction on the right of the local tensors.
The relation indicates
\begin{align}
    \omega(\sfk^{\rw}_1,\sfk^{\rw}_2,\sfk^{\rw}_3) = \frac{\varphi(\sfk^{\rw}_1,\sfk^{\rw}_2\sfk^{\rw}_3)\varphi(\sfk^{\rw}_2,\sfk^{\rw}_3)}{\varphi(\sfk^{\rw}_1\sfk^{\rw}_2,\sfk^{\rw}_3)\varphi(\sfk^{\rw}_1,\sfk^{\rw}_2)}.\label{eq:2d_weak_symmetry_coboundary}
\end{align}
Thus, the 3-cocycle $\omega(\sfk^{\rw}_1,\sfk^{\rw}_2,\sfk^{\rw}_3)$ is coboundary, i.e. no SPT phases protected by the weak symmetry.

\subsubsection{Constraints from Hermiticity} 
The Hermiticity of the density matrix constrains the possible SPT phases protected by $\bbG_p$ by requiring the 3-cocycle to satisfy
\begin{align}
\omega(\sfg_1,\sfg_2,\sfg_3) = \omega^*(\bar{\sfg}_1, \bar{\sfg}_2, \bar{\sfg}_3),\label{eq:3-cocycle-hermiticity_constraint}
\end{align}
where  $\bar{\sfg} := \bbH \sfg \bbH$. This can be shown using the 3-cocycle condition in Eq.~\eqref{eq:3-cocycle}.

This relation puts constraints on the slant product
\begin{align}
[\omega(\sfg^l_1, \sfg^l_2, \sfg^l_3)] &= -[\omega(\sfg^r_1, \sfg^r_2, \sfg^r_3)], \\
[i_{\sfk^\rw}\omega(\sfg_1^l,\sfg_2^l)] &= -[i_{\sfk^\rw}\omega(\sfg_1^r,\sfg_2^r)] \\
[i_{\sfk_1^\rw} i_{\sfk_2^\rw}\omega(\sfg^l)] &= -[i_{\sfk_1^\rw} i_{\sfk_2^\rw}\omega(\sfg^r)].
\end{align}

\subsubsection{Constraints from semi-positivity} 
Here, we show the 2d SPT phases involving the mixed anomaly between the left and the right symmetry cannot exist due to the semi-positivity of the density matrix.
In the K\"unneth decomposition, these phases correspond to $\calH^1(G_l,\calH^2(G_r, \U(1)))$, $\calH^1(G_r,\calH^2(G_l, \U(1)))$, and $\calH^1(K, \calH^1(G_l, \calH^1(G_r, \U(1))))$.

To begin, the SPT phases in $\calH^1(G_l,\calH^2(G_r, \U(1)))$ has a faithful representation in terms of the slant product (as shown in Appendix~\ref{app:direct-sum})
\begin{align}
    i_{\sfg^l}\omega(\sfh^r_1, \sfh^r_2) := \frac{\omega(\sfg^l, \sfh^r_1, \sfh^r_2) \omega(\sfh^r_1, \sfh^r_2, \sfg^l)}{\omega(\sfh^r_1, \sfg^l, \sfh^r_2)}.
\end{align}
The slant product is a 2-cocycle, and its cohomology class $[i_{\sfg^l}\omega] \in \calH^2(G_r, \U(1))$.
At the same time, it is a one-dimensional representation of $G^l$, i.e. $i_{\sfg_1^l}\omega \; i_{\sfg_2^l}\omega = i_{\sfg_1^l\sfg_2^l}\omega$.
Hence, $[i_{\sfg^l}\omega] \in \calH^1(G_l,\calH^2(G_r, \U(1)))$.

The SPT state characterized by a non-trivial slant product $i_{\sfg^l}\omega$ exhibits a membrane order parameter 
\begin{align}
    \calM_{[i_{\sfg^l}\omega], \sfg^l}(\calR) := U_{\sfg^l}(\calR)\otimes W_{[i_{\sfg^l}\omega]}(\partial \calR),
\end{align} 
where $U_{\sfg^l}(\calR)$ is the partial symmetry transformation on region $\calR$, and $W_{[i_{\sfg^l}\omega]}(\partial \calR)$ is an entangler of 1d SPT in class $[i_{\sfg^l}\omega]$ on its boundary.
The membrane operator acquires a perimeter-law decaying expectation value in the doubled state with an increasing size of region $\calR$.

To show this, we consider a partial symmetry transformation $U_{\sfg^l}(\calR)$ in the region $\calR$ (upper half-plane), which fractionalizes into an MPO $V_{\sfg^l}(\partial \calR)$ on the virtual bond. 
As we will see, the MPO $V_{\sfg^l}(\partial \calR)$ acts like a one-dimensional SPT entangler.
Consider the symmetry transformations $U_{\sfh^r_2}$ followed by $U_{\sfh^r_1}$. 
As $V_{\sfg^l}$ is already inserted along $\rd \calR$, these symmetry transformations translate into a sequence of MPO symmetries $V_{\sfh^r_1} V_{\sfh^r_2} V_{\sfg^l} V_{\sfh^r_2}^{-1} V_{\sfh^r_1}^{-1}$, which can be reduced in two different ways as the following:
\begin{align}
\begin{tikzpicture}[scale=0.9,line cap=round,line join=round,x=1cm,y=1cm,baseline={(0, -\MathAxis pt)}]
\definecolor{myred}{RGB}{240,83,90};
\definecolor{myblue}{RGB}{73,103,189};
\definecolor{myturquoise}{RGB}{83,195,189};
\def\x{0};
\def\a{0.35};
\def\ar{0.1};
\foreach\y in {-6*\a,-3*\a, 0, 3*\a, 6*\a}{
\draw[black!80, line width=1.0] (-1.5*\a, \y) -- (2*\a, \y);
}
\draw[black!80, line width=1.0] (\x,-8*\a) -- (\x, 8*\a);
\filldraw[fill=white, draw=black!80, line width=1.0] (\x-\a,-2*\a) rectangle (\x+\a,-4*\a);
\filldraw[fill=white, draw=black!80, line width=1.0] (\x-\a,-5*\a) rectangle (\x+\a,-7*\a);
\filldraw[fill=white, draw=black!80, line width=1.0] (\x-\a,-\a) rectangle (\x+\a,\a);
\filldraw[fill=white, draw=black!80, line width=1.0] (\x-\a,2*\a) rectangle (\x+\a,4*\a);
\filldraw[fill=white, draw=black!80, line width=1.0] (\x-\a,5*\a) rectangle (\x+\a,7*\a);
\node[black] at (\x,6*\a) {$V_{\sfh^r_1}$};
\node[black] at (\x,3*\a) {$V_{\sfh^r_2}$};
\node[black] at (\x,-6*\a) {$V_{\sfh^r_1}^{-1}$};
\node[black] at (\x,-3*\a) {$V_{\sfh^r_2}^{-1}$};
\node[black] at (\x,0) {$V_{\sfg^l}$};
\draw[black!80, line width=1.0] (2*\a, 1.5*\a) -- (3*\a, 1.5*\a);
\draw[black!80, line width=1.0] (2*\a,-3*\a) -- (3*\a,-3*\a);
\draw[black!80, line width=1.0] (3*\a,-0.75*\a) -- (4*\a,-0.75*\a);
\draw[black!80, line width=1.0] (2*\a,6*\a) -- (4*\a,6*\a);
\draw[black!80, line width=1.0] (4*\a,2.625*\a) -- (5*\a,2.625*\a);
\draw[black!80, line width=1.0] (2*\a,-6*\a) -- (5*\a,-6*\a);
\draw[black!80, line width=1.0] (5*\a,0) -- (6*\a,0);
\filldraw[fill=myblue!80, draw=myblue, line width=1.0] (2*\a-\ar,-\ar) rectangle (2*\a+\ar,3*\a+\ar);
\filldraw[fill=myblue!80, draw=myblue, line width=1.0] (3*\a-\ar,1.5*\a+\ar) rectangle (3*\a+\ar,-3*\a-\ar);
\filldraw[fill=myblue!80, draw=myblue, line width=1.0] (4*\a-\ar,-0.75*\a-\ar) rectangle (4*\a+\ar,6*\a+\ar);
\filldraw[fill=myblue!80, draw=myblue, line width=1.0] (5*\a-\ar,2.625*\a+\ar) rectangle (5*\a+\ar,-6*\a-\ar);
\end{tikzpicture} 
= 
\begin{tikzpicture}[scale=0.9,line cap=round,line join=round,x=1cm,y=1cm,baseline={(0, -\MathAxis pt)}]
\definecolor{myred}{RGB}{240,83,90};
\definecolor{myblue}{RGB}{73,103,189};
\definecolor{myturquoise}{RGB}{83,195,189};
\def\x{0};
\def\a{0.35};
\def\ar{0.1};
\foreach\y in {-6*\a,-3*\a, 0, 3*\a, 6*\a}{
\draw[black!80, line width=1.0] (-1.5*\a, \y) -- (2*\a, \y);
}
\draw[black!80, line width=1.0] (\x,-8*\a) -- (\x, 8*\a);
\filldraw[fill=white, draw=black!80, line width=1.0] (\x-\a,-2*\a) rectangle (\x+\a,-4*\a);
\filldraw[fill=white, draw=black!80, line width=1.0] (\x-\a,-5*\a) rectangle (\x+\a,-7*\a);
\filldraw[fill=white, draw=black!80, line width=1.0] (\x-\a,-\a) rectangle (\x+\a,\a);
\filldraw[fill=white, draw=black!80, line width=1.0] (\x-\a,2*\a) rectangle (\x+\a,4*\a);
\filldraw[fill=white, draw=black!80, line width=1.0] (\x-\a,5*\a) rectangle (\x+\a,7*\a);
\node[black] at (\x,6*\a) {$V_{\sfh^r_1}$};
\node[black] at (\x,3*\a) {$V_{\sfh^r_2}$};
\node[black] at (\x,-6*\a) {$V_{\sfh^r_1}^{-1}$};
\node[black] at (\x,-3*\a) {$V_{\sfh^r_2}^{-1}$};
\node[black] at (\x,0) {$V_{\sfg^l}$};
\draw[black!80, line width=1.0] (2*\a, 4.5*\a) -- (3*\a, 4.5*\a);
\draw[black!80, line width=1.0] (2*\a,0) -- (3*\a,0);
\draw[black!80, line width=1.0] (4*\a,0) -- (5*\a,0);
\draw[black!80, line width=1.0] (3*\a,2.25*\a) -- (4*\a,2.25*\a);
\draw[black!80, line width=1.0] (2*\a,-4.5*\a) -- (4*\a,-4.5*\a);
\filldraw[fill=myblue!80, draw=myblue, line width=1.0] (2*\a-\ar,3*\a-\ar) rectangle (2*\a+\ar,6*\a+\ar);
\filldraw[fill=myblue!80, draw=myblue, line width=1.0] (2*\a-\ar,-3*\a+\ar) rectangle (2*\a+\ar,-6*\a-\ar);
\filldraw[fill=myblue!80, draw=myblue, line width=1.0] (3*\a-\ar,-\ar) rectangle (3*\a+\ar,4.5*\a+\ar);
\filldraw[fill=myblue!80, draw=myblue, line width=1.0] (4*\a-\ar,2.25*\a+\ar) rectangle (4*\a+\ar,-4.5*\a-\ar);
\end{tikzpicture}
\;\Lambda(\sfh^r_1, \sfh^r_2, \sfg^l),
\end{align}
where
\begin{align}
\Lambda(\sfh^r_1, \sfh^r_2, \sfg^l) := \frac{\omega(\sfh^r_1,\sfh^r_2\sfg^l, (\sfh^r_2)^{-1})\omega(\sfh^r_1,\sfh^r_2,\sfg^l)}{\omega(\sfh^r_1\sfh^r_2\sfg^l,(\sfh^r_1)^{-1},(\sfh^r_2)^{-1})}.
\end{align}
One can show that the phase difference with and without the MPO $V_{\sfg^l}$ inserted is given by the slant product (up to a coboundary term), i.e.
\begin{align}
\frac{\Lambda(\sfh^r_1, \sfh^r_2, \sfg^l)}{\Lambda(\sfh^r_1, \sfh^r_2, \mathsf{e})} = i_{\sfg^l}\omega(\sfh_1^r, \sfh_2^r).
\end{align}
Hence, the MPO $V_{\sfg^l}$ transforms as a 1d SPT state
labeled by $[i_{\sfg^l}\omega]$.

For the doubled state of TNDO represented by a semi-injective PEPS, one can always pull the MPO $V_{\sfg^l}$ to the Hilbert space of nearby physical qubits\footnote{This is because the MPO $V_{\sfg^l}$ acts on the Hilbert space of the minimal rank decomposition of four-qubit state $\psi$ in semi-injective PEPS. The operator in this space is mapped injectively to the physical Hilbert space~\cite{Molnar_2018}. Under the MPO-injectivity condition, the injective portion of $V_g$ can be still pulled onto the physical space~\cite{_ahino_lu_2021, Williamson_2016}.}, i.e. there always exists a 1d MPO $\tilde{W}^{-1}$ such that
\begin{align}
\begin{tikzpicture}[line cap=round,line join=round,x=1cm,y=1cm,baseline={(0, -\MathAxis pt)}]
\definecolor{myred}{RGB}{240,83,90};
\definecolor{myblue}{RGB}{73,103,189};
\definecolor{myturquoise}{RGB}{83,195,189};
\draw[black!80, line width=1.0] (-0.5,0) -- (2.5,0);
\foreach \x in {0,1,2}{
\draw[black!80, line width=1.0] (\x,-0.5) -- (\x, 0.7);
\draw[black!80, line width=1.0] (\x,0.1) -- (\x+0.4, 0.5);
\draw[black!80, line width=1.0] (\x+0.1,0) -- (\x+0.5, 0.4);
\filldraw[fill=myturquoise,draw=myturquoise] (\x-0.1,0.6) rectangle (\x+0.1,0.4);
\filldraw[fill=white, draw=black!80, line width=1.0] (\x-0.25,-0.25) rectangle (\x+0.25,0.25);
}
\draw[myturquoise,dashed,line width=1.0] (-0.5,0.5) -- (2.5, 0.5);
\node[myturquoise] at (1.4,0.8) {$V_{\sfg^l}$};
\node[black] at (1,0) {$\bbA$};
\end{tikzpicture}
=
\begin{tikzpicture}[line cap=round,line join=round,x=1cm,y=1cm,baseline={(0, -\MathAxis pt)}]
\definecolor{myred}{RGB}{240,83,90};
\definecolor{myblue}{RGB}{73,103,189};
\definecolor{myturquoise}{RGB}{83,195,189};
\draw[black!80, line width=1.0] (-0.5,0) -- (2.5,0);
\foreach \x in {0,1,2}{
\draw[black!80, line width=1.0] (\x,-0.5) -- (\x, 0.5);
\draw[black!80, line width=1.0] (\x,0.1) -- (\x+0.6, 0.7);
\draw[black!80, line width=1.0] (\x+0.1,0) -- (\x+0.7, 0.6);
\draw[myred, fill=myred] (\x+0.45, 0.45) circle (0.13);
\filldraw[fill=white, draw=black!80, line width=1.0] (\x-0.25,-0.25) rectangle (\x+0.25,0.25);
}
\draw[myred, dashed, line width=1.0] (-0.05, 0.45) -- (2.95,0.45);
\node[myred] at (3.2,0.8) {$\tilde{W}^{-1}$};
\end{tikzpicture}.
\end{align}
Hence, the symmetry transformation $U_{\sfg^l}$ decorated with $\tilde{W}$ acquires a unit  expectation value, $\lAngle \tilde{W}(\rd \calR) U_{\sfg^l}(\calR) \rAngle\,{=}\,1$.
The MPO $\tilde{W}$ is invertible as $V_{\sfg^l}$ is invertible; its inverse creates an SPT in class $-[i_{\sfg^l}\omega]$. Furthermore, as $\sfg^n\,{=}\,1$ for some finite $n$,  $\tilde{W}^n\,{=}\,1$ on the physical wavefunction, and the eigenvalues of $\tilde{W}$ are of unit magnitude. 

To further proceed, we want to find an MPO that satisfies the three conditions: (1) It is an SPT entangler purely supported on the right Hilbert space that creates the 1d SPT state labeled as $[i_{\sfg^l}\omega]$, (2) its MPO bond dimension is finite, and (3) its overlap per site with $\tilde{W}$ is finite, i.e., the overlap with $\tilde{W}(\calC)$ decays as a perimeter law. The general recipe to construct such an MPO is the following. First, we find a unitary operator $U$ acting on the Hilbert space of a left physical site $\cH^l_i$ such that $W\,{:=}\,\tr_{\cH^l_i}(\tilde{W} U^\dagger) \neq 0$ as illustrated in \figref{fig:cylinder}(a). Then, the MPO $U \otimes W$ in \figref{fig:cylinder}(b) must have the same MPO bond dimension as $\tilde{W}$ and creates the same SPT state as the symmetry action by $G_r$ on this MPO translates into the same gauge transformation as $\tilde{W}$. Furthermore, one can always find $U$ where $W \neq 0$, ensuring the overlap between $U \otimes W$ and $\tilde{W}$ is finite per site. 
We thus show the existence of a membrane order parameter
\begin{align}
\calM_{[i_{\sfg^l}\omega],\sfg^l}(\calR) = [U(\rd \calR) \cdot U_{\sfg^l}(\calR) ] \otimes W_{[i_{\sfg^l}\omega]}(\rd \calR),
\end{align}
which decays with a perimeter law.

\begin{figure}[t]
\centering
\includegraphics[width=1\columnwidth]{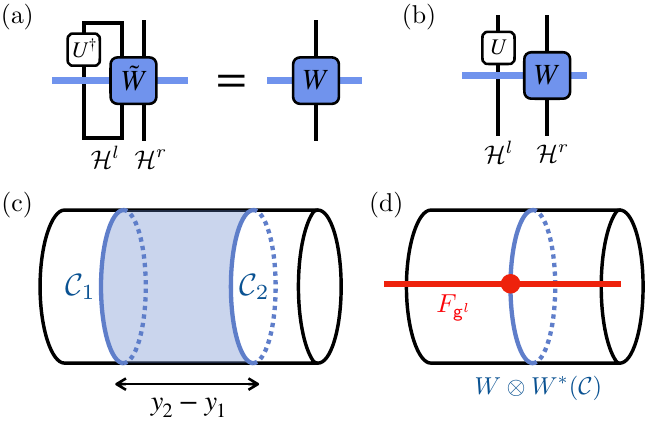}
\caption{ 
(a) Membrane order parameter with a perimeter law decay on shaded (blue) region implies that the correlation between two loops $\calC_1$ and $\calC_2$ is constant with the separation $y_2-y_1$. (b) The $\sfg^l \in G_l$ symmetry flux insertion through the cylinder $F_{\sfg^l}$ corresponds to the introduction of a $\sfg^l$-defect along the red line, which has a nontrivial commutation relation with $W \otimes W^*(\calC)$.}
\label{fig:cylinder}
\end{figure}

To further proceed, consider a cylindrical geometry with circumference size $L$. The existence of a membrane order parameter on the shaded region in \figref{fig:cylinder}(c) decaying with perimeter-law in the doubled state indicates the long-range correlation following the Cauchy-Schwarz inequality, which is similar to the derivation of Eq.~\eqref{eq:inequality}:
\begin{align}
    &\langle\!\langle W\otimes W^*(\calC_{1})\;W\otimes W^*(\calC_{2}) \rangle\!\rangle \xrightarrow{|y_2-y_1| \to \infty} \text{const.}
\end{align}
where $\calC_i$ is the loop around the cylinder at the location $y_i$.
With this, there are two possible scenarios for the behavior of $W\otimes W^*(\calC)$: 

\vspace{5pt}
\noindent $(i)$ Non-vanishing connected correlation: if the expectation value $\lAngle W\otimes W^*(\calC_{y}) \rAngle$ does not match its correlation in the infinite separation limit, there is a long-ranged connected correlation of the string operator $W\otimes W^*(\calC_{y})$, which contradicts the uniqueness of the largest eigenvalue of the transfer matrix as in Def.~\ref{def:strong_semi_injective}. 

\vspace{5pt}
\noindent $(ii)$ Vanishing connected correlation: in this case, the expectation value $\lAngle W\otimes W^*(\calC_{y}) \rAngle$  matches the correlation in the infinite separation limit. Accordingly, there is a perimeter-law decaying loop operator $W\otimes W^* (\calC)$ that has a mixed anomaly with $G_r$ and $G_l$. This is because $W\otimes W^* (\calC)$ creates 1d SPT states labeled by $[i_{\sfg^l} \omega]$ on the left Hilbert space and $-[i_{\sfg^l} \omega]$ on the right Hilbert space, and such entangler with finite MPO bond dimension should break the symmetry $G_r$ and $G_l$ locally\footnote{For example, a finite-depth circuit creating 1d $\mathbb{Z}_2 \times \mathbb{Z}_2$ cluster SPT state can be constructed as the product of controlled-Z gates, which respects the symmetry on the periodic boundary condition as a whole, but breaks the symmetry on the open boundary condition at each end.}. 

As a crucial step, we introduce the flux insertion operator $F_{\sfg^l}$ along the cylinder, which gives rise to a $\sfg^l$ (or $\sfg^r$) symmetry defect along the vertical line on the cylinder surface~\cite{Barkeshli_2019}, see \figref{fig:cylinder}(b). Such a line of $\sfg^l$-defect translates into the gauge transformation associated with $\sfg^l$ on the virtual legs crossed by the line. Consider a sequence of flux insertion operators $\{ F_{\sfg_i} \}$ such that $\prod_i \sfg_i = 1$ but its projective representation in $[i_{\sfg^l} \omega]$ generates $\prod_i V_{\sfg_i} = e^{i \eta} \neq 1$. As the doubled state $|\rho \rAngle$ is symmetric under $G_l$, if the doubled state is the unique symmetric ground state of a gapped parent Hamiltonian, the sequence of symmetry flux insertions that trivializes should map the state into itself, i.e., $\calF |\rho \rAngle = e^{i \phi} | \rho \rAngle$ where $\calF = \prod_i F_{\sfg^l_i}$. $\calF$ further satisfies the following relation
\begin{align} \label{eq:anomaly}
    \calF \cdot [W \otimes W^*(\calC)] = e^{i \eta}   [W \otimes W^*(\calC)] \cdot \calF
\end{align}
since the line defect that crosses with $W \otimes W^*(\calC)$ at a single point must give rise to a phase factor associated with the projective representation.  
However, plugging in $\calF |\rho \rAngle = e^{i \phi} | \rho \rAngle$ to \eqnref{eq:anomaly}, the constant expectation value of $\lAngle W \otimes W^*(\calC) \rAngle$ implies that $e^{i \eta} = 1$, a contradiction to the presence of the anomaly. 
We thus argue the absence of SPT phases in $\calH^1(G_l, \calH^2(G_r, \U(1)))$, and similarly for the phases in $\calH^1(G_r, \calH^2(G_l, \U(1)))$.

Next, we consider a non-trivial SPT phase in $\calH^1(K,\calH^1(G_l,\calH^1(G_r, \U(1))))$. 
Such an SPT phase is characterized by a decorated membrane operator in the doubled state
\begin{align}
    \calM_{[i_{\sfk^{\rw}}\omega],\sfk^{\rw}} := U_{\sfk^{\rw}}(\calR)\otimes W_{[i_{\sfk^{\rw}}\omega]}(\partial \calR),
\end{align}
Here, the weak symmetry transformation $U_{\sfk^\rw}$ acts from both left and right on the density matrix and preserves the positivity of the density matrix, so is $W_{[i_{\sfk^\rw}\omega]}$.
However, when acting on a symmetric product state, $W_{[i_{\sfk^\rw}\omega]}$ creates an SPT state with mixed anomaly between the left and the right symmetry. 
As shown in Sec.~\ref{sec:positivity_constraint}, the resulting SPT state violates the positivity of the density matrix, indicating that $W_{[i_{\sfk^\rw}\omega]}$ cannot be a positive map.
Thus, there cannot be such SPT phases.

\subsubsection{Summary}

To summarize, the Hermiticity and semi-positivity of the density matrix constrain the possible SPT phases in the doubled state of semi-injective TNDO, The resulting classification is given by $\calH^3(G_l, \U(1))\oplus\calH^2(G_l, \calH^1(K,\U(1)))\oplus\calH^1(G_l,\calH^2(K,\U(1)))$, which agrees with that obtained for average SPT in Ref.~\cite{ma2023average}.

To close, we comment on the constraints of possible doubled-state 2d SPT phases protected by strong 1-form and 0-form symmetry.
In this case, it remains unclear regarding the proper tensor network density operator that represents such states.
However, one can still constrain the mixed anomaly between the left and the right symmetry based on the semi-positivity of the density matrix.
Since the 1-form symmetry cannot protect SPT phases alone~\cite{ZhuLanWen}, the only possible mixed anomaly involving 1-form symmetry is between the left (right) 0-form symmetry and the right (left) 1-form symmetry. 
Phases with such anomaly would feature the string order given by partial 1-form symmetry decorated by 0-form charge, which according to the inequality in Eq.~\eqref{eq:inequality}, would imply long-range charge correlation, contradicting the assumption of short-range correlation.

We leave the generalization of our results to a broader class of 2d mixed states as well as the rigorous study of possible SPT mixed states protected by higher-form symmetries to future study.

\section{Discussion}\label{sec:discussion}
In this work, we introduce a broad class of one-dimensional mixed states described by strong injective MPDO, generalizing the concept of injective MPS. 
We define and classify the SPT phases in the strong injective MPDO $\rho$ protected by both strong and weak symmetry according to the cohomology class of projective representation in the corresponding doubled state.
In the strongly symmetric MPDO that admits a local purification, we show that the SPT phases have a compatible definition based on the equivalence class of mixed states under non-degenerate local channels.
Finally, we extend our framework to study SPT phases in two-dimensional mixed states, where we identify a strong semi-injective TNDO as a suitable generalization, and classify the possible SPT phases.

Notably, our definition of the mixed-state SPT phases is different from the definition based on the ``two-way" connection~\cite{coser2019classification,ma2023average,sang2023mixed}: two mixed states $\rho_{1,2}$ are in the same SPT phase if there exist symmetric local quantum channels $\calN_{1,2}$ such that $\rho_1 = \calN_2[\rho_2]$ and $\rho_2 = \calN_1[\rho_1]$.
Instead, we define the equivalence classes of symmetric mixed states based on a simpler ``one-way" connection by a \emph{non-degenerate} symmetric channel.
The non-degenerate property of the local channel guarantees that the output state $\calN[\rho]$ retains the signatures of the anomaly in the input state $\rho$, distinguishing it from SPT states in other phases.
We justify this simplified equivalence relation by proving that for the locally purifiable strong injective MPDOs, the equivalence class under the ``one-way" connection reproduces the cohomology classification.
Whether the equivalence class under the two-way connection produces a definition of SPT phases compatible with the definitions in this work is left for future research.

The mixed-state SPT phases in this work are preserved under non-degenerate local quantum channels and characterized by observables in a single-copy density matrix, e.g., string or membrane order parameters.
These phases are different from the mixed-state phases characterized by their information theoretical properties, such as the decohered SPT states originally protected by higher-form symmetry~\cite{raussendorf2005long, Roberts2020, lee2022symmetry} and the decohered topologically ordered states~\cite{bao2023mixed,fan2023diagnostics,lee2023quantum}, which can change under a non-degenerate local channel circuit and may still require ``two-way" connection in the definition~\cite{sang2023mixed}.
It remains open whether one can generalize the tensor network formulation to describe non-linear functions of the density matrix and characterize the information theoretical phases in mixed states.

Our results also open several directions for future study.
First, one can apply our framework to study the mixed-state SPT phases protected by a non-direct product of the weak and the strong symmetry, which may exhibit a richer phase structure~\cite{ma2023topological}.
Moreover, it is worth generalizing our formulation to study the fermionic SPT phases in mixed states, which are of a different classification.

\begin{acknowledgments}
We thank Ehud Altman, Ruihua Fan, Tarun Grover, Shengqi Sang, Ruiqi Bai, Xie Chen, Yuchen Guo, Shang Xu, Jiangnan Xiong, Yu-An Chen, Nathanan Tantivasadakarn, and Dominic Williamson for helpful discussions.
We thank especially Ehud Altman for the encouragement in the pursuit of this project. 
H.X. also thanks Tingyu Zhu for practicing piano and violin together.
We are especially grateful to Ruochen Ma for pointing out a potential mistake and Meng Cheng for help identifying it in the Appendix~\ref{app:no_H_mixed_anomaly_2d} of the early version of the manuscript.
J.Y.L is supported by a Simons investigator fund and a faculty startup grant at the University of Illinois, Urbana-Champaign.
Y.B. is supported in part by grant NSF PHY-2309135 and the Gordon and Betty Moore Foundation Grant No. GBMF7392 to the Kavli Institute for Theoretical Physics (KITP).

\emph{Note:} Upon completion of the manuscript, we became aware of several independent works that are broadly related. Ref.~\cite{ma2024symmetry} discusses the constraints on the possible SPT phases in the doubled state. Ref.~\cite{wang2024anomaly} argues that (2+1)d SPT mixed states with different anomalies cannot be connected using finite-depth symmetric channel circuits. Ref.~\cite{ZhenBi_to_appear} studies mixed-state SPT phases in locally purifiable density operators. We thank the authors of Ref.~\cite{ZhenBi_to_appear} for informing us of their work in advance.
\end{acknowledgments}

\bibliography{refs}

\appendix

\section{Local non-degenerate channel preserves strong injectivity}\label{app:injectivity_local_channel}

In this appendix, we prove that a local channel $\calN$ (described by a channel circuit of finite depth as shown in Fig.~\ref{fig:local_channel}) preserves the strong injectivity of MPDO $\rho$, i.e. $\rho$ is strong injective if and only if $\calN[\rho]$ is strong injective.

Before we delve into the mathematical proof.
In pure ground states, we have a similar statement that local unitary circuits preserve the injectivity of MPS.
The statement holds because the exponentially decaying connected correlation functions cannot change dramatically under local unitaries due to the Lieb-Robinson bound.
Alternatively, one can also prove it based on the parent Hamiltonian~\cite{perez2006matrix}.
Here, we prove that a non-degenerate local channel circuit preserves condition 2 in Def.~\ref{def:stronginjective} based on the correlation functions in the state.
Showing that the non-degenerate local channel also preserves condition 1 is much more involved.
When we treat the MPDO as an MPS in the doubled space, quantum channels act as non-unitary operations.
This results in a doubled state with a non-Hermitian parent Hamiltonian and can change the correlation functions dramatically.
However, the characteristics of injectivity only involve the linear structure of the Hilbert space but not the inner product. 
We here modify the proof in Ref.\cite{perez2006matrix} to avoid considering the parent Hamiltonian and prove that the injectivity is preserved when local unitaries are replaced by local injective maps.

First, we consider the condition 2 in Def.~\ref{def:stronginjective}. 
We show that the transfer matrix $\bbT$ of $\calN[\rho]$ has a unique largest eigenvalue if and only if $\bbT$ of $\rho$ has a unique largest eigenvalue.
To prove this statement, the physical connected correlation function $\calN[\rho]$ takes the form
\begin{align}
&\tr(\calN[\rho] O_x O_{x'}) - \tr(\calN[\rho] O_x) \tr(\calN[\rho] O_{x'}) \nonumber \\
=&\tr(\rho \calN^\dagger[O_x O_{x'}]) - \tr(\rho \calN^\dagger[O_x]) \tr(\rho \calN^\dagger[O_{x'}]) 
\end{align}
where $\calN^\dagger[\cdot] := \sum_i K_i^\dagger (\cdot) K_i$ is the adjoint channel.
For a local channel circuit, $\calN^\dagger[O]$ has local support.
When the separation $|x-x'|$ is sufficiently greater than the circuit depth, $\calN^\dagger[O_xO_{x'}] = \calN^\dagger[O_x] \calN^\dagger[O_{x'}]$.
Since $\calN^\dagger$ is non-degenerate, all the connected correlations in $\rho$ are exponentially decaying if and only if the same is true for correlations in $\calN[\rho]$.

Next, to address condition 1 in Def.~\ref{def:stronginjective}, we prove that the injectivity of MPS is preserved under a local circuit consisting of not only unitary maps but also invertible linear maps. 
This implies that the doubled state $\kket{\rho}$ remains an injective MPS after applying a local channel circuit.

To begin, we introduce a concept called \emph{locally reconstructible state}.\footnote{Mathematically, constructions below fit the concepts of presheaf and Galois connection. Here, we only present minimum definitions for simplicity.}
Consider a given quantum state $\ket{\Psi} \in H_{\Lambda}$,  where $H_{\Lambda}$ is the Hilbert space of a one-dimensional spin chain $\Lambda$ of size $L$.
We define a subspace $\calL_U(\Psi) \subset H_\Lambda$ as the image of reduced density matrix on subset $U$, i.e. $\calL_U(\Psi) := \Im \tr_{\Lambda-U}\ketbra{\Psi}$.
The subspace $\calL_U(\Psi)$ represents the space of possible quantum states on the subsystem $U$ when the entire system is in quantum state $\ket{\Psi}$.

\begin{definition}
Consider a quantum state $\ket{\Psi} \in H_\Lambda$. The state $\ket{\Psi}$ is locally reconstructible with a reconstructible length (not greater than) $d$ if $\calL_U(\Phi) \subset \calL_U(\Psi)$ for any subregion $U$ of size (not greater than) $d$ has a unique nonzero solution $\ket{\Phi} = \ket{\Psi}$.
\end{definition}

We note that among all solutions to $\calL_U(\Phi) \subset \calL_U(\Psi)$, there is the largest element $\calL_U(\Psi)\otimes H_{\Lambda -U}$, which includes all the other solutions as its subspace.
Thus, the condition for a locally reconstructible state can be equivalently formulated as $\ket{\Psi} = \overline{\calL_d{(\Psi)}}$, and we call $\overline{\calL_d{(\Psi)}}$ the closure of $\calL_U(\Psi)$ on a family of sets\footnote{We use same notation for a state $\Psi$ and its corresponding one-dimensional subspace.},
\begin{align}
    \overline{\calL_d{(\Psi)}} := \bigcap_U \calL_U(\Psi) \otimes H_{\Lambda -U},
\end{align}
where $U$ runs over all the interval of length no greater than $d$.

Physically, a locally reconstructible state is fully determined by the reduced density matrices on subsets of size smaller than $d$.
According to this definition, the state with a reconstructible length $d = 1$ is an unentangled product state, such as $\ket{00\cdots 0}$,
, while a cat state $(\ket{00\cdots 0} + \ket{11\cdots 1})/\sqrt{2}$ is not locally reconstructible.

\begin{theorem}
Let $\calE: H_{\Lambda} \mapsto H_{\Lambda}$ be a local invertible map (arranged in a brickwork circuit). $\ket{\Psi}$ is locally reconstructible if and only if $\calE\ket{\Psi}$ is locally reconstructible.
\end{theorem}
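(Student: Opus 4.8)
The plan is to recast ``locally reconstructible'' as a statement about an intersection of local constraint subspaces and to push it through the invertible circuit by a light-cone argument that uses only the linear --- not the unitary --- structure. The elementary dictionary is: for a subspace $W\subseteq H_U$ and $\ket\chi\in H_\Lambda$, one has $\ket\chi\in W\otimes H_{\Lambda\setminus U}$ if and only if $\calL_U(\chi)\subseteq W$ (take a Schmidt decomposition of $\ket\chi$ across $U\,|\,\Lambda\setminus U$). With $W=\calL_U(\Psi)$ this turns the definition of ``locally reconstructible with length $d$'' into exactly the closure equation $\overline{\calL_d(\Psi)}=\bigcap_{|U|\le d}\bigl(\calL_U(\Psi)\otimes H_{\Lambda\setminus U}\bigr)=\bbC\ket\Psi$ already recorded in the text. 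Since $\calE$ is linear and invertible, it commutes with intersections of subspaces and with ``tensoring the complement'', so everything reduces to controlling the single subspace $\calE\bigl(\calL_U(\Psi)\otimes H_{\Lambda\setminus U}\bigr)$ for each interval $U$.

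The key lemma is a light-cone statement: for \emph{any} subspace $\calS\subseteq H_U$, the image $\calE\bigl(\calS\otimes H_{\Lambda\setminus U}\bigr)$ is again of product form $M_{U_+}\otimes H_{\Lambda\setminus U_+}$, where $U_+$ is $U$ widened by the light-cone radius of $\calE$ (for a depth-$t$ brickwork of two-site gates, $|U_+|\le |U|+2t$). Indeed, write $\calS\otimes H_{\Lambda\setminus U}=\ker Q$ with $Q=(\eye_U-\Pi_{\calS})\otimes\eye_{\Lambda\setminus U}$ supported on $U$; then $\calE\ker Q=\ker(\calE Q\calE^{-1})$, and conjugating $Q$ by the circuit spreads its support only to $U_+$, since conjugation by a single gate $O\mapsto gOg^{-1}$ merely enlarges the support of $O$ by that of $g$ --- an identity that requires $g$ invertible but \emph{not} unitary. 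Writing $\calE Q\calE^{-1}=\tilde Q_{U_+}\otimes\eye_{\Lambda\setminus U_+}$ gives $\ker(\calE Q\calE^{-1})=\ker(\tilde Q_{U_+})\otimes H_{\Lambda\setminus U_+}$, i.e. $M_{U_+}=\ker\tilde Q_{U_+}$.

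Now suppose $\ket\Psi$ is locally reconstructible with length $d$; set $d'=d+2t$ and let $\ket\Phi$ satisfy $\calL_{U'}(\Phi)\subseteq\calL_{U'}(\calE\Psi)$ for all $|U'|\le d'$. Fix $|U|\le d$ with widening $U_+$ (so $|U_+|\le d'$). By the lemma $\calE\bigl(\calL_U(\Psi)\otimes H_{\Lambda\setminus U}\bigr)=M_{U_+}\otimes H_{\Lambda\setminus U_+}$, and since $\ket{\calE\Psi}$ lies in it, $\calL_{U_+}(\calE\Psi)\subseteq M_{U_+}$; hence $\calL_{U_+}(\Phi)\subseteq M_{U_+}$, so by the dictionary $\calE^{-1}\ket\Phi\in\calL_U(\Psi)\otimes H_{\Lambda\setminus U}$, i.e. $\calL_U(\calE^{-1}\Phi)\subseteq\calL_U(\Psi)$. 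As this holds for every $|U|\le d$, local reconstructibility of $\ket\Psi$ forces $\calE^{-1}\ket\Phi\propto\ket\Psi$, i.e. $\ket\Phi\propto\calE\ket\Psi$; thus $\calE\ket\Psi$ is locally reconstructible. The converse is the same argument applied to the circuit $\calE^{-1}$ (again a depth-$t$ brickwork of invertible gates, layers reversed and each gate inverted), showing that $\ket\Psi=\calE^{-1}(\calE\ket\Psi)$ is locally reconstructible whenever $\calE\ket\Psi$ is; this gives the ``iff''.

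The only genuinely technical ingredient is the key lemma, and within it the routine but slightly fiddly bookkeeping of the brickwork light cone (and the exact widening constant); the conceptual content --- that the whole argument survives replacing unitary gates by invertible linear maps --- is precisely that every step used (the support of $gOg^{-1}$, kernels of $A\otimes\eye$, images and intersections of subspaces under an invertible map) is purely algebraic and never refers to an inner product or to positivity. I therefore expect the main obstacle to be phrasing the circuit-surgery/causality step cleanly enough that the factorization $\calE Q\calE^{-1}=\tilde Q_{U_+}\otimes\eye_{\Lambda\setminus U_+}$ is manifestly correct for the brickwork geometry of Fig.~\ref{fig:local_channel}; the rest is formal manipulation.
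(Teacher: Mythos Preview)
Your proposal is correct and follows essentially the same route as the paper: both arguments hinge on the light-cone identity that conjugating a locally supported operator by the invertible circuit only spreads its support by $2t$, and then use this to show $\calE\bigl(\calL_U(\Psi)\otimes H_{\Lambda\setminus U}\bigr)$ is again of product form on a widened interval. The only cosmetic difference is that you realize this subspace as $\ker\bigl((\eye-\Pi_{\calS})\otimes\eye\bigr)$ and track kernels under $\calE(\cdot)\calE^{-1}$, whereas the paper realizes it as $\Im(P_U\otimes\eye)$ and tracks images; in both cases the factorization on the widened interval and the subsequent intersection argument are identical.
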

\begin{proof}
Since $\calE$ is invertible, it is sufficient to prove that $\ket{\Psi}$ being locally reconstructible implies that $\calE\ket{\Psi}$ is locally reconstructible.
The key observation here is for any operator $O$ supported in a finite region $d$, $\calE O \calE^{-1}$ supports on an interval of length no greater than $d+2t$, where $t$ is the circuit depth of the invertible map $\calE$.

Let $d$ be the reconstructible length of $\ket{\Psi}$. 
For every interval $U$ of length $d$, we define a projection operator $P_U$ with image $\calL_U(\Psi)$. 
Then, we have
\begin{align}
\calL_U(\Psi)\otimes H_{\Lambda-U}=\Im (P_U\otimes 1_{X-U}).
\end{align} 
Appling $\calE$ to the both sides of the equation, we have 
\begin{align}
\calE(\calL_U(\Psi)\otimes H_{\Lambda-U})=\Im(\calE(P_U\otimes 1_{\Lambda-U})\calE^{-1}).
\end{align} 
On the right hand side, the invertible map acts on the projection operator yields $\calE(P_U\otimes 1_{\Lambda-U})\calE^{-1}=Q_U\otimes 1_{\Lambda-V_U}$, where $Q_U$ is supported on an interval $V_U$ with length $d+2t$. 
We denote $\Im Q_U =Z_U$, then 
\begin{align}
    \calE(\calL_U(\Psi)\otimes H_{\Lambda-U})=Z_U\otimes H_{\Lambda-V_U}.
\end{align}
Because $\ket{\Psi}\subset \calL_U(\Psi)\otimes H_{\Lambda-U}$, we have
\begin{align}
    \calL_{V_U}(\calE \ket{\Psi})\subset \calL_{V_U}\left(\calE(\calL_U(\Psi)\otimes H_{\Lambda-U})\right)=Z_U.
\end{align}
Taking the intersection of all $U$ yields
\begin{align}
    &\bigcap_U \calL_{V_U}(\calE \ket{\Psi})\otimes H_{\Lambda-V_U}\subset\bigcap_U Z_U\otimes H_{\Lambda-V_U}\nonumber\\
    =&\bigcap_U \calE\left(\calL_U(\ket{\Psi})\otimes H_{\Lambda-U}\right)=\calE\left(\bigcap_U\calL_U(\Psi)\otimes H_{\Lambda-U}\right)\nonumber\\=& \calE\ket{\Psi}.
\end{align}
Considering that $\calE\ket{\Psi} \subset \calL_{V_U}(\calE\ket{\Psi})\otimes H_{\Lambda-V_U}$, we have
\begin{align}
    \calE\ket{\Psi} = \bigcap_U \calL_{V_U}(\calE\ket{\Psi})\otimes H_{\Lambda-V_U},
\end{align}
which indicates that $\calE\ket{\Psi}$ is locally reconstructible, and the reconstructible length is not greater than $d+2t$.
\end{proof}

\begin{lemma}
Let $\ket{\Psi[A]}$ be a translationally invariant MPS. The reconstructible length of $\ket{\Psi}$ is not greater than $2$.
\end{lemma}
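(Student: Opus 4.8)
The plan is to recognize the defining condition of local reconstructibility, namely $\calL_U(\Phi)\subset\calL_U(\Psi)$ for all intervals $U$ of length at most $2$, as exactly the statement that $\ket{\Phi}$ is a frustration-free ground state of the canonical parent Hamiltonian of the injective MPS $\ket{\Psi}$, and then to invoke uniqueness of that ground state.

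First I would recall, following the convention of Sec.~\ref{sec:preliminaries}, that after grouping a finite number of sites the tensor $A$ is injective, and work in these blocked units. For an injective tensor, let $S_2:=\mathrm{Im}\,A^{(2)}\subset\bbC^d\otimes\bbC^d$ be the image of the two-site map $A^{(2)}:(\bbC^\chi)^{\otimes 2}\to\bbC^d\otimes\bbC^d$ sending $|\alpha\rangle|\beta\rangle$ to $\sum_{i,j}(A^iA^j)_{\alpha\beta}\ket{ij}$. By construction the canonical parent Hamiltonian is $H=\sum_x h_{x,x+1}$ with $h_{x,x+1}=(\mathds{1}-P_{S_2})\otimes\mathds{1}_{\mathrm{rest}}$, and by the fundamental result on injective MPS~\cite{perez2006matrix,cirac2021matrix} this $H$ is frustration-free with $\ket{\Psi}$ as its \emph{unique} ground state for sufficiently large systems. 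Equivalently, $\bigcap_x S_2\otimes H_{\Lambda-\{x,x+1\}}=\ker H=\bbC\ket{\Psi}$.

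Next I would make two elementary observations. (i) Since $\ket{\Psi}$ is generated by $A$, it lies in $H_{\mathrm{left}}\otimes S_2\otimes H_{\mathrm{right}}$ for every cut isolating a neighbouring pair $\{x,x+1\}$, so $\calL_{\{x,x+1\}}(\Psi)=\mathrm{supp}\,\rho^\Psi_{\{x,x+1\}}\subseteq S_2$. (ii) For any state $\ket{\Phi}$ and any subspace $V$ of the two-site Hilbert space, the inclusion $\calL_{\{x,x+1\}}(\Phi)\subset V$ is equivalent to $\ket{\Phi}\in V\otimes H_{\Lambda-\{x,x+1\}}$ (look at the Schmidt vectors across that cut). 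Moreover the single-site constraints are automatically implied by the two-site ones, since $\mathrm{Im}\,M\subseteq\mathrm{Im}\,N$ for positive $M,N$ implies $\mathrm{Im}\,\tr_B M\subseteq\mathrm{Im}\,\tr_B N$; in any case they only shrink the intersection, which already contains $\ket{\Psi}$. Combining, if $\ket{\Phi}$ satisfies $\calL_U(\Phi)\subset\calL_U(\Psi)$ for all $|U|\le 2$, then in particular $\calL_{\{x,x+1\}}(\Phi)\subseteq S_2$ for every $x$, hence $\ket{\Phi}\in\bigcap_x S_2\otimes H_{\Lambda-\{x,x+1\}}=\bbC\ket{\Psi}$. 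Since $\ket{\Psi}$ itself always satisfies the defining inclusions, we obtain
\begin{align}
\overline{\calL_2(\Psi)}=\bigcap_{|U|\le 2}\calL_U(\Psi)\otimes H_{\Lambda-U}=\bbC\ket{\Psi},
\end{align}
which is precisely the statement that $\ket{\Psi}$ is locally reconstructible with reconstructible length at most $2$.

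The only genuinely nontrivial input is the parent-Hamiltonian step: the identification of $\bigcap_x S_2\otimes H_{\Lambda-\{x,x+1\}}$ with the one-dimensional ground space, which is where injectivity (and a large-enough system size) is actually used, and which is also what makes the statement specific to injective rather than general translationally invariant MPS — for a non-injective MPS such as the GHZ state the two-site subspaces $S_2$ are compatible with several product states, so the intersection is more than one-dimensional and the state fails to be locally reconstructible. The mild caveat on the threshold system size $L$ is harmless here, since the use of the lemma in the preceding theorem only requires the existence of \emph{some} finite reconstructible length.
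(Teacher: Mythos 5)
Your proposal is correct, but it reaches the conclusion by a different route than the paper. You translate the two\hyp{}site reconstructibility constraints into membership in $\bigcap_x S_2\otimes H_{\Lambda-\{x,x+1\}}$ and then invoke, as a black box, the intersection property\,/\,uniqueness of the ground state of the canonical two-site parent Hamiltonian of an injective MPS on a ring. The paper instead proves that pinning-down step from scratch: from the constraint on $U=[1,2]$ it writes $\Phi=\tr\!\left(A^{i_1}A^{i_2}X^{i_3\cdots i_L}\right)$, compares with the constraint on $U=[2,3]$, and uses injectivity of $A$ (through its pseudo-inverse) to telescope the tensor $A$ site by site along the chain until $\Phi=\lambda\Psi$ --- in effect an inline, Hamiltonian-free derivation of exactly the uniqueness statement you cite. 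Your use of the parent Hamiltonian is legitimate here, because the lemma concerns only the fixed injective MPS $\Psi$; the appendix's motivation for avoiding parent Hamiltonians (non-Hermitian parent Hamiltonians produced by non-unitary circuits) only matters in the subsequent theorem about invertible local maps, which your argument does not rely on. What your route buys is brevity and a clean conceptual identification of $\overline{\calL_2(\Psi)}$ with the frustration-free ground space; what the paper's route buys is self-containedness and independence from the ``sufficiently large $L$'' caveat attached to the cited uniqueness theorem, which, as you correctly note, is harmless for how the lemma is used. Both your proof and the paper's silently strengthen ``translationally invariant MPS'' to ``injective (after blocking)'', and your explicit remarks on blocking and on the GHZ counterexample are consistent with the paper's stated conventions.
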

\begin{proof}
The MPS is in the form
\begin{align}
    \Psi^{i_1\cdots i_L}=\tr(A^{i_1}A^{i_2}\cdots A^{i_L}).
\end{align}
Assume $\Phi$ is a state satisfying $\calL_U(\Phi)\subset \calL_U(\Psi)$ for all interval of length $2$. 
First, we take $U=[1,2]$, because $A^{i_1}A^{i_2}$ is an injective map from the virtual to the physical space, $\calL_U(\Phi)\subset \calL_U(\Psi)$ indicates the existence of $X$ such that $\Phi=\operatorname{tr}A^{i_1}A^{i_2}X^{i_3\cdots i_L}$. 
Similarly, for $U = [2,3]$, there exist $Y$ such that $\Phi=\tr A^{i_2}A^{i_3}Y^{i_4\cdots i_Li_1}$. 
Since $A^{i_2}$ is injective, we have $X^{i_3\cdots i_L}A^{i_1}=A^{i_3}Y^{i_4\cdots i_Li_1}$. 
Because $A$ is injective, there exists $A^{-1}$ such that $\sum_{i_1} A^{i_1}(A^{-1})^{i_1}=\mathds{1}$ on the virtual bond. 
So $X^{i_3\cdots i_L}= \sum_{i_1} A^{i_3}Y^{i_1i_4\cdots i_L}(A^{-1})^{i_1}$. 
Define $Z^{i_4\cdots i_L} :=\sum_{i_1}Y^{i_1i_4\cdots i_L}(A^{i_1})^{i_1}$, we then have $\Phi^{i_1\cdots i_L}=\tr A^{i_1}A^{i_2}A^{i_3}Z^{i_4\cdots i_L}$. 
We can repeat this process and obtain the final result $\Phi^{i_1\cdots i_L}=\lambda\tr(A^{i_1}A^{i_2}\cdots A^{i_L})$, where $\lambda$ is a phase factor. 
Thus, $\ket{\Psi}$ has a local reconstructible length not greater than $2$.
\end{proof}

\begin{theorem}
Let $\ket{\Psi}$ be a translationally invariant MPS. The dimension of $\overline{\calL_d(\Psi)}$ for large enough $d$ is given by the number of normal bases in the canonical form of $\ket{\Psi}$.
\end{theorem}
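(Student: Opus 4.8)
The plan is to read $\dim\overline{\calL_d(\Psi)}$ off the block (canonical) decomposition of the MPS together with the single–block Lemma just proved. First I would bring $\ket{\Psi}$ to canonical form $A=\bigoplus_{\mu=1}^{n}A^{[\mu]}$ with each $A^{[\mu]}$ normal, and group a fixed number $k$ of sites so that every $A^{[\mu]}$ becomes injective; on the grouped ring one has $\ket{\Psi}=\sum_{\mu=1}^{n}\ket{\Psi^{[\mu]}}$ with $\ket{\Psi^{[\mu]}}:=\ket{\Psi[A^{[\mu]}]}$. Taking $d$ larger than $2k$ plus the injectivity lengths of all blocks, and the chain length $L\gg d$, the theorem becomes the set equality $\overline{\calL_d(\Psi)}=\mathrm{span}\{\ket{\Psi^{[\mu]}}:1\le\mu\le n\}$ together with the linear independence of the $\ket{\Psi^{[\mu]}}$ (which makes this span $n$-dimensional). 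I will prove the two inclusions separately and then treat independence.

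For $\overline{\calL_d(\Psi)}\supseteq\mathrm{span}\{\ket{\Psi^{[\mu]}}\}$ I would show $\calL_U(\Psi^{[\mu]})\subseteq\calL_U(\Psi)$ for every interval $U$ with $|U|\le d$; then $\ket{\Psi^{[\mu]}}\in\calL_U(\Psi^{[\mu]})\otimes H_{\Lambda-U}\subseteq\calL_U(\Psi)\otimes H_{\Lambda-U}$ for all such $U$, hence $\ket{\Psi^{[\mu]}}\in\overline{\calL_d(\Psi)}$. The image inclusion follows by expanding $\tr_{\Lambda-U}\ketbra{\Psi}{\Psi}=\sum_{\mu,\nu}\tr_{\Lambda-U}\ketbra{\Psi^{[\mu]}}{\Psi^{[\nu]}}$ and noting that the cross-block terms ($\mu\neq\nu$) decay exponentially in $L-|U|$, their transfer matrices having strictly smaller spectral radius than the diagonal ones because the canonical blocks are pairwise inequivalent; so for $L$ large the reduced density matrix is the sum of the positive operators $\tr_{\Lambda-U}\ketbra{\Psi^{[\mu]}}{\Psi^{[\mu]}}$ and its image is $\sum_\mu\calL_U(\Psi^{[\mu]})\supseteq\calL_U(\Psi^{[\mu]})$.

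For the reverse inclusion I would use that $\calL_U(\Psi)\subseteq\sum_\mu\calL_U(\Psi^{[\mu]})$ holds for every $L$ and $U$ (apply any environment bra to $\ket{\Psi}=\sum_\mu\ket{\Psi^{[\mu]}}$), so a vector in $\calL_{[1,m]}(\Psi)$ is an interval tensor $\sum_{i}\tr\!\big(A^{i_1}\cdots A^{i_m}\,X\big)\ket{i_1\cdots i_m}$ with a block-diagonal boundary matrix $X=\bigoplus_\mu X_\mu$. For $\ket{\Phi}\in\overline{\calL_d(\Psi)}=\bigcap_U\calL_U(\Psi)\otimes H_{\Lambda-U}$ I then run the ring-sliding argument of the $n=1$ Lemma block-diagonally: membership over the window $[1,m]$ (with $k\le m\le d$ above the injectivity length) forces $\Phi^{i_1\cdots i_L}=\tr\!\big(A^{i_1}\cdots A^{i_m}\,N^{i_{m+1}\cdots i_L}\big)$ with $N$ block-diagonal for each environment configuration; sliding the window around the ring and using the block-diagonal right inverse $A^{-1}=\bigoplus_\mu(A^{[\mu]})^{-1}$ with $\sum_i A^i(A^{-1})^i=\mathds{1}$ collapses $\Phi$ to $\tr\!\big(C\,A^{i_1}\cdots A^{i_L}\big)=\sum_\mu\tr\!\big(C^{[\mu]}A^{[\mu],i_1}\cdots A^{[\mu],i_L}\big)$ with $C$ block-diagonal, and — exactly as in the $n=1$ case, normality of each block making its commutant scalar — each $C^{[\mu]}$ is a multiple of the identity. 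Thus $\ket{\Phi}\in\mathrm{span}\{\ket{\Psi^{[\mu]}}\}$. Finally, linear independence: if $\sum_\mu c_\mu\ket{\Psi^{[\mu]}}=0$ then $\bigoplus_\mu c_\mu\mathds{1}$ is sent to the zero physical state on the length-$L$ ring, so by the \emph{block injectivity} of the canonical form on large regions all $c_\mu=0$; hence $\dim\overline{\calL_d(\Psi)}=n$.

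The main obstacle is the one structural input that all of the above leans on: after grouping enough sites, the map from block-diagonal boundary matrices $\bigoplus_\mu M_{\chi_\mu}$ into the physical Hilbert space of a large interval is injective, which I would invoke from the standard theory of the MPS canonical form~\cite{cirac2021matrix,perez2006matrix}; granting it, the linear independence of the $\ket{\Psi^{[\mu]}}$ and the uniqueness of the block-diagonal representations in the sliding step are immediate, and the remainder is the routine index-chasing already rehearsed for $n=1$. A secondary point to state carefully is the regime of validity — $d$ above all blocking and injectivity scales and $L\gg d$ — needed both for the exponential suppression of the cross-block reduced operators and for the injectivity statements to apply.
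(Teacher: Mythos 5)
Your overall route is the same as the paper's: decompose $\ket{\Psi}$ into its normal (after grouping, injective) blocks $\ket{\Psi^{[\mu]}}$, identify $\calL_U(\Psi)$ with the direct sum of the blockwise supports using the standard linear independence of inequivalent normal tensors on long intervals, and conclude that the closure is $\operatorname{span}\{\ket{\Psi^{[\mu]}}\}$, whose dimension equals the number of blocks. The paper simply cites this linear-independence fact and asserts $\calL_U(\Psi)=\oplus_\mu\calL_U(\Psi_\mu)$, whereas you carry out the two inclusions and the block-diagonal sliding argument explicitly; that extra detail is fine and consistent with the Lemma preceding the theorem.

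One intermediate justification, however, does not hold as stated: in the inclusion $\calL_U(\Psi^{[\mu]})\subseteq\calL_U(\Psi)$ you argue that the cross-block contributions to $\tr_{\Lambda-U}\ketbra{\Psi}{\Psi}$ are exponentially small in $L-|U|$ and therefore ``for $L$ large the reduced density matrix is the sum of the positive operators $\tr_{\Lambda-U}\ketbra{\Psi^{[\mu]}}{\Psi^{[\mu]}}$.'' At any finite $L$ the cross terms are nonzero, and the image of an operator is not stable under small perturbations, so exponential smallness by itself gives neither the claimed equality of operators nor the desired inclusion of images. The correct (and exact) argument uses the same structural input you invoke at the end, but applied to the environment: $\calL_U(\Psi)$ is exactly the set of vectors $(\bra{\phi}_{\Lambda-U}\otimes\mathds{1}_U)\ket{\Psi}$, and since the inequivalent normal blocks are linearly independent on the long region $\Lambda-U$ (after grouping), one can choose environment bras $\phi$ that annihilate all blocks $\nu\neq\mu$ while producing any attainable boundary matrix for block $\mu$; this yields $\calL_U(\Psi^{[\mu]})\subseteq\calL_U(\Psi)$ directly. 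With that repair your proof is complete and matches the paper's, which obtains the same identity by citing the canonical-form theory wholesale.
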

\begin{proof}
Let $k$ be the number of normal bases in the canonical form of MPS $\Psi$. We have the following decomposition $\Psi=\sum_{\mu=1}^k\Psi_\mu$, where every $\Psi_\mu$ can be written as a normal MPS. 
According to the theory of MPS, the normal bases are linearly independent in the physical space after grouping a finite number of sites~\cite{cirac2021matrix}. 
This is to say for an interval $U$ of sufficiently large length $d$, we have $\calL_U(\Psi)=\oplus_{\mu=1}^k\calL_U(\Psi_\mu)$. 
So $\overline{\calL_d{(\Psi)}}=\oplus_{\mu=1}^k\bbC\Psi_\mu$, which has dimension $k$.
\end{proof}

Finally, we prove the key result that the brickwork circuit of invertible maps preserves the injectivity (normality) of MPS.
\begin{theorem}
Let $\calE: H_\Lambda \mapsto H_\Lambda$ be a translationally invariant invertible local map. Then, the number of normal bases of $\ket{\Psi}$ is invariant under $\calE$. In particular, $\calE$ maps a normal MPS to a normal MPS.
\end{theorem}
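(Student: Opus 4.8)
The plan is to deduce this from the two facts already in hand: (i) an invertible local circuit $\calE$ preserves local reconstructibility, and (ii) for a translationally invariant MPS the dimension $\dim\overline{\calL_d(\Psi)}$ stabilizes, for $d$ large, to the number $k(\Psi)$ of normal bases in the canonical form. The crucial observation is that the proof of (i) actually delivers more than its statement. Running that argument \emph{without} assuming $\ket{\Psi}$ is locally reconstructible, one still obtains, for every $d$ and with $t$ the depth of $\calE$, the subspace inclusion
\begin{align}
\overline{\calL_{d+2t}(\calE\Psi)}\ \subseteq\ \calE\!\left(\overline{\calL_d(\Psi)}\right).\nonumber
\end{align}
Indeed, that proof shows $\calE(\calL_U(\Psi)\otimes H_{\Lambda-U})=Z_U\otimes H_{\Lambda-V_U}$ with $|V_U|\le d+2t$, and since $\calE\ket{\Psi}\in\calE(\calL_U(\Psi)\otimes H_{\Lambda-U})$ one gets $\calL_{V_U}(\calE\Psi)\subseteq Z_U$; intersecting over all $U$ and using that $\calE$ is a linear bijection (hence commutes with intersections of subspaces and preserves dimension) gives the displayed inclusion — the only place where local reconstructibility was used before was to further collapse $\overline{\calL_d(\Psi)}$ to $\bbC\ket{\Psi}$, which we now drop.

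First I would record the immediate dimension bound $\dim\overline{\calL_{d+2t}(\calE\Psi)}\le\dim\overline{\calL_d(\Psi)}$. Next I would note that $\calE\ket{\Psi}$ is again a translationally invariant MPS of finite bond dimension — applying a finite-depth brickwork circuit of invertible local maps to an MPS yields an MPS, and translation invariance of $\calE$ carries over up to grouping a bounded number of sites, which does not change the number of normal bases. Hence (ii) applies to both $\ket{\Psi}$ and $\calE\ket{\Psi}$, and since $\overline{\calL_d}$ is non-increasing in $d$ the dimensions are eventually constant: for all sufficiently large $d$, $\dim\overline{\calL_d(\Psi)}=k(\Psi)$ and $\dim\overline{\calL_d(\calE\Psi)}=k(\calE\Psi)$. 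Taking $d$ large in the bound gives $k(\calE\Psi)\le k(\Psi)$. For the reverse inequality I would apply the same inclusion to $\calE^{-1}$ (the reversed circuit of inverse gates, again a translationally invariant invertible local circuit of depth $t$) and to the state $\calE\ket{\Psi}$, obtaining $\overline{\calL_{d+2t}(\Psi)}\subseteq\calE^{-1}(\overline{\calL_d(\calE\Psi)})$ and hence $k(\Psi)\le k(\calE\Psi)$. Therefore $k(\Psi)=k(\calE\Psi)$; in particular a normal MPS ($k=1$) is mapped to a normal, hence (after grouping) injective, MPS.

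The main obstacle is not the chaining of the two inequalities but the first step: one must argue carefully that the machinery behind the reconstructibility theorem genuinely produces the subspace inclusion above, since the black-box biconditional alone only yields $k=1\Leftrightarrow k(\calE\Psi)=1$ and not the equality of the counts in general. The remaining subtlety — that the various "$d$ large enough" thresholds for $\ket{\Psi}$, $\calE\ket{\Psi}$, and their images under $\calE^{\pm1}$ can be met simultaneously — is handled once one observes that $\overline{\calL_d}$ is monotone in $d$ and that $\calE$ enlarges operator supports by only the fixed amount $2t$, so a single sufficiently large $d$ works throughout.
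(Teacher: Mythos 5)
Your proposal is correct. It runs on the same machinery as the paper---the identity $\dim\overline{\calL_d(\Psi)}=k(\Psi)$ for large enough $d$ and the image computation $\calE(\calL_U(\Psi)\otimes H_{\Lambda-U})=Z_U\otimes H_{\Lambda-V_U}$---but it organizes the final step differently. The paper asserts the subspace equality $\calE(\overline{\calL_d(\Psi)})=\overline{\calL_d(\calE\ket{\Psi})}$ directly, justifying it by the decomposition $\overline{\calL_d(\Psi)}=\oplus_{\mu}\bbC\,\Psi_\mu$ into normal components together with the statement that $\calE$ preserves local reconstructibility (the terse ``closedness'' claim), and then reads off the count from dimensions. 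You instead extract from the \emph{proof} of the preservation theorem the one-sided inclusion $\overline{\calL_{d+2t}(\calE\Psi)}\subseteq\calE\bigl(\overline{\calL_d(\Psi)}\bigr)$, correctly observing that local reconstructibility of $\ket{\Psi}$ enters that proof only after this point, deduce $k(\calE\Psi)\le k(\Psi)$, and symmetrize by applying the same inclusion to $\calE^{-1}$ (again a translationally invariant invertible circuit of the same depth) acting on $\calE\ket{\Psi}$. Your route buys a more self-contained endgame: it avoids the paper's unproved identification of $\calE(\overline{\calL_d(\Psi)})$ with the closure of the image state and needs no per-component reconstructibility argument, only invertibility of $\calE$ and the eventual stabilization of $\dim\overline{\calL_d}$; the price is the (correctly flagged) bookkeeping that $\calE\ket{\Psi}$ is again a translationally invariant MPS up to blocking, that blocking does not change the number of normal components, and that a single large $d$ serves all thresholds. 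The paper's version, when fleshed out, yields the slightly stronger conclusion that $\calE$ maps the span of the normal components of $\ket{\Psi}$ onto that of $\calE\ket{\Psi}$, from which the equality of counts is immediate.
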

\begin{proof} 
According to the previous theorem, the number of normal bases is the dimension of $\overline{\calL_d(\Psi)}$. 
Because $\calE$ preserves the local reconstructibility of the state, $\calE(\overline{\calL_d(\Psi)})$ is closed, namely $\calE(\overline{\calL_d(\Psi)})=\overline{\calL_d(\calE(\overline{\calL_d(\Psi)}))}$. Since $\overline{\calL_d(\Psi)} = \oplus_{\mu=1}^k \bbC \Psi_\mu$, we have $\calE(\overline{\calL_d(\Psi)})=\overline{\calL_d(\calE\ket{\Psi})}$. Thus, we obtain 
\begin{align}
\dim\overline{\calL_d(\calE\ket{\Psi})}=\dim\calE(\overline{\calL_d(\Psi)})=\dim\overline{\calL_d(\Psi)},
\end{align} 
which indicates the number of normal bases in $\calE\ket{\Psi}$ and $\ket{\Psi}$ are the same.
\end{proof}

\section{Basics of group cohomology}\label{app:group_cohomology}
Here, we briefly review the basic concepts in group cohomology. 
Given a group $G$, let $\omega_n(g_1, \cdots, g_n)$ be a function of $n$ group elements of $G$ and take the value in the $G$-module denoted by $M$. 
Such a function is called an $n$-cochain, and we use $\calC^n(G,M)$ to denote the space of $n$-cochains. 
We define a series of map $d_n: \calC^n(G, M) \mapsto \calC^{n+1}(G, M)$ as
\begin{align}
    d\omega_n&(g_{1:n+1}) \nonumber \\
    =\, &{}^{g_1}\!\omega_n(g_{2:n+1}) \prod_{i = 1}^n \omega_n^{(-1)^i}(g_{1:i-1},g_ig_{i+1},g_{i+2:n+1}) \nonumber \\
    &\omega_n^{(-1)^{n+1}}(g_{1:n}),\label{eq:derivative_cohomology}
\end{align}
where we use the shorthand notation $g_{1:i}$ to denote a set of group elements $(g_1, g_2,\cdots, g_i)$ in the argument of $\omega_n$.
The map satisfies $d \cdot d = 1$. 
We introduce the following definitions
\begin{equation}
\begin{aligned}
    \calZ^n(G,M) &:= \{\omega_n | d_n \omega_n = 1, \omega_n \in \calC^n(G,M)\}, \\
    \calB^n(G,M) &:= \{\omega_n | \omega_n\!=\! d_{n-1}\omega_{n-1}, \omega_{n-1} \in \calC^{n-1}(G,M)\},
\end{aligned}
\end{equation}
where $\calZ^n(G,M)$ and $\calB^n(G,M)$ are the group of $n$-cocycles and $n$-coboundaries, respectively. 
They are abelian groups and satisfies $\calB^n(G,M) \subset \calZ^n(G,M)$.
The $n$-th cohomology group is defined as $\calH^n(G,M):= \calZ^n(G,M)/\calB^n(G,M)$. 
For an $n$-cocycle $\omega_n \in \calZ^n$, we denote its cohomology class as $[\omega_n]$.

In this paper, we focus on the cases of $n = 2,3$ and consider $M$ being a $G$-module $\U_T(1)$.
The group element $g \in G$ acts on $a \in U_T(1)$ as ${}^g\!a = a$ and $a^*$ for $g$ being unitary and anti-unitary, respectively.
A $2$-cochain $\omega$ is a $2$-cocycle if the following equation [Eq.~\eqref{eq:2-cocycle}] holds for $g_{1,2,3} \in G$,
\begin{align}
    \omega(g_1, g_2)\omega(g_1 g_2, g_3) = \omega(g_1, g_2 g_3) {}^{g_1}\!\omega(g_2,g_3).
\end{align}
The $2$-cochain $\omega$ is a $2$-coboundary if there exists $\varphi_1 \in \calC^1(G, U_T(1))$ such that
\begin{align}
    \omega(g_1, g_2) = \frac{\varphi_1(g_1) {}^{g_1}\!\varphi_1(g_2)}{\varphi_1(g_1g_2)}.
\end{align}

For $n = 3$, the 3-cochain $\omega$ is a 3-cocycle if the following equation [Eq.~\eqref{eq:3-cocycle}] holds
\begin{align}
\frac{\omega(g_1, g_2, g_3) \omega(g_1, g_2 g_3, g_4) {}^{g_1}\!\omega(g_2,g_3,g_4)}{\omega(g_1 g_2, g_3, g_4) \omega(g_1, g_2, g_3 g_4)} = 1.
\end{align}
The 3-cochain $\omega$ is a 3-coboundary if there exists $\varphi_2\in \calC^2(G,U_T(1))$ such that
\begin{align}
    \omega(g_1,g_2,g_3) = \frac{\varphi_2(g_1,g_2g_3){}^{g_1}\!\varphi_2(g_2,g_3)}{\varphi_2(g_1g_2, g_3)\varphi_2(g_1,g_2)},
\end{align}
for $g_{1,2,3,4} \in G$.

\begin{widetext}

\section{Decomposition of cocycles and the K\"unneth formula}\label{app:direct-sum}

The $n$-th cohomology of a unitary symmetry group given by a direct product $G \times H$ takes a direct sum decomposition according to the K\"unneth formula
\begin{align}
    \calH^n(G\times H, \U(1)) = \bigoplus_{k = 0}^n \cH^{n-k}(G, \cH^k(H, \U(1))).
\end{align}
Accordingly, one may decompose the representative $n$-cocycle for the element of $\calH^n(G\times H, U(1))$ as the product $\prod_k \nu_k$, where each $\nu_k$ is a representative cocycle for the element of $\cH^{n-k}(G, \cH^k(H, \U(1)))$.  To facilitate the discussion we define the slant product.
The slant product is a mapping from $n$-cochain to $(n\,{-}\,1)$-cochain defined as 
\begin{align}
    i_g \omega (g_1,...,g_{n-1}) := \prod_{j=0}^{n-1} \omega(g_1,...,g_j, g, g_{j+1},...,g_{n-1})^{(-1)^{n-1+j}}
\end{align}
where $\omega$ is a $n$-cochain. It is easy to check that this is also a mapping from $n$-cocycle to $(n\,{-}\,1)$-cocycle since $d(i_g \omega) = i_g (d \omega)$, and thus if $d \omega = 0$, then $d(i_g \omega) = 0$. 
Accordingly, this is a group homomorphism from $\cH^n$ to $\cH^{n-1}$.

In what follows, we explicitly demonstrate this decomposition for $n\,{=}\,2$ and $3$ using the slant product.

\subsection{n=2}

Using 2-cocycle condition $\omega(a,b) \omega(ab,c) = \omega(b,c) \omega(a,bc)$, we can show that
\begin{align}
    \omega(g_1 h_1, g_2 h_2) &= \frac{\omega(g_1, g_2 h_1 h_2) \omega(h_1, g_2 h_2)}{\omega(g_1,h_1)} =  \frac{  \omega(h_1, g_2 h_2)}{\omega(g_1,h_1)} \frac{\omega(g_1, g_2) \omega(g_1 g_2, h_1 h_2)}{\omega(g_2, h_1 h_2)} \nonumber \\
    &=  \omega(g_1, g_2) \omega(h_1,h_2)   \frac{  \omega(h_1, g_2 h_2)}{\omega(g_1,h_1)} \frac{ \omega(g_1 g_2, h_1 h_2)}{\omega(g_2 h_1, h_2) \omega(g_2, h_1)} \nonumber \\
    &=  \omega(g_1, g_2) \omega(h_1,h_2)   \frac{  \omega(h_1, g_2 h_2)}{\omega(g_1,h_1)} \frac{ \omega(g_1 g_2, h_1 h_2)}{\omega(h_1 g_2, h_2) \omega(g_2, h_1)} \nonumber \\
    &=   \omega(g_1, g_2) \frac{ \omega(h_1, g_2) }{\omega(g_2, h_1)}  \omega(h_1,h_2)  \frac{  \omega(g_1 g_2, h_1 h_2)}{\omega(g_1,h_1) \omega(g_2, h_2) }.
\end{align}
By taking a gauge transformation $V_x \mapsto \varphi(x) V_x$ with $\varphi(gh) := \omega(g,h)^{-1}$, we get
\begin{align}
    \omega \mapsto \omega \cdot \left(\frac{\omega(g_1 g_2, h_1 h_2)}{\omega(g_1, h_1) \omega(g_2, h_2)}\right)^{-1}
\end{align}
and the last bracket can be removed. Here, we used the direct product structure of $G \times H$. Alternatively, this can be derived by considering two equivalent ways to combine the projective representation of $g_1, h_1, g_2, h_2$:
\begin{align}
    V_{g_1} V_{h_1} V_{g_2} V_{h_2} &= \omega(g_1, h_1) \omega(g_2, h_2) V_{g_1 h_1} V_{g_2 h_2} = \omega(g_1, h_1) \omega(g_2, h_2) \omega(g_1 h_1, g_2 h_2) V_{g_1 g_2  h_1 h_2} \nonumber \\
    &= \frac{ \omega(h_1, g_2) }{\omega(g_2, h_1)} V_{g_1} V_{g_2}  V_{h_1}  V_{h_2} = \frac{ \omega(h_1, g_2) }{\omega(g_2, h_1)} \omega(g_1, g_2) \omega(h_1, h_2) \omega(g_1g_2, h_1 h_2) V_{g_1 g_2  h_1 h_2}
\end{align}
Accordingly, $\omega \in \cH^2(G \times H, \U(1))$ is decomposed into three terms, each of which belongs to a certain term in the Kunneth decomposition as the following:
\begin{align} \label{eq:kunH2}
     \nu_0(g_1, g_2) &:= \omega(g_1, g_2),\qquad 
     \nu_1(h_1 | g_2) := \frac{ \omega(h_1, g_2) }{\omega(g_2, h_1)} = i_{g_2} \omega(h_1), \qquad
     \nu_2(h_1, h_2) := \omega(h_1, h_2)
\end{align}
Furthermore, note that the first group cohomology is nothing but the possible homomorphism from group $G$ to $\U(1)$, i.e., abelianization of $G$ since $f(g_1) f(g_2) = f(g_1 g_2) = f(g_2) f(g_1)$ as a $\U(1)$ number. We can show that $\nu_1(h|g) = i_g(h)$  for a fixed $h$ is a proper group homomorphism from $G$ to $\U(1)$, i.e., $i_{g_1}(h) \cdot i_{g_2}(h) = i_{g_1 g_2}(h)$:
\begin{align}
\frac{\omega(g_1,h)}{\omega(h,g_1)}\frac{\omega(g_2,h)}{\omega(h,g_2)} \frac{\omega(h,g_1g_2)}{\omega(g_1g_2,h)} = \frac{\omega(g_2,h)\omega(hg_1,g_2)}{\omega(h,g_2)\omega(g_1,g_2h)} = 1,
\end{align}
and $\nu_1(h|g)$ is a 1-cochain for $H$ on the module $\cH^1(G,\U(1))$, i.e., $[\nu_1(h|g)] \in \cH^1(H,\cH^1(G,\U(1)))$. Therefore, each equivalence class of $[\nu_k]$ is a faithful representation of $\calH^{2-k}(G, \calH^{k}(H, \U(1)))$ for $k = 0,1,2$.

Note that this decomposition can be performed successively. If a given group is $G \times H \times K$, one can first decompose based on \eqnref{eq:kunH2} between $G$ and $H\times K$, and then further apply this decomposition for $\nu_1(h_1 k_1 | g_2)$ and $\nu_2(h_1 k_1, h_2 k_2)$ as follows:
\begin{align}
    \nu_1(h_1 k_1 | g_2) &= \frac{\omega(h_1 k_1, g_2)}{\omega(g_2,h_1 k_1)} = \frac{\omega(h_1, g_2)}{\omega(g_2,h_1)} \frac{\omega(k_1, g_2)}{\omega(g_2,k_1)} = \nu_1(h_1 | g_2) \cdot \nu_1(k_1 | g_2) \nonumber \\
    \nu_2(h_1 k_1, h_2 k_2) &= \omega(h_1 k_1, h_2 k_2) = \nu_0(h_1, h_2) \cdot \nu_1(h_1 | k_2) \cdot \nu_2(k_1, k_2)
\end{align}
where $\nu_1(x|y)= i_y \omega (x)$. In the first line, we used the 2-cocycle consistency equation. Since
\begin{align}
    \cH^2(G\times H \times K,\U(1)) &= \cH^2(G,\U(1)) \oplus \cH^2(K,\U(1)) \oplus \cH^2(H,\U(1)) \oplus \cH^1(G,\cH^1(K,\U(1))) \nonumber \\
    & \quad \oplus \cH^1(G,\cH^1(H,\U(1))) \oplus \cH^1(H,\cH^1(K,\U(1))),
\end{align}
six $\nu_k$s gives the proper decomposition of $\omega$ into the representative elements of the six Kunneth components. 

\subsection{n=3}

Using 3-cocycle condition $\omega(a,b,c) \omega(a,bc,d) \omega(b,c,d) = \omega(ab,c,d) \omega(a,b,cd)$, we can show that
\begin{align} \label{eq:3cocycle_decomposition}
    \omega(g_1 h_1, g_2 h_2, g_3 h_3) &= \prod_{k=0}^3 \nu_k(h_1, ..., h_k | g_{k+1}, ..., g_{3})  \nonumber \\
    \nu_0(g_1, g_2, g_3) &:= \omega(g_1, g_2,g_3)   \in \cH^3(G,\U(1)) \nonumber \\
     \nu_1(h_1 | g_2, g_3) &:= \frac{ \omega(h_1, g_2, g_3) \omega(g_2,g_3,h_1) }{\omega(g_2, h_1,g_3)}  = i_{h_1} \omega(g_2, g_3)   \in \cH^2(G,\cH^1(H,\U(1))) \nonumber \\
     \nu_2(h_1, h_2 | g_3 ) &:= \frac{ \omega(h_1, h_2, g_3) \omega(g_3,h_1,h_2) }{\omega(h_1, g_3, h_2)}  = i_{g_3} \omega(h_1, h_2)    \in \cH^1(G,\cH^2(H,\U(1))) \simeq \cH^2(H,\cH^1(G,\U(1)))  \nonumber \\
     \nu_3(h_1, h_2, h_3) &:= \omega(h_1, h_2, h_3)    \in \cH^0(G,\cH^3(H,\U(1))) \simeq \cH^3(H,\U(1)).
\end{align} 
up to 3-coboundary terms. To show this, let us define several coboundary terms as follows
\begin{align}
    \varphi_1(g_1 h_1, g_2 h_2) &:= \omega(g_1, h_1, g_2 h_2),  \quad 
    d \varphi_1(g_1 h_1, g_2 h_2, g_3 h_3)
    = \frac{ \omega(g_1 g_2, h_1 h_2, g_3 h_3) \omega(g_1, h_1, g_2 h_2)}{\omega(g_1, h_1, g_2 g_3 h_2 h_3) \omega(g_2, h_2, g_3 h_3)}
    \nonumber \\
    \varphi_2(g_1 h_1, g_2 h_2) &:= \omega(g_1, g_2, h_1 h_2)^{-1},  \quad 
    d \varphi_2(g_1 h_1, g_2 h_2, g_3 h_3)
    = \frac{ \omega(g_1 g_2, g_3, h_1 h_2 h_3) \omega(g_1, g_2, h_1 h_2)}{\omega(g_1, g_2 g_3, h_1 h_2 h_3) \omega(g_2, g_3, h_2 h_3)} \nonumber \\
    \varphi_3(g_1 h_1, g_2 h_2) &:= \omega(h_1, g_2, h_2)^{-1},  \quad 
    d \varphi_3(g_1 h_1, g_2 h_2, g_3 h_3)
    = \frac{ \omega(h_2, g_3, h_3) \omega(h_1, g_2 g_3, h_2 h_3)}{
        \omega(h_1, g_2, h_3) \omega(h_1 h_2, g_3, h_3)
    } \nonumber \\
    \varphi_4(g_1 h_1, g_2 h_2) &:= \omega(g_2, h_1, h_2),  \quad 
    d \varphi_4(g_1 h_1, g_2 h_2, g_3 h_3)
    = \frac{ \omega(g_2, h_1, h_2) \omega(g_3, h_1 h_2, h_3) }{ \omega(g_3, h_2, h_3) \omega(g_2 g_3, h_1, h_2 h_3) }.
\end{align}
Then, we can use the 3-cocycle condition successively to explicitly show that
\begin{align}
    \omega(g_1h_1, g_2 h_2, g_3 h_3) &= \qty[ \prod_{k=0}^3 \nu_k(h_1,...,h_k | g_{k+1},..., g_3) ] \cdot d \varphi_1 d \varphi_2 d \varphi_3 d \varphi_4
\end{align}
Furthermore, note that for fixed $(g_2, g_3)$:
\begin{align}
    \frac{i_{h_1} \omega(g_2, g_3) \cdot i_{h_2} \omega(g_2, g_3)}{i_{h_1 h_2} \omega(g_2, g_3)} = 1  
\end{align}
and $\nu_1(h_1 | g_2, g_3)\,{=}\,i_{h_1} \omega(g_2, g_3)$ is a 2-cochain for $G$ (since $d i_g \omega = i_g d \omega = 0$) on the module $\cH^1(H,\U(1))$, i.e., $[\nu_1(h_1 | g_2, g_3)] \in \cH^2(G,\cH^1(H,\U(1)))$. Similarly, $\nu_2(h_1, h_2| g_3)\,{=}\,i_{g_3} \omega(h_1, h_2)$ is a 2-cochain for $H$ on the module $\cH^1(G,\U(1))$. Therefore, each equivalence class of $[\nu_k]$ is a faithful representation of $\calH^{3-k}(G, \calH^{k}(H, \U(1)))$ for $k = 0,1,2,3$. Again, this decomposition can be applied iteratively for a direct product of more than two groups.

\end{widetext}

\section{Absence of mixed anomaly between $\bbZ_2^\bbH$ and $\bbG_p$}\label{app:no_H_mixed_anomaly}

In this section, we show the absence of mixed anomaly between the Hermiticity symmetry $\bbZ_2^\bbH$ and the physical symmetry $\bbG_p$ in strong injective MPDO in 1d and strong semi-injective TNDO in 2d.

\subsection{One dimension}\label{app:no_H_mixed_anomaly_1d}
Section~\ref{sec:no_hermiticity_SPT} shows the absence of non-trivial SPT phase protected by $\bbZ_2^\bbH$, i.e. $\omega(\bbH, \bbH) = 1$.
Here, we show that, by redefining the projective representation, one can trivialize the 2-cocycle involving the physical symmetry $\sfg \in \bbG_p$ and the Hermitian conjugate $\bbH$, i.e. $\omega(\sfg, \bbH) = \omega(\bbH, \sfg) = 1$.
This indicates the absence of mixed anomaly between $\bbZ_2^\bbH$ and $\bbG_p$.

To begin, the projective representation satisfies
\begin{align}
    V_\sfg V_\bbH = \omega(\sfg, \bbH) V_{\sfg\bbH} = \frac{\omega(\sfg, \bbH)}{\omega(\bbH, \bar{\sfg})} V_\bbH V^*_{\bar{\sfg}},
\end{align}
where $\bar{\sfg} = \bbH \sfg \bbH$.
We consider a gauge transformation (i.e. redefinition) of the projective representation,
\begin{align}
    V_{\sfg} \mapsto  \varphi(\sfg) V_\sfg, \quad V_{\sfg\bbH} \mapsto \varphi(\sfg\bbH) V_{\sfg\bbH}.
\end{align}
The gauge transformation adds 2-coboundary to the 2-cocycle
\begin{align}
    \omega(\sfg, \bbH) \mapsto \omega(\sfg, \bbH) \frac{\varphi(\sfg\bbH)}{\varphi(\sfg)\varphi(\bbH)}.
\end{align}
By picking a particular gauge
\begin{equation}
\begin{gathered}
    \varphi(\sfg) = \sqrt{\frac{\omega(\sfg, \bbH)}{\omega(\bbH, \bar{\sfg})}}, \quad \varphi(\bar{\sfg}) = \sqrt{\frac{\omega(\bar{\sfg}, \bbH)}{\omega(\bbH, \sfg)}}, \\
    \varphi(\sfg\bbH) = \frac{1}{\sqrt{\omega(\sfg, \bbH)\omega(\bbH, \bar{\sfg})}}, \quad \varphi(\bbH) = 1,
\end{gathered}\label{eq:coboundary_hermiticity}
\end{equation}
one obtain a trivial 2-cocycle $\omega(\sfg, \bbH) = \omega(\bbH, \sfg) = 1$.
Here, we use the relation
\begin{equation}
\begin{aligned}
V_\bbH^{-1} V_\sfg V_\bbH (V^*_{\bar{\sfg}})^{-1} &= V_{\bbH}^*V_{\sfg}(V_\bbH^*)^{-1}(V_{\bar{\sfg}}^*)^{-1} \\
\Leftrightarrow\quad \frac{\omega(\sfg, \bbH)}{\omega(\bbH, \bar{\sfg})} &= \frac{\omega(\bar{\sfg}, \bbH)}{\omega(\bbH, \sfg)},
\end{aligned}
\end{equation}
knowing that $V_\bbH^*V_\bbH = \mathds{1}$.
This result indicates the absence of mixed anomaly.

In the analysis of SPT phases protected by the physical symmetry $\bbG_p$, we work with the gauge choice with $\omega(\sfg,\bbH) = \omega(\bbH, \sfg) = 1$.
We note that such choices are not unique.
In this partially fixed gauge, we have (Eq.~\eqref{eq:2-cocycle-hermiticity_constraint})
\begin{align}
    \omega(\sfg_1, \sfg_2) = \omega^*(\bar{\sfg}_1, \bar{\sfg}_2),
\end{align}
which one can show using the 2-cocycle condition in Eq.~\eqref{eq:2-cocycle}.

\subsection{Two dimension}\label{app:no_H_mixed_anomaly_2d}
The reduction operators in 2d TNDO are defined up to a gauge transformation, 
\begin{equation}
P_{g_1,g_2} \mapsto \varphi(g_1,g_2) P_{g_1,g_2}, \text{ for } g_1,g_2\in \bbG_p\rtimes \bbZ_2^\bbH.
\end{equation}
where $\varphi(g_1,g_2,g_3)$ is a $\U(1)$ phase.
Accordingly, the 3-cocycle transforms as
\begin{align}
\omega(g_1,g_2,g_3) &\mapsto \omega(g_1,g_2,g_3) d\varphi(g_1,g_2,g_3) \nonumber \\
&= \omega(g_1,g_2,g_3) \frac{\varphi(g_1,g_2)\varphi(g_1g_2,g_3)}{\varphi(g_1,g_2g_3){}^{g_1}\!\varphi(g_2,g_3)},
\end{align}
where $d\varphi$ is the derivative of the 2-cochain $\varphi$ defined in Eq.~\eqref{eq:derivative_cohomology}.
Here, we show the existence of a gauge choice such that the 3-cocycle is trivial if one element is the Hermitian conjugation.
We further derive the constraint on 3-cocycle in this gauge due to the Hermiticity symmetry.

First, the anti-unitary $\bbZ_2^\bbH$ cannot protect non-trivial SPT phases in 2D~\cite{chen2013symmetry}. 
Specifically, $\omega(\bbH, \bbH, \bbH)$ transforms as
\begin{align}
\omega(\bbH, \bbH, \bbH) &\mapsto \omega(\bbH, \bbH, \bbH) d\varphi(\bbH, \bbH, \bbH) \nonumber \\
&= \omega(\bbH, \bbH, \bbH) \varphi(\bbH, \bbH)^2.
\end{align}
By picking a gauge $\varphi_1(\bbH,\bbH) = 1/\sqrt{\omega(\bbH,\bbH,\bbH)}$, one obtains $\omega(\bbH, \bbH, \bbH) = 1$, indicating the absence of SPT phases protected by the anti-unitary $\bbZ_2^\bbH$.

Next, we work with the gauge choice $\omega(\bbH,\bbH,\bbH)=1$ and discuss the possibility to set $\omega(\sfg,\bbH,\bbH)=\omega(\bbH,\sfg,\bbH)=\omega(\bbH,\bbH,\sfg)=1$ for $\sfg \in \bbG_p$ by further fixing the gauge.
In the case $\sfg \neq \bar{\sfg}$, we can choose the gauge
\begin{equation}\begin{aligned}
\varphi_2(\sfg,\bbH) &= \left(\frac{\omega(\bar{\sfg},\bbH,\bbH)\omega(\bbH,\bbH,\bar{\sfg})}{\omega(\bbH,\sfg,\bbH)}\right)^{1/4}, \\
\varphi_2(\bbH,\sfg) &= \left(\frac{\omega(\bar{\sfg},\bbH,\bbH)\omega(\bbH,\bbH,\bar{\sfg})}{\omega(\bbH,\sfg,\bbH)}\right)^{1/4}, \\
\varphi_2(\sfg\bbH,\bbH) &= \frac{1}{\omega(\sfg,\bbH,\bbH)}\left(\frac{\omega(\bbH,\sfg,\bbH)}{\omega(\bar{\sfg},\bbH,\bbH)\omega(\bbH,\bbH,\bar{\sfg})}\right)^{1/4}, \\
\varphi_2(\bbH,\bbH\sfg) &= \omega(\bbH,\bbH,\sfg)\left(\frac{\omega(\bar{\sfg},\bbH,\bbH)\omega(\bbH,\bbH,\bar{\sfg})}{\omega(\bbH,\sfg,\bbH)}\right)^{1/4},
\end{aligned}\end{equation}
which sets $\omega(\sfg,\bbH,\bbH)=\omega(\bbH,\sfg,\bbH)=\omega(\bbH,\bbH,\sfg)=1$ after the transformation $\omega \mapsto \omega \, d\varphi_2$.
Here, we use the relation
\begin{align}
\frac{\omega(\bar{\sfg},\bbH,\bbH)\omega(\bbH,\bbH,\bar{\sfg})}{\omega(\bbH,\sfg,\bbH)} = \frac{\omega(\bbH,\bar{\sfg},\bbH)}{\omega(\sfg,\bbH,\bbH)\omega(\bbH,\bbH,\sfg)},
\end{align}
which is derived directly from the $3$-cocycle condition in Eq.~\eqref{eq:3-cocycle} and $\omega(\bbH, \bbH, \bbH) = 1$.

In the case $\sfg = \bar{\sfg}$, the cocycle can remain non-trivial, representing the mixed anomaly between $\bbD \cong G \times K \in \bbG_p$ and the Hermiticity symmetry, characterized by $\calH^1(\bbD, \calH^2(\bbZ_2^\bbH, U_T(1)))$.
Here, $\bbD$ involves the weak symmetry $K$ and the diagonal symmetry in $G_l\times G_r$, in which each element transforms the ket and bra copy in the same way.
Such an SPT state is the condensate of domain walls of symmetry $\bbD$ decorated by the 1d SPT states protected by $\bbZ_2^\bbH$ symmetry.

Although these phases are allowed based on group cohomology, they cannot exist in strong semi-injective TNDO.
The anomaly in these states is in the nontrivial cocycles associated with the reduction operators for $V_\sfg$ with $\sfg \in \bbD \times \bbZ_2^\bbH$ as in Eq.~\eqref{eq:3rdCohom}.
Crucially, the transfer matrix $\bbT$ (Eq.~\eqref{eq:2d_transfer_matrix}) in strong semi-injective TNDO is invariant under the symmetry $\bbD$ and transforms to its complex conjugate $\bbT^*$ under $\bbZ_2^\bbH$.
Hence, the eigenstate $|\phi[C])$ associated with the unique largest eigenvalue of $\bbT$ is invariant under $V_\sfg$ for $\sfg \in \bbD$ and transforms to $|\phi[C^*])$ under $V_\bbH$.
Similar to the discussion leading to Eq.~\eqref{eq:2d_weak_symmetry_coboundary}, this indicates that the 3-cocycle is coboundary, i.e.
\begin{align}
\omega(\sfg_1, \sfg_2, \sfg_3) = \frac{\varphi(\sfg_1,\sfg_2\sfg_3){}^{\sfg_1}\!\varphi(\sfg_2,\sfg_3)}{\varphi(\sfg_1\sfg_2,\sfg_3)\varphi(\sfg_1,\sfg_2)}
\end{align}
for elements $\sfg_{1,2,3} \in \bbD \times \bbZ_2^\bbH$.
Thus, one can always pick a gauge such that  $\omega(\sfg,\bbH,\bbH)=\omega(\bbH,\sfg,\bbH)=\omega(\bbH,\bbH,\sfg)=1$ in strong semi-injective TNDOs.

Last, we show the existence of a gauge choice to trivialize the 3-cocycle involving the Hermitian conjugate.
We work with the partially fixed gauge such that $\omega(\bbH,\bbH,\bbH) = \omega(\sfg,\bbH,\bbH) = \omega(\bbH,\sfg,\bbH) = \omega(\bbH,\bbH,\sfg) = 1$, which requires the gauge transformation to satisfy
\begin{equation}
\begin{gathered}
\varphi_3(\bbH,\bbH)^2 = 1, \quad
\frac{\varphi_3(\sfg\bbH,\bbH)\varphi_3(\sfg,\bbH)}{\varphi_3(\bbH,\bbH)} = 1, \\
\frac{\varphi_3(\bbH\sfg,\bbH)\varphi_3(\bbH,\sfg)\varphi_3(\sfg,\bbH)}{\varphi_3(\bbH,\sfg\bbH)} = 1, \quad
\frac{\varphi_3(\bbH,\bbH)\varphi_3(\bbH,\sfg)}{\varphi_3(\bbH,\bbH\sfg)} = 1.
\end{gathered}
\end{equation}
One can obtain $\omega(\sfg_1,\sfg_2,\bbH) = \omega(\sfg_1,\bbH,\sfg_2) = \omega(\bbH,\sfg_1,\sfg_2) = 1$ for $\sfg_1,\sfg_2\in \bbG_p$ by choosing the following gauge
\begin{equation}
\begin{aligned}
\varphi_3(\bbH, \bbH) &=\varphi_3(\bbH,\sfg) = \varphi_3(\sfg,\bbH) = 1,\\
\varphi_3(\sfg_1,\sfg_2) &= \left(\frac{\omega(\bar{\sfg}_1,\bbH, \sfg_2)}{\omega(\bar{\sfg}_1,\bar{\sfg}_2,\bbH)\omega(\bbH,\sfg_1,\sfg_2)}\right)^{1/2}, \\
\varphi_3(\bbH\sfg_1, \sfg_2) &= \frac{1}{\omega(\bbH, \sfg_1,\sfg_2)}\left(\frac{\omega(\bar{\sfg}_1,\bar{\sfg}_2,\bbH)\omega(\bbH,\sfg_1,\sfg_2)}{\omega(\bar{\sfg}_1,\bbH, \sfg_2)}\right)^{1/2}, \\
\varphi_3(\bar{\sfg}_1,\bar{\sfg}_2\bbH) &= \omega(\bar{\sfg}_1,\bar{\sfg}_2,\bbH)\left(\frac{\omega(\bar{\sfg}_1,\bbH, \sfg_2)}{\omega(\bar{\sfg}_1,\bar{\sfg}_2,\bbH)\omega(\bbH,\sfg_1,\sfg_2)}\right)^{1/2}.
\end{aligned}
\end{equation}
Here, we use the relation
\begin{align}
\frac{\omega(\bar{\sfg}_1,\bbH, \sfg_2)}{\omega(\bar{\sfg}_1,\bar{\sfg}_2,\bbH)\omega(\bbH,\sfg_1,\sfg_2)}=\frac{\omega(\sfg_1,\bbH, \bar{\sfg}_2)}{\omega(\sfg_1,\sfg_2,\bbH)\omega(\bbH,\bar{\sfg}_1,\bar{\sfg}_2)},
\end{align}
which can be derived from the 3-cocycle condition in Eq.~\eqref{eq:3-cocycle}.
Thus, we show the existence of a gauge choice to trivialize the cocycle involving Hermiticity symmetry, indicating the absence of mixed anomaly between $\bbZ_2^\bbH$ and $\bbG_p$.

When classifying the SPT phases protected by the physical symmetry, we work with the gauge choice such that the 3-cocycle involving Hermitian conjugate is trivial. 
The Hermiticity symmetry further imposes a constraint on the 3-cocycle (Eq.~\eqref{eq:3-cocycle-hermiticity_constraint})
\begin{align}
    \omega(\sfg_1,\sfg_2,\sfg_3) = \omega^*(\bar{\sfg}_1, \bar{\sfg}_2, \bar{\sfg}_3),
\end{align}
which is again derived from the 3-cocycle condition in Eq.~\eqref{eq:3-cocycle}.

\end{document}